\theoremstyle{plain}
\newtheorem{thm}{Theorem}[section]
\newtheorem{lem}[thm]{Lemma}
\newtheorem{cor}[thm]{Corollary}
\newtheorem{fact}{Fact}
\theoremstyle{definition}
\newtheorem{prtl}{Protocol}
\newtheorem{conv}{Convention}
\newtheorem{toyprtl}{Toy Protocol}
\newtheorem{setup}{Set-up}
\newtheorem{defn}{Definition}[section]
\newtheorem{exmp}{Example}[section]
\newtheorem{nota}{Notation}[section]
\theoremstyle{remark}
\DeclareMathOperator{\tr}{tr}
\newcommand{\cC}{{\mathcal{C}}}
\newcommand{\cE}{{\mathcal{E}}}
\newcommand{\cH}{{\mathcal{H}}}
\newcommand{\btheta}{{\boldsymbol{\theta}}}
\newcommand{\bS}{{\boldsymbol{S}}}
\newcommand{\baux}{{\boldsymbol{aux}}}
\newcommand{\bD}{{\boldsymbol{D}}}
\newcommand{\bx}{{\boldsymbol{x}}}
\newcommand{\bbt}{{\boldsymbol{t}}}
\newcommand{\bR}{{\boldsymbol{R}}}
\newcommand{\bbbr}{{\boldsymbol{r}}}
\newcommand{\boldy}{{\boldsymbol{y}}}
\newcommand{\bd}{{\boldsymbol{d}}}
\newcommand{\bflag}{{\boldsymbol{flag}}}
\newcommand{\bbC}{{\boldsymbol{C}}}
\DeclareMathOperator{\bN}{\mathbb{N}}
\newcommand{\fM}{{\sf M}}
\newcommand{\fI}{{\sf I}}
\newcommand{\fDisgard}{{\sf Disgard}}
\newcommand{\fHash}{{\mathsf{Hash}}}
\newcommand{\fSAoK}{{\mathsf{SAoK}}}
\newcommand{\fsuccZKAoK}{{\mathsf{succZKAoK}}}
\newcommand{\fSFVP}{{\mathsf{SFVP}}}
\newcommand{\fSFVPTwo}{{\mathsf{SFVP2}}}
\newcommand{\fCommit}{{\mathsf{Commit}}}
\newcommand{\fGarbleC}{{\mathsf{GarbleC}}}
\newcommand{\fGarbleI}{{\mathsf{GarbleI}}}
\newcommand{\fDc}{{\mathsf{Dec}}}
\newcommand{\fEv}{{\mathsf{Eval}}}
\newcommand{\fKg}{{\mathsf{KeyGen}}}
\newcommand{\fneg}{{\mathsf{negl}}}
\newcommand{\fpoly}{{\mathsf{poly}}}
\newcommand{\fAdv}{{\mathsf{Adv}}}
\newcommand{\fSim}{{\mathsf{Sim}}}
\newcommand{\fExt}{{\mathsf{Ext}}}
\newcommand{\fCHK}{{\mathsf{CHK}}}
\newcommand{\ftrue}{{\mathsf{true}}}
\newcommand{\ffalse}{{\mathsf{false}}}
\newcommand{\fpass}{{\mathsf{pass}}}
\newcommand{\ffail}{{\mathsf{fail}}}
\newcommand{\fCollapse}{{\mathsf{Collapse}}}
\newcommand{\fHadamardTest}{{\mathsf{HadamardTest}}}
\newcommand{\fComputationalTest}{{\mathsf{ComputationalTest}}}
\newcommand{\fFirstTwoStep}{{\mathsf{FirstTwoStep}}}
\newcommand{\fAlignR}{{\mathsf{AlignR}}}
\newcommand{\fSecondStep}{{\mathsf{SecondStep}}}
\newcommand{\fThirdStepComp}{{\mathsf{ThirdStepComp}}}
\newcommand{\tind}{{\text{ind}}}
\newcommand{\sk}{{\text{sk}}}
\newcommand{\pk}{{\text{pk}}}
\newcommand{\fHybrid}{{\mathsf{Hybrid}}}
\newcommand{\tSFSIdeal}{{\text{SFSIdeal}}}
\newcommand{\tSFVPIdeal}{{\text{SFVPIdeal}}}
\newcommand{\tSFVPTwoIdealS}{{\text{SFVP2IdealS}}}
\newcommand{\tSFVPTwoIdealC}{{\text{SFVP2IdealC}}}
\newcommand{\fSFSTest}{{\mathsf{SFSTest}}}
\newcommand{\fsuccTest}{{\mathsf{succTest}}}
\newcommand{\fSFSComp}{{\mathsf{SFSComp}}}
\newcommand{\fSFS}{{\mathsf{SFS}}}
\newcommand{\fCompressor}{{\mathsf{Compressor}}}
\newcommand{\fDecompressor}{{\mathsf{Decompressor}}}
\newcommand{\bX}{{\boldsymbol{X}}}
\newcommand{\bbE}{{\boldsymbol{E}}}
\newcommand{\bbs}{{\boldsymbol{s}}}
\newcommand{\bQ}{{\boldsymbol{Q}}}
\newcommand{\bregs}{{\boldsymbol{regs}}}
\newcommand{\breg}{{\boldsymbol{reg}}}
\newcommand{\tD}{{\text{D}}}
\newcommand{\tDist}{{\text{Dist}}}
\newcommand{\tDen}{{\text{Den}}}
\newcommand{\tTD}{{\text{TD}}}
\newcommand{\ttest}{{\text{test}}}
\newcommand{\tcomp}{{\text{comp}}}
\newcommand{\tmode}{{\text{mode}}}
\newcommand{\tvals}{{\text{vals}}}
\newcommand{\tReal}{{\text{Real}}}
\newcommand{\tIdeal}{{\text{Ideal}}}
\newcommand{\tTwoPCIdeal}{{\text{2PCIdeal}}}
\newcommand{\tRSPVIdeal}{{\text{RSPVIdeal}}}
\newcommand{\ttemp}{{\text{temp}}}
\newcommand{\tPRG}{{\text{PRG}}}
\begin{document}
\title{A Quantum Approach For Reducing Communications in Classical Secure Computations with Long Outputs}
	 \author[1]{Jiayu Zhang\footnote{zhangjy@zgclab.edu.cn}}
	 \affil[1]{Zhongguancun Laboratory}
	\maketitle\thispagestyle{empty}
	\begin{abstract}
        How could quantum cryptography help us achieve what are not achievable in classical cryptography? In this work we study the classical cryptographic problem that two parties would like to perform secure computations \emph{with long outputs}. As a basic primitive and example, we first consider the following problem which we call \emph{secure function sampling} with long outputs: suppose $f:\{0,1\}^n\rightarrow \{0,1\}^m$ is a public, efficient classical function, where $m$ is big; Alice would like to sample $x$ from its domain and sends $f(x)$ to Bob; what Bob knows should be no more than $f(x)$ even if it behaves maliciously. Classical cryptography, like FHE and succinct arguments \cite{GentryFHE,FHELWE,Kilian92,HW15}, allows us to achieve this task within communication complexity $O(n+m)$; could we achieve this task with communication complexity independent of $m$?\par
        In this work, we first design a quantum cryptographic protocol that achieves secure function sampling with approximate security, within $O(n)$ communication (omitting the dependency on the security parameter and error tolerance). We also prove the classical impossibility using techniques in \cite{HW15}, which means that our protocol indeed achieves a type of quantum advantage. Building on the secure function sampling protocol, we further construct protocols for general secure two-party computations \cite{YaoGCOrigin,GB01} with approximate security, with communication complexity only depending on the input length and the targeted security. In terms of the assumptions, we construct protocols for these problems assuming only the existence of collapsing hash functions \cite{Unruh16}; what's more, we also construct a classical-channel protocol for these problems additionally assuming the existence of noisy trapdoor claw-free functions \cite{BCMVV,BKVV}.
    \end{abstract}
    \tableofcontents
\section{Introduction}\label{sec:1}
\subsection{Background}\label{sec:1.1}
The rapid development of quantum information science leads to the rapid development of quantum cryptography. One remarkable feature about quantum cryptography is that it could allow us to achieve what are not achievable in classical cryptography. \cite{PR22,Sattath22,Zhand17} For example:
\begin{itemize}\item Quantum key distribution \cite{PR22} achieves information-theoretic secure key exchange, which is not possible classically.
    \item Quantum cryptography allows us to make cryptographic primitives unclonable \cite{Zhand17,Sattath22}.
    \item Other examples include test of quantumness \cite{BCMVV,BKVV,KLVY22,YZ22}, certified deletion \cite{BI20,Unruh15}, position verification \cite{PV10}, etc.
\end{itemize}
Discovering new quantum advantage in cryptography is of significant theoretical interest and may even have practical impacts. In this work we focus on the quantum approach for a class of classical cryptographic tasks, which we introduce below.
\subsubsection{Secure computations with long outputs}\label{sec:1.1.1}
A central topic of cryptography is the secure computation. An example is the Yao's millionaires problem \cite{GB01}: two millionaires want to compare who is richer, but they do not want to reveal how wealthy they are; so they want to use a cryptographic protocol to solve the problem. The Yao's millionaires problem is an example of the general \emph{secure two-party computation} problem, which is as follows. Suppose Alice holds a private input $x_A$ and Bob holds a private input $x_B$, and Alice would like to learn $f_A(x_A,x_B)$ and Bob would like to learn $f_B(x_A,x_B)$. Here $f_A,f_B$ are efficient functions (which are described by, for example, Turing machines or polynomial-size circuits). We would like to design cryptographic protocols that at least satisfy three requirements, the correctness (or called completeness), security (or called soundness) and efficiency, which are roughly as follows:
\begin{itemize}
    \item (Correctness) When both parties follow the protocol (which is called the honest behavior or the honest setting), Alice gets $f_A(x_A,x_B)$ and Bob gets $f_B(x_A,x_B)$.
    \item (Security) When some party deviate from the protocol (which is called the malicious setting), the overall behavior of the protocol is as if the protocol is executed by some trusted third party, or called ideal functionality. An adversary is either caught cheating or only learns what it should learn from the trusted third party.\par
    We note that this discussion is informal and formalizing the security requires some works. And there are also other settings and definitions for the security; here we focus on the security in the malicious setting.
    \item (Efficiency) Both parties should run in polynomial time.
\end{itemize}
The study of secure two-party and multi-party computation has become a central topic in cryptography; people have developed many important theories and techniques for its different settings, including garbled circuits, fully homomorphic encryption (FHE), succinct arguments \cite{YaoGCOrigin,GentryFHE,FHELWE,Kilian92}, etc.\par
One important factor in protocol design is the \emph{communication complexity}, that is, the total size of communications. For simplicity of the introduction let's consider the setting where Alice and Bob want to perform secure two-party computation on a public, efficient classical function $f:\{0,1\}^n\times \{0,1\}^n\rightarrow \{0,1\}^m$. In classical cryptography, one typical solution for reducing communications is based on FHE and succinct arguments, which achieves communication complexity $O(n+m)$ (omitting the dependency on the security parameter); we refer to \cite{AJW12,DFH12,HW15} for its constructions. 
What's remarkable for these protocols is that the communication complexity is independent of the time complexity of evaluating $f$, which shows significant advantage when the time complexity of evaluating $f$ is much bigger than $n+m$.\par
However, there is still an undesirable output-length dependency in the classical solutions. This means that when the function to be evaluated has long outputs (that is, when $m$ is big), the communication complexity will still be big.  Indeed, as proved in \cite{HW15}, this output-length dependency is unavoidable if we want the security in the malicious setting. 
We note that this output-length dependency is avoided by the following trivially insecure protocol: in the example above, both parties simply reveal their inputs $x_A$ and $x_B$ and compute $f(x_A,x_B)$ on their own. This roughly means that, in the world of classical cryptography, we have to pay the price of the output-length dependency to get good security in general secure two-party computations.\par
However, quantum cryptography gives us the possibility for overcoming the limitations of classical cryptography. In this background, we ask:
\begin{center}
    \emph{Could quantum cryptography help to reduce communications in secure computations with long outputs?}
\end{center}
\subsection{Our Contributions}
In this work we achieve a type of quantum advantage in the problem above. In summary, we design quantum cryptographic protocols for classical secure computation problems with approximate security, with communication complexity independent of both the function evaluation time and the output length. We first focus on a special problem called secure function sampling, and then study the general secure two-party computations problem. Below we elaborate our results.
\subsubsection{A quantum advantage in reducing communications in secure function sampling}\label{sec:1.2.1}
We first formalize and study a problem called \emph{secure function sampling} (SFS). SFS is a special case of the general secure two-party computation problem (2PC).
\paragraph{Definitions of secure function sampling}
The SFS problem is informally defined as follows. We elaborate some missing details in Section \ref{sec:2.2} and 
the formal definition is given in Section \ref{sec:4.1}.
\begin{defn}[Secure function sampling (SFS), informal]Consider a protocol between two parties, the client and the server. Suppose $f:\{0,1\}^n\rightarrow  \{0,1\}^m$ is a public, efficient classical function (which is described by a polynomial time classical Turing machine or polynomial-size classical circuits). The protocol should satisfy the following requirements:
    \begin{itemize}
        \item (Correctness) When both parties are honest, in the end of the protocol the client gets a random $x\in \{0,1\}^n$ and the server gets the corresponding $f(x)$.
        \item (Security) Even if the server is malicious, what it could get from the protocol is no more than one of the following two cases: \begin{itemize}\item either the server only gets $f(x)$ as in the honest setting, \item or the server is caught cheating by the client and gets nothing.\end{itemize}
        In this work we consider an approximate variant of the security, where the error tolerance is denoted by $\epsilon$.
        \item (Efficiency) Both parties should run in polynomial time.
    \end{itemize}
\end{defn}
And our goal is to design a protocol for the SFS problem above so that the communication complexity only depends on the input length $n$ and the targeted security.\par
The setting of SFS is restricted compared to the general 2PC problem; one remarkable restriction is that the $x\in \{0,1\}^n$ is sampled during the protocol as the client-side outputs instead of controlled by the client as client-side inputs. Note that the fact that the primitive is restricted makes the classical impossibility result stronger. What's more, later we will construct protocols for 2PC based on our protocol for SFS.
\paragraph{Classical impossibility} By generalizing the results and techniques in \cite{HW15}, we show that the output length dependency is unavoidable for classical SFS protocol, even if we allow the security to be approximate:
\begin{thm}[Informal]
Assuming the existence of one-way functions, there exists a family of functions $f$ 
such that there is no classical protocol that achieves SFS for $f$ with security error tolerance $\epsilon<1-O(1)$ within communication complexity $\fpoly(n)$.
\end{thm}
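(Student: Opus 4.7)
My plan is to adapt the incompressibility-based communication lower bound of \cite{HW15}, originally proved for secure function evaluation with two-sided inputs, to the one-sided sampling setting of SFS. The chief difficulty in the adaptation is that SFS has no client-side input, only a sampled random output $x$ on the client side, so the HW15 technique of amplifying via the malicious party's many possible inputs does not apply off-the-shelf; I will have to substitute it with an attack that exploits the freedom a short transcript leaves in how $x$ can be reconstructed.

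\emph{Step 1 (the hard function family).} I would take $f$ to be built from a length-stretching pseudorandom generator $G:\{0,1\}^n\to\{0,1\}^{m(n)}$, whose existence follows from one-way functions. The output length $m(n)$ is a polynomial chosen larger, by a sufficient factor, than the polynomial communication bound that we are trying to rule out. PRG security then provides computational incompressibility of $f(x)$ from short hints, substituting for the information-theoretic incompressibility used in the SFE lower bound.

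\emph{Step 2 (exploit the simulator).} Approximate SFS security with $\epsilon<1-O(1)$ supplies, for every efficient server strategy, a simulator $\fSim$ that makes only a single query to the ideal functionality (which returns $f(x)$ for a fresh random $x$) and whose output view is $(1-\Omega(1))$-close to the server's real view. Crucially, the simulated transcript has the same length $c=\fpoly(n)$ as the real one, and the simulator's query is the only channel by which information about $f(x)$ can enter the simulated view.

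\emph{Step 3 (derive the contradiction).} Applying the simulator to the honest server and composing with the honest server's internal decoder produces a randomized pipeline $y \mapsto v \mapsto y$ in which the link $v$ from the client side carries only $c=\fpoly(n)$ bits of information about $y$, the rest being the simulator's independent randomness. If $m(n)$ is chosen sufficiently larger than $c$, this pipeline can be turned into a PRG distinguisher: by a hybrid replacing $f(x)$ with a uniformly random $m(n)$-bit string and a pigeonhole-style counting, a truly random $y$ cannot be recovered with constant probability from a $c$-bit hint plus independent randomness, so the decoder must fail with noticeable probability in the random-$y$ hybrid, contradicting the correctness guarantee in the original world.

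I expect the main obstacle to be Step 3: cleanly isolating the $c$ bits of "Alice-side information" inside the simulator's output from the server's own randomness, and upgrading that syntactic decomposition to a computational incompressibility statement, since $f(x)$ has only $n$ bits of Shannon entropy and therefore no information-theoretic lower bound can suffice. The approximate-security slackness $\epsilon<1-O(1)$ must be budgeted through the hybrids, and I anticipate that handling protocols whose honest server uses nontrivial randomness in its decoder will require a short amplification/repetition wrapper, as in \cite{HW15}, so that the final distinguishing advantage is a genuine constant rather than being washed out by accumulated error terms.
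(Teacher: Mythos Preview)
Your proposal is correct and follows essentially the same approach as the paper: take $f$ to be a PRG, use the simulator applied to the honest-server adversary (which additionally records the transcript) as a compressor, use the classical determinism of the server given its coins as a decompressor, switch $\tPRG(x)$ to uniform by PRG security, and contradict information-theoretic incompressibility.

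The one place you overcomplicate things is the anticipated obstacle about server randomness. No amplification or repetition is needed: because the protocol is classical, you can fix the honest server's random tape $r$ and treat $\fAdv(r)$ as a deterministic adversary; the security definition then hands you a simulator that may depend on $r$, so both your compressor and decompressor share $r$ as common randomness. Since shared random coins do not help compress a uniformly random $m$-bit string below $m$ bits (the incompressibility bound holds with overwhelming probability over $r$), the contradiction goes through in a single shot once $m$ exceeds the $\fpoly(n,\kappa)$ transcript length by a constant number of bits. This is exactly how the paper executes Step~3, and it sidesteps the ``isolating Alice-side information'' concern you flag.
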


\paragraph{Quantum protocol} Then we construct a quantum cryptographic protocol that solves the SFS problem above.
\begin{thm}[Informal]\label{thm:1.2}
Assuming the existence of collapsing hash functions, there exists a quantum cryptographic protocol (Protocol \ref{prtl:6}) for the SFS problem that is $\epsilon$-secure and has communication complexity (including quantum and classical communications) $\fpoly(n,1/\epsilon)$.
\end{thm}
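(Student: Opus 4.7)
The plan is to construct the protocol by composing an RSPV subprotocol with a local quantum evaluation phase, so that the only communication-heavy step (state preparation) has cost depending on $n$ and $1/\epsilon$ but not on $m$, while the $m$-bit output $f(x)$ is generated locally on the server.

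First, I would use an RSPV subprotocol (e.g.\ a suitable variant of $\fRSPVmBoD$ from the earlier sections) to prepare on the server's side a quantum state that encodes a uniformly random $x\in\{0,1\}^n$ known to the client. The RSPV construction is built from collapsing hash functions and gives the following guarantee: a malicious server either ends the preparation with a state $\epsilon$-close to the claimed one, or is caught with overwhelming probability; and its communication complexity is $\fpoly(n,1/\epsilon)$, independent of $m$. In the evaluation phase, the server applies the unitary $U_f:|x\rangle|0^m\rangle\mapsto|x\rangle|f(x)\rangle$ locally and measures the output register to obtain $y=f(x)$, which it outputs. The client outputs the $x$ it recorded during the RSPV. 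No $m$-dependent communication occurs.

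Correctness is immediate from the honest RSPV guarantee and the unitary evaluation of $U_f$. For communication, the RSPV stage contributes $\fpoly(n,1/\epsilon)$ and the local evaluation contributes none. For security I would set up a hybrid argument: (i) replace the real RSPV output by the ideal state via RSPV soundness; (ii) argue that once the server's state is the ideal one, the server's view is efficiently simulable from $f(x)$ alone; (iii) conclude by building a simulator that queries the $\fSFS$ ideal functionality, receives $y$, and uses the extractability of the collapsing-hash commitments inside the RSPV to patch the transcript to be indistinguishable from the real one. The collapsing property is used twice: inside RSPV to keep commitments binding under superposition, and in step (iii) to argue that any coherent deviation by the server is indistinguishable from a classical strategy that only learns $y$.

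The hardest step is engineering the RSPV-prepared state so that the server cannot trivially recover $x$ by measuring its input register (which by itself would leak strictly more than $f(x)$ whenever $f$ is lossy). I expect the fix to combine two ingredients already present in the RSPV toolkit: (a) a test-round/computation-round cut-and-choose structure, mirroring the $\fSFSTest$ and $\fSFSComp$ split, so that any measurement strategy that would leak $x$ in a computation round is detected with noticeable probability in an indistinguishable test round; and (b) a BB84-style basis encoding whose basis string is only revealed after the server has committed (via a collapsing hash) to its computation, so that a premature standard-basis measurement produces uniform garbage on the Hadamard coordinates and is therefore useless for reconstructing $x$ or anything beyond $f(x)$. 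Turning this intuition into a simulation-based proof — specifically, building the simulator that extracts $f(x)$ from the adversary's committed transcript without ever needing $x$ — is the technical heart of the argument, and it is here that the $\epsilon$ dependence of the communication will enter, since the simulator's failure probability must be driven below $\epsilon$ by boosting the number of RSPV rounds.
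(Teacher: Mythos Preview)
Your proposal has a genuine gap at exactly the point you flag as ``the hardest step,'' and the fixes you sketch do not close it. If RSPV hands the server a state from which it can locally compute $f(x)$ on a \emph{single} hidden $x$, then that same state generically lets the server recover $x$ itself: applying $U_f$ and measuring only the output register still leaves the input register intact, and nothing prevents the server from also measuring it. A BB84 basis-hiding layer does not help, because before the basis is revealed the server cannot evaluate $f$, and after it is revealed the server can read $x$; a collapsing-hash commitment made \emph{before} the reveal cannot contain $f(x)$, and one made \emph{after} is too late. The cut-and-choose you describe catches deviations that are detectable in a test round, but ``measure the input register in addition to the output register'' is undetectable if the honest behavior already produces the correct $f(x)$.

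The paper's construction avoids this obstruction with an idea your proposal does not contain: the client sends a \emph{two}-branch superposition $\tfrac{1}{\sqrt 2}(\ket{0}\ket{x_0}\ket{r_0^{(in)}}\ket{r_0^{(out)}}+\ket{1}\ket{x_1}\ket{r_1^{(in)}}\ket{r_1^{(out)}})$, the server evaluates $f$ coherently on both branches, and then the server is required to \emph{Hadamard-measure the input register} and report the outcome $d^{(in)},d^{(inpad)}$. This destroys the input while leaving $f(x_b)$ entangled with an output-padding tag that lets the client learn which branch survived. Soundness comes from a Hadamard-basis test on the remaining registers: the parity check $d^{(sub)}+d^{(outpad)}\cdot(r_0^{(out)}+r_1^{(out)})+d^{(out)}\cdot(f(x_0)+f(x_1))\equiv\theta_1-\theta_0$ can only be passed with probability close to $1$ if the two branches interfere correctly, which forces the server's residual state on branch $b$ to be simulatable from $f(x_b)$ alone (Lemmas~\ref{lem:6.3}--\ref{lem:6.5}). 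This ``evaluate, then delete the input via Hadamard measurement, then certify the deletion'' mechanism is the core of the protocol and is absent from your outline.

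There is also a second missing ingredient. Even with the two-branch protocol, the server's responses in the computational and Hadamard tests are $m$-bit strings, so naively the server-to-client communication is not succinct. The paper handles this with a separate ``succinct testing for two-party relations'' compiler (Protocol~\ref{prtl:succtest}, a variant of the answer-reduction step in \cite{BKLMM22}): the server commits to its long response with a collapsing hash, and then a succinct argument of knowledge certifies that the committed string satisfies the client's check. Your proposal has no analogue of this step.
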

Here the collapsing hash function \cite{Unruh16} is a quantum generalization of the collision-resistant hash function. It's plausible to assume common cryptographic hash functions, like the SHA-3 \cite{KLtextbook}, satisfy this property \cite{Unruh16}; collapsing hash function could also be constructed from the LWE assumption \cite{Unruh16a}. See Section \ref{sec:3.3.2} for a more detailed discussion.\par
Finally we note that in the theorem statement above we omit some desirable properties of our protocol. In particular, in our protocol the client-side computation is also succinct (that is, $\fpoly(n,1/\epsilon)$).

\subsubsection{Protocols for the secure two-party computations}
Building on our SFS protocol, we construct protocols for 2PC. 
As before, we allow for approximate security, and aim at communication complexity that only depends on the input length and the targeted security. We further consider two different settings: in one setting we allow the quantum communication, and in the other setting we only allow the classical communication and local quantum computation. Our results are as follows.
\begin{thm}[Informal]\label{thm:1.3}
Assuming the existence of collapsing hash functions, there exists a quantum cryptographic protocol (Protocol \ref{prtl:2pc}) for the 2PC problem that is $\epsilon$-secure and has communication complexity  (including quantum and classical communications) $\fpoly(n,1/\epsilon)$.
\end{thm}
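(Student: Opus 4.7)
The plan is to reduce the 2PC problem to the SFS problem of Theorem~\ref{thm:1.2}, combined with standard classical cryptographic techniques for short-output subroutines. The key observation is that classical 2PC constructions based on FHE and succinct arguments \cite{AJW12,DFH12,HW15} already achieve $O(n+m)$ communication, which decomposes into an $O(n)$ piece for input handling and an $O(m)$ piece for delivering the long outputs; since our SFS protocol delivers a long output with communication $\fpoly(n,1/\epsilon)$ independent of the output length, the natural idea is to replace the $O(m)$ output-delivery step with an SFS-style subprotocol, while reusing standard classical machinery for the short-output preprocessing.

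Concretely, I would structure the protocol in three phases. First, an input-commitment phase in which Alice and Bob commit to $x_A$ and $x_B$ using the collapsing hash function and exchange the short ($\fpoly(n)$) commitments. Second, a classical short-output 2PC that, given the commitments, sets up shared auxiliary information (for example, FHE-encrypted inputs under suitable keys, together with succinct consistency witnesses) encoding the computation of $(f_A,f_B)$ on the committed inputs. Third, an output-delivery phase in which $f_B(x_A,x_B)$ is delivered to Bob by invoking the SFS subprotocol on a public function parameterized by the committed and encrypted inputs, and $f_A(x_A,x_B)$ is delivered to Alice by a symmetric SFS subprotocol with the roles of client and server swapped.

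The main obstacle will be reconciling the sampling structure of SFS (where the client samples a random $x$ and the server learns $f(x)$) with the 2PC structure (where both parties have specific private inputs). My plan for handling this is to embed a commitment-consistency check inside the SFS function: the SFS is run on a function that outputs $f_B(x_A,x_B)$ only when the ``sampled'' value decodes to openings consistent with the prior commitments, and otherwise outputs $\bot$. The computational binding implied by the collapsing hash then forces any malicious party to effectively act with respect to the committed inputs, and the SFS security guarantees that the recipient learns nothing beyond the intended output. Security of the full protocol follows by sequential composition of the SFS simulator from Theorem~\ref{thm:1.2} with the standard simulators for the commitment and short-output 2PC phases, with the error budgets adding to $O(\epsilon)$. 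A delicate point, which I expect to be the main technical step, is ensuring that the function handed to SFS admits a succinct description even when it references the committed inputs, so that the polynomial dependence of SFS on the description size of its function does not sneak an $O(m)$ factor back into the total communication; this likely requires using hash-based succinct encodings of the circuit for $f_B$ plus appropriate consistency proofs.
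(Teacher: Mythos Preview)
Your proposal has a genuine gap in the central step: the mechanism for turning SFS (which gives the server $f(x)$ for a \emph{uniformly random} $x$ the client does not control) into a primitive that delivers $f_B(x_A,x_B)$ for \emph{specific committed} inputs does not work as stated. You propose running SFS on a function that outputs $f_B(x_A,x_B)$ only when the sampled value decodes to openings consistent with the prior commitments, and $\bot$ otherwise. But in SFS the sampled value is uniformly random and independent of the commitments, so with overwhelming probability it will not be a valid opening of anything, and the output will simply be $\bot$. There is no way to steer the random sample toward the committed inputs without breaking either the SFS abstraction or the security of the commitment.

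The paper's fix is the key idea you are missing: use Yao's garbled circuits as an intermediate layer. The decomposability of the garbling scheme means the circuit encoding $\fGarbleC(C_f,r)$ depends only on \emph{random} $r$ (and the public circuit), not on the inputs. So one can run SFS on the function $r\mapsto \fGarbleC(C_f,r)$: the client legitimately gets a random $r$, the server gets the long circuit encoding, and communication stays $\fpoly(n,\kappa,1/\epsilon)$ because $|r|=\fpoly(n,\kappa)$. The short input encoding $\fGarbleI((x_A,x_B),r)$ is then delivered via a separate short-output 2PC. This is what the paper calls SFVP, and the full 2PC protocol layers a second garbling on top (to handle an adaptive-security-of-garbling issue you have not yet encountered) together with commitments and succinct ZK-AoK to get soundness against a malicious client. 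A secondary point: your Phase~2 invokes FHE, which is a stronger assumption than collapsing hash functions; the paper instead uses the one-way-function-based quantum 2PC of \cite{BCKM20,GLSV20} for the short preprocessing, keeping the overall assumption at collapsing hash only.
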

\begin{thm}[Informal]\label{thm:1.4}
    Assuming the existence of NTCF and collapsing hash functions, there exists a quantum cryptographic protocol over classical channel (Protocol \ref{prtl:cc2pc}) for the 2PC problem that is $\epsilon$-secure and has communication complexity $\fpoly(n,1/\epsilon)$.
\end{thm}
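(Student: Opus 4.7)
The plan is to compile the quantum-channel 2PC protocol of Theorem \ref{thm:1.3} into a classical-channel protocol by replacing every client-to-server quantum transmission with an NTCF-based remote-state-preparation-with-verification (RSPV) subprotocol, variants of which (such as \fRSPV, \fRSPVmBoD, \fRSPVoBmD) are developed elsewhere in this paper on the NTCF assumption.

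First I would inspect Protocol \ref{prtl:2pc} and split it into (i) quantum communication steps, which I expect to consist entirely of the client preparing a collection of structured single-qubit states (BB84-like states, and the gadget states used by the underlying SFS protocol of Theorem \ref{thm:1.2}) and sending them to the server, and (ii) purely classical steps. The compilation keeps (ii) intact and replaces each round of (i) with an RSPV call that, through NTCFs, classically directs the server to internally prepare the very same ensemble of states that would have been transmitted. By the adaptive-hardcore-bit property of the NTCF together with the collapsing assumption on the hash function, the client ends up with the description of the prepared states in a way that remains computationally hidden from any efficient server, up to a small RSPV error $\epsilon_{\fRSPV}$.

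Security follows by a hybrid argument. In hybrid $H_i$, the first $i$ quantum transmissions have been replaced by RSPV calls while the remaining rounds still use a quantum channel; by the simulation-based security of RSPV, the transition $H_i \to H_{i+1}$ introduces at most $\epsilon_{\fRSPV}$ trace-distance error on the joint client-server state, conditioned on the client's view. Summing over the $\fpoly(n,1/\epsilon)$ RSPV rounds and picking $\epsilon_{\fRSPV}$ inverse-polynomially small yields total error $\epsilon + \fneg(\lambda)$, while the final hybrid coincides exactly with Protocol \ref{prtl:2pc} and is $\epsilon$-secure by Theorem \ref{thm:1.3}. For efficiency, each RSPV invocation has communication $\fpoly(n,1/\epsilon)$, so the total classical communication of the compiled Protocol \ref{prtl:cc2pc} remains $\fpoly(n,1/\epsilon)$.

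The main obstacle will be \emph{composability} of RSPV with the surrounding protocol. A malicious server is not forced to use the RSPV-prepared states in isolation: it may entangle them with prior classical messages, delay their use across several later rounds, and correlate its deviations in one RSPV call with deviations in another. The required RSPV security notion therefore has to be simulation-based with a simulator that is valid under arbitrary quantum predicates on the post-RSPV joint state, and has to tolerate sequential composition with the collapsing-hash-based commitments that bind the server's view across the 2PC protocol. Showing that our RSPV variants compose cleanly in this way, while preserving succinct classical communication and approximate $\epsilon$-security, is the delicate step, and this is where the collapsing-hash assumption is crucial: it permits the reduction to measure away server-side commitments during the hybrid without disturbing the outside execution.
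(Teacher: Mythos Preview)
Your high-level plan---compile Protocol~\ref{prtl:2pc} by replacing each quantum transmission with an NTCF-based RSPV call and then argue by hybrids---matches the paper's approach. However, two substantive points in your write-up are off.

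First, you claim the compilation relies on the adaptive-hardcore-bit property of NTCF. The paper is explicit that it does \emph{not}: the RSPV protocols invoked (Theorems~\ref{thm:kpsrspv} and~\ref{thm:bb84rspv}) are built from plain NTCF in the sense of Definition~\ref{defn:3.12}, and the collapsing-hash assumption enters only through the underlying $\fSFS$/$\fSFVPTwo$ subprotocols inherited from Theorem~\ref{thm:1.3}, not through the RSPV compilation itself. Your paragraph about collapsing hashes being ``crucial'' for ``measuring away server-side commitments during the hybrid'' misplaces where that assumption is used.

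Second, and more importantly, your hybrid argument only addresses the case where the malicious party is the \emph{receiver} in the RSPV subprotocol (what you call the ``malicious server''). In a 2PC protocol either Alice or Bob can be corrupt, and in Protocol~\ref{prtl:2pc} both parties play the sender role in some quantum transmissions. When the corrupt party is the RSPV \emph{sender} (the ``client'' in RSPV parlance), the RSPV soundness guarantee says nothing---it only protects an honest client against a malicious server. The paper's argument (proof of Theorem~\ref{thm:7.7}) handles this direction with a separate, simple observation: because the RSPV protocol takes no private input from the honest receiver, any malicious-sender strategy in the compiled protocol can be simulated in the original quantum-channel protocol by having the sender locally prepare the joint RSPV output state and forward the receiver's share. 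Without this second case your hybrid chain does not close, and the composability concern you flag is resolved by exactly this asymmetric two-case analysis rather than by any collapsing-hash trick.
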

Here NTCF stands for noisy trapdoor claw-free functions \cite{BCMVV,zhang24,AMR22,BKVV}; in particular we do not need the adaptive hardcore bit property. The NTCF assumption could be instantiated by the LWE assumption \cite{BCMVV,BKVV}. (See Section \ref{sec:3.3.6} for details.)
\subsection{Brief Technical Overview}
We give a brief technical overview in this section; a more detailed technical overview is given in Section \ref{sec:techoverv}. We focus on two parts in this brief overview: (1) how do we construct the SFS protocol (which corresponds to Theorem \ref{thm:1.2})? (2) how do we go from the SFS protocol to the general secure two-party computations protocol (which corresponds to Theorem \ref{thm:1.3})?
\subsubsection{Construction of the SFS protocol}
We first give a toy protocol to illustrate the idea behind our protocol.\par
Very roughly speaking, we would like to design a protocol that works as follows:
\begin{enumerate}\item The client first uses some way to send the function input $x$ to the server, which allows the server to evaluate $f$ and get $f(x)$;\item Then the client deletes all the information other than the function output $f(x)$.\end{enumerate}
We first consider the following toy protocol.
\begin{enumerate}\item The client randomly samples $x_0,x_1\in \{0,1\}^n$ randomly and sends their superposition $$\frac{1}{\sqrt{2}}(\ket{x_0}+\ket{x_1})$$ to the server.
    \item The server evaluates $f$ and gets the state $$\frac{1}{\sqrt{2}}(\ket{x_0}\ket{f(x_0)}+\ket{x_1}\ket{f(x_1)})$$
    The server sends back the input register (holding $\ket{x_0},\ket{x_1}$) to the client.
    \item The client measures this register to collapse the state. If the server is honest the final outputs achieve SFS for $f$.
\end{enumerate}
We note that this protocol hasn't achieved what we want in several senses; especially we do not have security against malicious servers: the server could simply store the function inputs in step 2 and break the security.\par
Starting from this toy protocol, we will do a series of revisions and finally construct an SFS protocol. For a brief technical overview, we discuss three ideas in our constructions:
\begin{itemize}
    \item We will make use of the Hadamard test to test the state and certify the server's behavior. Note that in the toy protocol above, instead of asking the server to send back the input register, the client could ask the server to measure the input register on Hadamard basis; suppose the server's measurement result is $d^{(in)}\in \{0,1\}^n$. Then if the client further asks the server to measure the output register on the Hadamard basis and get $d^{(out)}\in \{0,1\}^m$, the following relation should be satisfied:
    $$d^{(in)}\cdot(x_0+x_1)+d^{(out)}\cdot (f(x_0)+f(x_1))\equiv 0\mod 2$$
    This provides a powerful test for certifying the server's operations. But a significant drawback is, the honest behavior is affected: if we revise the protocol in this way, once the input register is measured on the Hadamard basis, the client could not know which input (either $x_0$ or $x_1$) it should choose corresponding to the server-side state. In later constructions we will combine these ideas and introduce new ideas to design protocols that could both control the malicious server's behavior and preserve the honest setting behavior.
    \item In the construction of the full protocol, we will need to first construct a forward-succinct protocol, which means, the client-to-server communication complexity only depends on $n$ and the targeted security; then we further compile the protocol to make the server-to-client communication succinct. We formalize this protocol for reducing the server-to-client communication as a primitive that we call ``succinct testing for two-party relations'', and its construction is a variant of the answer-reduction compiler in \cite{BKLMM22}.
    \item We will view the SFS problem as a quantum cryptographic primitive called \emph{remote state preparation with verifiability} (RSPV), and use the corresponding framework, techniques and results \cite{GVRSP,BGKPV23,zhang24} to work on this problem.
\end{itemize}
In Section \ref{sec:2.2} to \ref{sec:2.5} we discuss our ideas in more detail.
\subsubsection{Construction of the general two-party computations protocol}
After the construction of the SFS protocol, we would like to further construct the protocol for general secure two-party computations. We want the communication complexity to be succinct (that is, polynomial in the input size and the targeted security, even when the outputs are long); what's more, we only want to assume the existence of collapsing hash functions. Although we allow for approximate security and could use quantum communication, the construction still requires a significant amount of works.\par
For a brief technical overview, we show an important step of our construction: building on our SFS protocol, we construct a protocol called \emph{secure function value preparation} (SFVP).\par
The notion of SFVP is similar to SFS with the following difference: in SFVP the function input $x$ is provided by the client as its private input (for comparison, in SFS $x$ is sampled and the client does not control it). To construct SFVP from SFS, we are going to use a classical cryptographic primitive called \emph{Yao's garbled circuits}.\par
\paragraph{Background on Yao's garbled circuits} Given a circuit $C$ and an input $x$ of the circuit, Yao's garbled circuit allows us to encode the circuit and input as follows: 
\begin{enumerate}
    \item Sample a random string $r$; the length of $r$ scales with the length of $x$.
    \item Encode $C$ by $r$ and denote the output as $\hat{C}$. We call this the circuit encoding.
    \item Encode $x$ by $r$ and denote the output as $\hat{x}$. We call this the input encoding.
\end{enumerate}
Then if the server is given $\hat{C},\hat{x}$, it could recover $C(x)$; what's more, what the server could know from $\hat{C},\hat{x}$ is no more than $C(x)$. Finally the Yao's garbled circuits could be constructed assuming only one-way functions, which are implied by collapsing hash functions.
\paragraph{Construction of SFVP}Using Yao's garbled circuits and SFS, our SFVP protocol goes as follows:
\begin{enumerate}
    \item Using SFS to sample the circuit encoding; the client gets a random $r$ and the server gets the corresponding $\hat{C_f}$, where $C_f$ is the circuit description of $f$.
    \item The client uses $r$ to compute and send the input encoding $\hat{x}$ to the server.
\end{enumerate}
In other words, the first two steps in the garbling scheme above are combined and implemented using SFS; note that the communication is succinct since $r$ is succinct.\footnote{The size of $C_f$ does not matter as long as $f$ has a succinct description (for example, by a Turing machine): in SFS protocol the client does not need to explicitly compute the circuit description $C_f$.}\par
Finally we briefly discuss other main obstacles and our techniques for constructing a general two-party computations protocol in succinct communication.
\begin{itemize}
    \item A 2PC protocol needs to take private inputs from both parties; intuitively we could see the the second step in the SFVP protocol above (the client sends the input encoding to the server) does not work any more. To solve this problem, we make use of the oneway-function-based 2PC protocol constructed in \cite{BCKM20,GLSV20}; since this 2PC step is only applied on the input encoding part the communication complexity is still succinct regardless of which 2PC protocol we use here.
    \item In the SFVP protocol above there is no security guarantee against a malicious client; this means that a malicious client could send some arbitrary value to the server without being detected. We will make use of several cryptographic primitives including the statistically-binding commitment, succinct zero-knowledge argument-of-knowledge, and collapsing hash functions to achieve the soundness against a malicious client.
    \item In the construction of 2PC protocol we will also encounter a technical issue that seems related to the adaptive security of garbling \cite{asyao}. To bypass this issue our protocol actually works on the garbling of the garbling of the function.
    \item Finally we construct the classical-channel 2PC protocol by compiling our quantum-channel 2PC protocol with existing RSPV constructions \cite{BGKPV23,zhang24} (corresponding to Theorem \ref{thm:1.4}).
\end{itemize}
\subsection{Related Works}
\paragraph{Other works on reducing communications}
In classical cryptography there are a plenty of works for reducing communications in various secure computation problems. These include fully homomorphic encryption (FHE) \cite{GentryFHE,FHELWE}, succinct arguments \cite{Kilian92}, succinct garbling \cite{LinPass14}, laconic function evaluation \cite{QWW18}, etc. Many of these primitives have achieved succinctness in their settings and definitions. But for our problem, SFS and general two-party computations for functions with long outputs, none of these primitives could make the communication complexity independent to both the function evaluation time and the output length. (As discussed before, this is impossible to achieve in classical cryptography.)\par
For succinctness in quantum cryptographic protocols, there are a series of works on succinct arguments for QMA \cite{BKLMM22,MNZ24,GTNV24,CCT,Faisal23}. Note that the problem setting of succinct arguments for QMA is different from our work: in succinct arguments for QMA we consider quantum circuits and the inputs and outputs are all public. Another work that discusses succinctness is \cite{jiayu20}, where the succinctness refers to the size of quantum communication in the protocol (where the size of classical communication is not counted).
\paragraph{Deletion in quantum cryptographic protocols} Our construction of the forward-succinct protocols contains a sense of revocation or deletion, which is a quantum phenomenon: recall that the client first sends the input to the server to allow the server to do something, and then this part could be measured on the Hadamard basis and gets destroyed. A series of existing works also make use of some types of deletion or revocation to achieve nontrivial cryptographic tasks, for example, \cite{Unruh15,BI20,BK,cvqcinlt}. As far as we know, our usage of this quantum phenomenon is different from these existing works on both the targeted problems and the technical level.
\paragraph{Other quantum advantages} There are various types of quantum advantages. Examples include quantum algorithms like the Shor algorithm and what have been discussed in Section \ref{sec:1.1}. Other examples include quantum advantage in shallow circuits \cite{BGK17}, quantum advantage in communication complexity (without considering the security) \cite{ATY17}. As far as we know, the quantum advantage in the communication complexity of secure computations has not been studied before.
\paragraph{Previous versions} This version has been significantly improved compared to its previous versions
 in 2023 \cite{qafirstversion}. 
Many limitations of the protocols in the previous versions have been overcome: previous versions only handle constant error tolerance and this version could achieve inverse-polynomial error tolerance; this version contains a proof of classical impossibility\footnote{We thank Prabhanjan Ananth for pointing to paper \cite{HW15}}; this version also contains protocols for secure two-party computations (some of them rely on a result coming out after the previous versions \cite{zhang24}).
\subsection{Summary and Discussion}
The big message of this work is that we study and discover a new type of quantum advantage in cryptography: we show that for the problem of secure function sampling with approximate security, there exists a quantum cryptographic protocol that achieves this task with communication complexity independent of both the function evaluation time and the output length, while it's impossible to achieve this task solely by classical cryptography. Furthermore, we generalize our results from the secure function sampling problem to the general secure two-party computation problem.\par
Our results lead to a series of open problems. For example:
\begin{itemize}
    \item Could we further generalize the problem setting (for example, to quantum circuits) or further weaken the assumptions (for example, to post-quantum one-way functions)?
    \item We note that for some types of functions, when we do not care about the security, the communication complexity could be even smaller than the input length. This sugguests that there might be some further room for reducing communications in secure computations: could we design secure computation protocols with communication complexity $O(CC(f))$ (where $CC(f)$ denotes the communication complexity of $f$ without considering the security)?
    \item In our constructions we only achieve approximate security. Could we overcome this issue and make the security loss negligible?
    \item Could our protocols be made practical?
\end{itemize}

\section*{Acknowledgement}
We thank Prabhanjan Ananth, Zhengfeng Ji and anonymous reviewers for discussions and comments. This work is done in Zhongguancun Laboratory.

\section{Technical Overview}\label{sec:techoverv}
In this section we give a detailed technical overview for the whole constructions.
\subsection{Overview and Organizations}
Below we first give a high-level overview, including organizations of this section and the whole paper.
\paragraph{High-level overview and organization of this section} The secure function sampling (SFS) problem is introduced in Section \ref{sec:1.2.1}. First, although the goal of the problem is purely classical, to design its quantum protocol, we will first view it as a quantum cryptographic problem called \emph{remote state preparation with verifiability} (RSPV). In Section \ref{sec:2.2} we elaborate the notions of SFS, RSPV and their relations. \par
In Section \ref{sec:2.3}, \ref{sec:2.4} and \ref{sec:2.5} we give an overview for our construction of the quantum protocol for SFS. We first present a forward-succinct protocol, which means, the client-to-server communication is succinct; then we show how to revise it to a fully succinct protocol by a variant of the compiler in \cite{BKLMM22}.\par
In Section \ref{sec:2.6} we discuss the classical impossibility. The technique is an adaptation of \cite{HW15}.\par
In Section \ref{sec:2.7} we show how to construct protocols for general two-party computations with succinct communication.\par
\paragraph{Organization of the whole paper} The main content is organized in a slightly different order from this technical overview.\par
In Section \ref{sec:3} we review the preliminaries of our works.\par
    In Section \ref{sec:4} we formalize the SFS problem, prove the classical impossibility and prepare for the constructions.\par
    In Section \ref{sec:5} and Section \ref{sec:6} we give our construction of the quantum protocol for SFS. We first formalize the protocol that reduces the server-to-client communication (which is a variant of the compiler in \cite{BKLMM22}), and then construct the full protocol.\par
    In Section \ref{sec:7} we construct our protocols for general two-party computations. 


\subsection{Secure Function Sampling and Remote State Preparation with Verifiability}\label{sec:2.2}
As said before, to construct the SFS protocol, we will first view it as an RSPV problem and use the related tools. Below we first explain some more details in the description of the SFS problem, and then review the notion of RSPV and connect SFS to this notion.
\paragraph{Details of the SFS problem} The SFS problem is informally introduced in Section \ref{sec:1.2.1}. One missing detail is that, in the problem modeling the protocol will take an additional parameter $\kappa$ called the security parameter, which describes the security level of the protocol. We further clarify some naming and notation: note that the client-side outputs have a special part called flag, whose value is in $\{\fpass,\ffail\}$, where $\fpass$ means the protocol completes successfully and both parties should get the desired outputs, and $\ffail$ means the server is caught cheating (or called aborted, or rejected). \par
 Furthermore, the security is described by a security definition paradigm called simulation-based security. Recall that in Section \ref{sec:1.2.1} we describe the security of SFS as follows: ``what it (the malicious server) gets from the protocol is no more than one of the following two cases: ...''. Here the ``one of the following two cases: ...'' is called the ideal functionality of the problem; ``what it gets from the protocol'' is the real execution of the protocol. Then the formal security statement goes as follows:
 \begin{equation*} \forall \text{ poly-time adversarial server } \fAdv, \exists \text{ poly-time simulator }\fSim\text{ working on the server side}\end{equation*}\begin{equation}\label{eq:1}\text{such that }\tReal^{\fAdv}\approx^{ind} \fSim\circ\tIdeal.\end{equation}
 Here $\tReal^{\fAdv}$ denotes the real execution against adversary (malicious server) $\fAdv$,  $\fSim\circ\tIdeal$ is the ideal functionality composed with the simulator (note that there might be interaction between $\fSim$ and $\tIdeal$), and $\approx^{ind}$ denotes the computational indistinguishability. Here the introduction of the simulator, together with the computational indistinguishability, formalizes what it means by saying ``... is no more than ...'' in the informal claim.\par
 Finally we discuss the approximate security. This is defined by taking the indistinguishability in \eqref{eq:1} to be approximate. In other words, no polynomial-time distinguisher could distinguish the two sides of \eqref{eq:1} with probability advantage more than $\epsilon$.\par
 We elaborate the details of the security definition paradigm in Section \ref{sec:3.2}.
\paragraph{A review of remote state preparation with verifiability (RSPV)} RSPV is an important primitive in quantum cryptography. As its basic form, consider a finite set of states $\{\ket{\varphi_1},\ket{\varphi_2},\cdots\ket{\varphi_D}\}$. The client would like to sample a random state $\ket{\varphi_i}$ from this set and send this state to the server, so that in the end the client knows the state description (which could be the index $i$ here), while the server only holds the state. In the malicious setting, what the malicious server could get should be no more than one of the following two cases: (1) the server gets the state as in the honest behavior; (2) the server is caught cheating and gets nothing. Existing works give RSPV constructions for several specific state families \cite{GVRSP,qfactory,cvqcinlt,GMP,zhang24}. We give a more detailed review for RSPV in Section \ref{sec:3.3.7}.\par
\cite{zhang24} provides a framework for working on RSPV; this work will build on this framework. 
\paragraph{SFS, RSPV and preRSPV} We first note that SFS could be viewed as a variant of RSPV. The problem setting is basically the same: in SFS the server would like to get a classical value $f(x)$; note that classical states could be viewed as a special case of quantum states. Then the state family (the $\{\ket{\varphi_1},\ket{\varphi_2},\cdots\ket{\varphi_D}\}$ above) corresponds to the values of $f$ over all the input $x$.\par
 Then as discussed in \cite{zhang24}, its construction could be reduced to a notion called preRSPV, as follows. The preRSPV is defined to be a pair of protocols $(\pi_{\ttest},\pi_{\tcomp})$; these two protocols run in the same set-up and stand for the test mode and the computation mode correspondingly. The soundness (or called security or verifiability) of preRSPV is roughly defined as follows: for any malicious server, if it passes in the test-mode protocol, then the computation-mode protocol achieves what we want (that is, prepare the target output states). On the one hand, a preRSPV could be amplified to an RSPV protocol by sequential repetition, as shown in \cite{zhang24}; on the other hand, preRSPV is usually easier to work on when we construct protocols from scratch.\par
Below we will construct the preRSPV for our problem. Then it could be amplified to an RSPV, which is exactly a protocol for SFS.
\subsection{Construction of the Forward-succinct Protocol for Certain Functions}\label{sec:2.3}
As said before, we will first construct forward-succinct protocols, which means, the size of the client-to-server communication is $\fpoly(n,\kappa,1/\epsilon)$. We will construct a preRSPV (as described in Section \ref{sec:2.2}) protocol that is forward-succinct.\par
In this section we will first consider a toy protocol, given below in Toy Protocol \ref{toyprtl:1}, \ref{toyprtl:2}. Toy Protocol \ref{toyprtl:1}, \ref{toyprtl:2} is a forward-succinct preRSPV protocol for functions with certain desirable properties. In Section \ref{sec:2.4} we will further revise the protocol and get a forward-succinct preRSPV protocol for general functions.\par
Below we first discuss the protocol in Section \ref{sec:2.3.1}, and then discuss the intuition for its soundness in Section \ref{sec:2.3.2}.
\subsubsection{Protocol design}\label{sec:2.3.1}
As said before, in preRSPV we consider a pair of protocols, corresponding to the computation mode and the test mode. We first present the computation-mode protocol.
\begin{toyprtl}[Computation-mode]\label{toyprtl:1}
\begin{enumerate}
    \item The client samples and stores two different strings $x_0,x_1\in \{0,1\}^n$; then the client prepares and sends \begin{equation}\label{eq:1r}\frac{1}{\sqrt{2}}(\ket{x_0}+\ket{x_1})\end{equation} to the server.
    \item The server evaluates $f$ on the received state and gets the state
    $$\frac{1}{\sqrt{2}}(\ket{x_0}\ket{f(x_0)}+\ket{x_1}\ket{f(x_1)}).$$
    The server sends back the input register (holding $\ket{x_0}, \ket{x_1}$) through a quantum channel back to the client.\par
    Now the joint state between the client and the server is as follows:
    \begin{equation}\label{eq:2}\frac{1}{\sqrt{2}}(\underbrace{\ket{x_0}}_{\text{client}}\underbrace{\ket{f(x_0)}}_{\text{server}}+\underbrace{\ket{x_1}}_{\text{client}}\underbrace{\ket{f(x_1)}}_{\text{server}}).\end{equation}
    \item The client measures its quantum state and stores the outcome (either $x_0$ or $x_1$) as $x$. The server holds the corresponding $f(x)$.
\end{enumerate}
\end{toyprtl}
We note that Toy Protocol \ref{toyprtl:1} alone already achieves something: for a server that is honest during the protocol execution but becomes malicious after the protocol completes, the security already holds.\footnote{We thank Prabhanjan Ananth for raising Toy Protocol \ref{toyprtl:1} as a simplification of the original construction
 and giving this interesting observation in a personal discussion.} In the world of classical cryptography this is called the honest-but-curious security; although this type of security is actually trivial to achieve in the quantum world, this is a good beginning for explaining the intuition.\par
To get the security against a malicious server, we need the following test-mode protocol.
\begin{toyprtl}[Test-mode]\label{toyprtl:2}
    \begin{enumerate}
        \item[1, 2.] The step 1 and step 2 are the same as Toy Protocol \ref{toyprtl:1}.
        \item[3.] The client randomly chooses to execute one of the following two tests with the server:
        \begin{itemize}
            \item (Computational basis test) The client measures its state on the computational basis and asks the server to measure its state on the computational basis and send back the result. Suppose the client's outcome is $x$ and the server's response is $y$. The client checks $f(x)=y$ and rejects if the test fails.
            \item (Hadamard basis test) The client measures its state on the Hadamard basis and asks the server to measure its state on the Hadamard basis and send back the result. Suppose the client's outcome is $d^{(in)}$ and the server's response is $d^{(out)}$. The client checks (below $\cdot$ denotes vector inner product)
            \begin{equation}\label{eq:3r}d^{(in)}\neq 0^n\quad\land\quad d^{(in)}\cdot (x_0+x_1)+d^{(out)}\cdot(f(x_0)+f(x_1))\equiv 0 \mod 2\end{equation}
             and rejects if the test fails.\footnote{As a convention, below we may omit ``the client rejects if the test fails'' when we say the client checks something.}
        \end{itemize}
    \end{enumerate}
\end{toyprtl}
In other words, starting from state \eqref{eq:2}, the client will choose to test the server in either computational basis or in the Hadamard basis. We note that some forms of computational basis test and Hadamard basis test have also been used in existing works like \cite{BCMVV,MahadevVerification}, but our problem settings, ideas and constructions are different from these existing works.\par
How could a malicious server attack this protocol? The server might try to store something more than $f(x)$ (for example, $x$) before sending back the input register (in the second step of the protocol); here the test-mode part is designed to catch it cheating. 
 Recall the preRSPV soundness: if the adversary could pass the test-mode protocol with high probability, then in the computation-mode protocol what it could get is no more than the honest execution output $f(x)$. Below we explain the intuition of why these protocols satisfy this soundness, under the condition that $f$ satisfies certain desirable property.
\subsubsection{Intuition for the soundness}\label{sec:2.3.2}
We first note that if the server could pass the tests in the test-mode protocol, then starting from the output state of the second step, it should pass both the computational basis test and Hadamard basis test.\par
Denote the joint state of the client and the server after the second step of the protocol as $\ket{\phi}$. Let's first see what we could say on it under the condition that the server could pass the computational basis test. This implies that the server could recover $f(x)$ on its own side; in other words, we could say $\ket{\phi}$ is in the following form up to a server-side operation:
\begin{equation}\label{eq:3}\ket{\phi}\approx\underbrace{\ket{x_0}}_{\text{client}}\underbrace{\ket{f(x_0)}\ket{\varphi_0}}_{\text{server}}+\underbrace{\ket{x_1}}_{\text{client}}\underbrace{\ket{f(x_1)}\ket{\varphi_1}}_{\text{server}}\end{equation}
We could compare \eqref{eq:3} with the state from the honest execution, \eqref{eq:2}: roughly speaking, what remains to be done is to show that for each $b\in \{0,1\}$, $\ket{\varphi_b}$ could be simulated from solely $f(x_b)$. We will make use of the condition that the server could pass the Hadamard test starting from \eqref{eq:3}, under the extra condition that $f$ satisfies certain desirable property (elaborated below).\par
Denote the two terms on the right hand side of \eqref{eq:3} as $\ket{\phi_0},\ket{\phi_1}$, and let's call them the $0$-branch and the $1$-branch; thus $\ket{\phi}\approx\ket{\phi_0}+\ket{\phi_1}$. First it's intuitive to see that these two branches correspond to the two branches in state \eqref{eq:1r} correspondingly; if the state \eqref{eq:1r} is replaced by a single branch the state \eqref{eq:3} will also become the corresponding branch. Then let's assume the function $f$ has the desirable property that when the protocol starts with a normalized single-branch state (that is, \eqref{eq:1r} is replaced by a normalized single branch, which is either $\ket{x_0}$ or $\ket{x_1}$), the server could only pass the Hadamard test with $\frac{1}{2}$ probability.\footnote{For example when $f$ is a random function it satisfies this condition: without loss of generality consider the $0$-branch, then note that the $0$-branch contains basically no information on the value of $x_1$ or $f(x_1)$, so it's hard for the server to come up with a $d=(d^{(in)},d^{(out)})$ that passes the check in \eqref{eq:3r} with probability better than random guessing.}\par
So on the one hand each branch of $\ket{\phi}$ has norm only $\frac{1}{\sqrt{2}}$ and each branch (after normalization) could only pass the Hadamard test with probability $\frac{1}{2}$; on the other hand $\ket{\phi}$ passes with high probability (that is, close to $1$). By linear algebra calculations we get that the passing space and failing space of these states in the Hadamard test should interfere in the right way:
\begin{equation}\label{eq:6}\Pi_{\fpass}O_{HT}\ket{\phi_0}\approx \Pi_{\fpass}O_{HT}\ket{\phi_1}\end{equation}
\begin{equation}\label{eq:7}\Pi_{\ffail}O_{HT}\ket{\phi_0}\approx -\Pi_{\ffail}O_{HT}\ket{\phi_1}\end{equation}
where $\Pi_{\fpass},\Pi_{\ffail}$ is the projection onto the passing and failing space, $O_{HT}$ is the server's operation in the Hadamard test.\par
\eqref{eq:6}\eqref{eq:7} implies that $\ket{\phi_0}$ could be mapped to $\ket{\phi_1}$ (and vice versa) by the following operation:
\begin{enumerate}
    \item Apply $O_{HT}$;
    \item Add a $(-1)$ phase on $d=(d^{(in)},d^{(out)})$ that satisfies $d^{(in)}\cdot (x_0+x_1)+d^{(out)}\cdot(f(x_0)+f(x_1))\equiv 1\mod 2$;
    \item Apply $O^{-1}_{HT}$.
\end{enumerate}
Recall that we would like to show that for each $b\in \{0,1\}$, $\ket{\varphi_b}$ could be simulated from solely $f(x_b)$; without loss of generality consider how to simulate $\ket{\varphi_1}$ from solely $f(x_1)$. Let's roughly show why this is true.\par
First by the discussion above $\ket{\varphi_1}$ could be prepared from $\ket{\phi_0}$ by applying the three-step operation above and disgarding other reigsters. But the step 2 of the three-step operation above makes use of $x_0$ which is not what we want. To proceed note that the step 2 above could be written as the composition of the following two operations:
\begin{itemize}\item Add a $(-1)$ phase on $d^{(in)}$ that safisfies $d^{(in)}\cdot (x_0+x_1)\equiv 1\mod 2$; \item Add a $(-1)$ phase on $d^{(out)}$ that satisfies $d^{(out)}\cdot(f(x_0)+f(x_1))\equiv 1\mod 2$.\end{itemize} Note that we only care about $\ket{\varphi_1}$, which lies on the server side of $\ket{\phi_1}$, so we could skip the $d^{(in)}$ part in the operations above.\par
Finally note that the state $\ket{\phi_0}$ contains basically no information about $x_1$ and could be simulated by sampling a random $x_0$. In summary, the operation that simulates $\ket{\varphi_1}$ from solely $f(x_1)$ is as follows:
\begin{enumerate}
    \item Sample a freshly random $x_0$ and simulate the corresponding $\ket{\phi_0}$;
    \item Apply $O_{HT}$;
    \item Add a $(-1)$ phase on $d^{(out)}$ that satisfies $d^{(out)}\cdot(f(x_0)+f(x_1))\equiv 1\mod 2$;
    \item Apply $O_{HT}^{-1}$.
    \item Keep only the registers corresponding to $\ket{\varphi_1}$.
\end{enumerate}
\subsection{Construction of the Forward-succinct Protocol for General Functions}\label{sec:2.4}
In the previous section we present a toy protocol for forward-succinct preRSPV. But the protocol in the previous section has several limitations:
\begin{itemize}
    \item In Section \ref{sec:2.3.2} we assume an extra condition on the function $f$; what we want is a protocol for general functions.
    \item Some properties of Toy Protocol \ref{toyprtl:1}, \ref{toyprtl:2} are obstacles for further improving the protocol. These include:
    \begin{itemize}
        \item The protocol contains backward quantum communication. 
        \item The initial state \eqref{eq:1r} does not have an existing RSPV protocol from standard assumptions.
    \end{itemize}
\end{itemize}
In this section we show how to construct a forward-succinct preRSPV that solves the problem above.\par
Below we first discuss our protocol in Section \ref{sec:2.4.1}, and then discuss its soundness proof in Section \ref{sec:2.4.2}.
\subsubsection{Protocol design}\label{sec:2.4.1}
To solve the problem above, we revise the Toy Protocol \ref{toyprtl:1}, \ref{toyprtl:2} as follows: 
\begin{itemize}
    \item To support general functions, we introduce an input padding $(r^{(in)}_0,r^{(in)}_1)$ where each of them is sampled from $\{0,1\}^\kappa$. This part is padded into \eqref{eq:1r} and the Hadamard test is also updated.
    \item  To remove the backward quantum communication, we let the server measure the input register (holding $x_0$, $x_1$ in \eqref{eq:2}) together with the input padding on the Hadamard basis in step 2, and send back the results. The side effect is, in the computation mode (step 3 of Protocol \ref{toyprtl:1}) the client could not know whether it should keep $x_0$ or $x_1$. To solve the problem, an output padding $(r^{(out)}_0,r^{(out)}_1)$ is introduced and padded into the state \eqref{eq:1r}; this part will be used to test and identify the state on the computational basis.
    \item Finally, we start with the state $\frac{1}{\sqrt{2}}(\ket{0}\ket{x_0}\ket{r^{(in)}_0}\ket{r^{(out)}_0}+\ket{1}\ket{x_1}\ket{r^{(in)}_1}\ket{r^{(out)}_1})$ for state \eqref{eq:1r} (that is, adding a bit holding branch subscripts $0,1$). By \cite{zhang24} there exists an RSPV protocol for this state from standard assumptions.
\end{itemize}
In summary, the new protocol is roughly as follows. 
\begin{prtl}[Test-mode]\label{prtl:1}
    \begin{enumerate}
        \item The client samples and stores the following strings: function inputs $x_0,x_1\in \{0,1\}^n$; input padding $r_0^{(in)},r_1^{(in)}\in \{0,1\}^\kappa$; output padding $r_0^{(out)},r_1^{(out)}\in \{0,1\}^\kappa$. Then the client prepares and sends \begin{equation}\label{eq:1r2}\frac{1}{\sqrt{2}}(\ket{0}\ket{x_0}\ket{r_0^{(in)}}\ket{r_0^{(out)}}+\ket{1}\ket{x_1}\ket{r_1^{(in)}}\ket{r_1^{(out)}})\end{equation} to the server.
        \item The server evaluates $f$ on the received state and gets the state
        $$\frac{1}{\sqrt{2}}(\ket{0}\ket{x_0}\ket{r_0^{(in)}}\ket{r_0^{(out)}}\ket{f(x_0)}+\ket{1}\ket{x_1}\ket{r_1^{(in)}}\ket{r_1^{(out)}}\ket{f(x_1)}).$$
        The server measures the registers holding function inputs and input paddings on the Hadamard basis; denote the measurement results as $d^{(in)}\in \{0,1\}^n,d^{(inpad)}\in \{0,1\}^\kappa$. The server sends back these results to the client.\par
        The client checks $d^{(inpad)}\neq 0^\kappa$. Then it stores $\theta_0=d^{(in)}\cdot x_0+d^{(inpad)}\cdot r_0^{(in)}\mod 2$ and $\theta_1=d^{(in)}\cdot x_1+d^{(inpad)}\cdot r_1^{(in)}\mod 2$.\par
        Now the remaining state on the server side is :
        \begin{equation}\label{eq:2r2}\frac{1}{\sqrt{2}}((-1)^{\theta_0}\ket{0}\ket{r_0^{(out)}}\ket{f(x_0)}+(-1)^{\theta_1}\ket{1}\ket{r_1^{(out)}}\ket{f(x_1)}).\end{equation}
       \item The client randomly chooses to execute one of the following two tests with the server:
       \begin{itemize}\item (Computational basis test) The client asks the server to measure its remaining state on the computational basis and send back the result. The client checks the server's response is in $\{0||r_0^{(out)}||f(x_0),1||r_1^{(out)}||f(x_1)\}.$
        \item (Hadamard basis test) The client asks the server to measure its remaining state on the Hadamard basis and send back the result. Suppose the server's response is $d^{(sub)}\in \{0,1\},d^{(outpad)}\in \{0,1\}^\kappa,d^{(out)}\in \{0,1\}^m$. The client checks
            \begin{equation}\label{eq:3r2}d^{(sub)}+d^{(outpad)}\cdot (r_0^{(out)}+r_1^{(out)})+d^{(out)}\cdot(f(x_0)+f(x_1))\equiv \theta_1-\theta_0 \mod 2.\end{equation}
       \end{itemize}
    \end{enumerate}
\end{prtl}
\begin{prtl}[Computation-mode]\label{prtl:2}
    \begin{enumerate}
        \item[1, 2.] The step 1 and step 2 are the same as Protocol \ref{prtl:1}.
        \item[3.] The client asks the server to measure the first two registers (registers holding $\ket{0}\ket{r_0^{(out)}}$ and $\ket{1}\ket{r_1^{(out)}}$) on the computational basis and send back the result.\par
         The client checks the server's response is either $0||r_0^{(out)}$ or $1||r_1^{(out)}$. The client chooses $x_0$ as its final outputs in the former case and chooses $x_1$ as its final outputs in the later case. The server-side remaining state is the corresponding function output.\par
    \end{enumerate}
\end{prtl}
\subsubsection{Discussion on the soundness}\label{sec:2.4.2}
Let's first intuitively see how these changes overcome the limitation in Toy Protocol \ref{toyprtl:1}, \ref{toyprtl:2}. First, with the input padding, regardless of what functions we are working on, an adversary running on a single branch could only pass with $\frac{1}{2}$ probability: note that $d^{(inpad)}\cdot r_0^{(in)}$ (or $d^{(inpad)}\cdot r_1^{(in)}$) has completely randomized the right hand side of \eqref{eq:3r2}. Then for the effect of the output padding, intuitively if the server outputs $r_b^{(out)}$ correctly, it means that the state has been collapsed to the $b$-branch.\par
The formal proofs are more complicated than this intuitive discussion. To prove the soundness we make use of a security proof guideline in a previous work on RSPV \cite{cvqcinlt}. As in \cite{cvqcinlt}, we define different forms of states including \emph{basis-honest form}, \emph{basis-phase correspondance form} and \emph{phase-honest form}, which define stricter and stricter restrictions on the form of states; we then analyze different components of the protocol and show that these different components restrict the form of states step by step. We give a more detailed overview of the proof in Section \ref{sec:6.2.1}.
\subsection{Construction of the Fully Succinct Protocol}\label{sec:2.5}
In this section we discuss how to further revise the forward-succinct protocol in the previous section to get a fully succinct protocol, which proves Theorem \ref{thm:1.2}. Note that the reason that Protocol \ref{prtl:1}, \ref{prtl:2} is not fully succinct is that the server-to-client communication is long. Below we present a protocol for reducing the server-to-client communication; the construction is a variant of the compiler in \cite{BKLMM22}. In Section \ref{sec:5} we formalize our protocol and prove its (post-quantum) soundness.\par
\paragraph{Problem setting}Suppose the client holds $x$ and the server claims to the client that it holds a witness $w$ such that $R(x,w)=0$, where $R$ is a public efficient classical function. The client would like to use a protocol to verify the server's claim in succinct communication; we call this primitive succinct testing for two-party relations. Note that the setting may seem similar to the succinct arguments problem (see Section \ref{sec:3.3.4}), but here $x$ is a client-side input (instead of a public one), so succinct arguments do not apply in a direct way. Simply letting the client reveal $x$ and applying succinct arguments do not work either: in this way it is only verified that the server knows such a $w$ after it knows $x$, but we would like to verify that the server holds $w$ in the very beginning.\par
Note that such a primitive is sufficient to compile the Protocol \ref{prtl:1}, \ref{prtl:2} to make the server-to-client communication succinct: all the server-to-client communications in Protocol \ref{prtl:1}, \ref{prtl:2} are used for client-side checking of an NP relation (for example, \eqref{eq:3r2}), so these communications together with the checkings could be replaced by callings to the succinct testing protocol. An additional note is that we do not care about the secrecy of $x$ during the protocol; we call it ``testing'' to suggest the slightly unusual setting that the input $x$ is private in the beginning but is allowed to be revealed during the protocol.
\paragraph{Protocol construction} The protocol is roughly as follows.
\begin{enumerate}
    \item (Commit) The server commits to $w$ using a collapsing hash function $h$: it computes $h(w)=c$ and sends $c$ to the client.
    \item (Prove) The client and the server execute a succinct arguments of knowledge for the statement ``$\exists w \text{ such that } h(w)=c$''.
    \item (Reveal) The client reveals $x$ to the server.
    \item (Prove) The client and the server execute a succinct arguments of knowledge for the statement ``$\exists w \text{ such that } f(x,w)=0\land h(w)=c$''.
\end{enumerate}
The soundness is intuitive: by committing to $w$, an arguments of knowledge for the statement about $w$ in step 4 should mean that the server knows this $w$ when it does the commitment.\par
\paragraph{Comparison to existing works} We note that the overall structure of our protocol is not new: \cite{BKLMM22} gives a compiler for reducing the server-to-client communication, and the protocol has the following similar structure: the server does a commitment, both parties execute a succinct argument of knowledge, the client reveals something, and both parties execute a succinct argument of knowledge for both the statement and the commitment. But our problem setting and protocol are technically different from \cite{BKLMM22}: in \cite{BKLMM22} the client-side messages are all instance-independent (they are freshly new random bits) and in our problem $x$ is the protocol input (but is allowed to be revealed). To handle these technical differences we re-formalize the protocol.
\subsection{Classical Impossibility}\label{sec:2.6}
In this section we discuss the proof of the classical impossibility. The formal proof is given in Section \ref{sec:4.2}.\par
The proof is an adaptation of the proof in \cite{HW15}. \cite{HW15} proves that for the ``secure function evaluation'' problem it's impossible to construct a classical cryptographic protocol with succinct communication. Our secure function sampling problem is different from the secure function evaluation problem in \cite{HW15} in the following ways: (1) SFS considers functions with a single input, while in \cite{HW15} the functions receive private inputs from both parties; (2) in our setting the function input $x$ is sampled during the protocol as a client-side output while in \cite{HW15} the function inputs are provided by the parties; (3) we allow approximate security while \cite{HW15} considers normal (negligible soundness error) security. We note that these differences actually make our result stronger. Technically we adapt the ideas and techniques in \cite{HW15} to our setting and handle technical differences.\par
Below we give an overview of the idea of the proof. We will prove that when the function is a pseudorandom generator (PRG) this task is classically impossible; note that PRG expands a short seed to a long pseudorandom output string. 
The proof of impossibility is by a compression argument. Note that for classical protocols, the server-side outputs could be deterministically predicted given the following information: (1) the server's operations, the initial state and the random coins; (2) the client-to-server communication. For constructing the compressor, the first part could be hard-coded and we could focus on the second part. If such an SFS protocol with succinct communication exists, it means that the PRG outputs could be efficiently compressed to something succinct (this is by considering an adversary that outputs the transcripts and calling the simulation-based security) and could be efficiently decompressed. By the properties of PRG this is impossible.
\subsection{Protocols for Secure Two-party Computations}\label{sec:2.7}
In this section we give an overview on how to construct protocols for secure two-party computations based on the SFS protocol.\par
The roadmap is as follows:
\begin{enumerate}
    \item We first generalize the function family from classical deterministic functions to classical randomized functions.
    \item We then construct a protocol for the setting where the function input is provided as the client-side private input (instead of sampled during the protocol). We call it \emph{secure function value preparation} (SFVP).
    \item We then construct a secure two-party computation protocol with succinct communication. An important step for achieving this is how to make the protocol sound against the malicious client.
    \item We then show how to compile the protocol with quantum communication to a protocol with fully classical communication.
\end{enumerate}
Note that each step builds upon or is a revision of previous ones. Below we elaborate each step.
\subsubsection{From deterministic functions to randomized functions}An efficient randomized classical function could be equivalently modeled by a deterministic function on the function input and the random coins; let's denote it as $f(x;r):\{0,1\}^n\times \{0,1\}^{L}\rightarrow \{0,1\}^m$. Note that the number of random coins $L$ could be much bigger than $n$, and we want a protocol whose communication complexity is also independent to the number of random coins.\par
The solution is to take a PRG to simulate the random coins: we consider the function $g(x,s) :=f(x;\tPRG(s))$. Note that the inputs of $g$ are succinct and running our SFS protocol for $g$ gives a protocol with succinct communication. By the property of PRG the outputs of this protocol are indistinguishable\footnote{Note that here the definition of the completeness is also based on indistinguishability.} to the desired outputs of SFS of $f$.
\subsubsection{Secure function value preparation: from sampled input to client-chosen input}\label{sec:2.7.2} In the original setting of SFS the client could not control the function input $x$. We then consider the setting where $x$ is a client-side input and the client would like to transmit $f(x)$ to the server; we call it secure function value preparation (SFVP) and design a protocol $\fSFVP$ for this problem. Our idea is as follows.\par
 We make use of a cryptographic primitive called Yao's garbled circuit \cite{YaoGCOrigin}: For a classical circuit $C$ and input $x$, Yao's garbled circuit transforms them into $\fGarbleC(C,r)$ and $\fGarbleI(x,r)$ where $r$ is the shared random coins; let's call them the circuit encoding and input encoding correspondingly. (We note that $\fGarbleC$ could itself be a randomized function for fixed $(C,r)$; so a deterministic function description of $\fGarbleC$ could be written as $\fGarbleC(C,r;r_{\text{GCin}})$.) 
This primitive has the following properties:
\begin{itemize}
    \item The server could recover $C(x)$ given $\fGarbleC(C,r)$ and $\fGarbleI(x,r)$.
    \item $\fGarbleC(C,r)$ and $\fGarbleI(x,r)$ could be simulated from solely $C(x)$; intuitively this means that when the server is given $\fGarbleC(C,r)$ and $\fGarbleI(x,r)$, what it knows is no more than $C(x)$.
    \item As shown in the notations, the computation of $\fGarbleC(C,r)$ does not use $x$ and the computation of $\fGarbleI(x,r)$ does not use $C$. What these two algorithms share are some random coins $r$. This property is called ``decomposable'' in garbled circuit primitive. We additionally note that the shared coins $r$ is also succinct: its size only depends on the length of $x$ and the security parameter $\kappa$.
\end{itemize}
Now we could construct the $\fSFVP$ protocol for $f$ and the client-side input $x$ as follows. 
\begin{enumerate}
\item The client and the server execute an SFS protocol for function $g(r):=\fGarbleC(C_f,r)$, where $C_f$ is the circuit description of $f$. The client should get a random $r$ and the server gets the corresponding $\fGarbleC(C_f,r)$.
\item The client computes and sends $\fGarbleI(x,r)$ to the server.
\end{enumerate}
Note that the Yao's garbled circuits could be constructed from one-way functions, which is implied by collapsing hash functions.
\subsubsection{Generalization to two-party computations}\label{sec:2.7.3} Now we would like to further generalize our protocols to the general two-party computations, where both parties could have private inputs and private outputs. To overcome the obstacles we need a combination of several ideas, which we discuss below.\par
\paragraph{Use two-party computations to transmit input encoding} For introducing the first idea we first consider a single function $f(x_A,x_B)$ as the outputs, which takes input $x_A$ from Alice and $x_B$ from Bob. If we simply use the $\fSFVP$ protocol, the circuit encoding step could be executed normally while the input encoding step needs to take inputs from both parties. How could Alice and Bob do this securely? The first idea is as follows: the transmission of input encoding, which is $\fGarbleI((x_A,x_B),r)$, is performed by a two-party computation. Note that this step is succinct since it's only about $x_A,x_B,r$, which are all succinct; the computation of $\fGarbleI$ is also relatively easy. Then this two-party computation step could be achieved quantumly assuming only one-way functions by existing works \cite{BCKM20,GLSV20}.\par
\paragraph{Obstacles in the malicious setting} Recall that we are working in the malicious setting where the parties could deviate from the honest behavior maliciously. Examining the security of the constructions so far, there are still two unsolved obstacles for achieving general secure two-party computations:
\begin{itemize}
    \item In the study of SFS we only care about the security against a malicious server; in two-party computations either Alice or Bob could be malicious. So we need some type of security guarantee against a malicious client in the $\fSFVP$ protocol.
    \item A more subtle obstacle is that in the proof of simulation-based security of garbling-based protocol, if the adversary first receives the circuit encoding, the simulator needs to simulate the circuit encoding first; then the simulator does not have an easy way to simulate the input encoding so that the simulated circuit encoding and simulated input encoding still decode to $C(x)$.\footnote{We suspect this obstacle is related to the notion of adaptive security of garbled circuits.}
\end{itemize}
Our ideas are as follows:
\begin{itemize}
    \item We design a new secure function value preparation protocol that also achieves a type of security against a malicious client. The construction makes use of the combination of statistically-binding computationally-hiding string commitments, collapsing hash functions and post-quantum succinct zero-knowledge arguments of knowledge (succinct-ZK-AoK) to control the client's behavior. This protocol is denoted as $\fSFVPTwo$.
    \item In the construction of the two-party computation protocol both parties will first transmit the input encoding and then transmit the circuit encoding (with $\fSFVPTwo$). Note that since $\fSFVPTwo$ itself is based on $\fSFVP$ so we are actually performing the garbling technique twice: in $\fSFVP$ we first transmit the circuit encoding and then transmit the input encoding, while in the two-party computation protocol we first transmit the input encoding and then the circuit encoding.
\end{itemize}
Below we elaborate our ideas and constructions.
\paragraph{Secure function value preparation with security against malicious clients}  The $\fSFVPTwo$ protocol relies on Naor's statistically-binding string commitment scheme \cite{Naorcomm}, denoted by $\fCommit$ (see Section \ref{sec:3.3.comm} for details): to commit to a string $x$, the receiver first samples public randomness $pp$, and then the sender samples $r$ and calculates $\fCommit(x,r,pp)$. To open the commitment the sender simply reveals $r$.\par
The setting of $\fSFVPTwo$ is as follows, which is slightly unusual. Initially the client chooses $x$ as the protocol input; the client also holds a random string $r$ as the commitment randomness. The server holds the commitment $com=\fCommit(x,r,pp)$. $pp$ is randomly sampled and is public. In summary we assume the client has already committed to its function input before the protocol. The security against malicious clients is defined using the simulation-based paradigm, under the condition that the initial setup is done honestly. In more detail, roughly speaking, the soundness against malicious client says that what the client sends to the server should be exactly $f(x)$ where $x$ is the committed string; otherwise the client will be caught cheating.\par
Now the construction of $\fSFVPTwo$ is roughly as follows.
\begin{enumerate}
    \item Both parties execute the $\fSFVP$ protocol to send $f(x)$ to the server.
    \item The server uses a collapsing hash function $h$ to calculate $h(f(x))=c$. The server sends $h,c$ to the client.
    \item Using a succinct-ZK-AoK, the client proves the following statement to the server: ``$\exists (x,r)$ such that $h(f(x))=c\land \fCommit(x,r,pp)=com$''.
\end{enumerate}
Note that all the basic primitives used above could be constructed from collapsing hash functions \cite{Naorcomm,CMSZ21,Kilian92}. The security against malicious clients is intuitive: by the commitment and the proof-of-knowledge the client holds $x$; note that the statistically-binding property of the commitment guarantees that $x$ is unique. Then by the $h(f(x))=c$ and the collision-resistance of $h$ we know what the server gets should be $f(x)$. The ZK property is used to preserve the security against malicious servers.\par
Finally we note that the intuitive discussion of the client-side soundness above skipped an important obstacle: note that the discussion above implicitly assumes that the client passes the PoK with probability close to $1$; in general either aborting or non-aborting could happen with some probability and constructing a simulator that dealing with both cases is not easy. Here our solution is to further revise the protocol above as follows: the server will repeat the third step above for $\fpoly(1/\epsilon)$ rounds (where $\epsilon$ is the error tolerance on the approximate soundness), and requires that the client should pass the checking in each round. Then the simulator could do a cut-and-choose: if the simulator randomly picks a round in them, with high probability the client should pass the PoK verification with high probability conditioned on it passes in all the previous rounds.
\paragraph{Construction of the secure two-party computations protocol} Now we could construct our secure two-party computations protocol as follows. We consider the general setting where there are both Alice-side output function $f_A(x_A,x_B)$ and Bob-side output function $f_B(x_A,x_B)$.
\begin{enumerate}
    \item Both parties execute a two-party computation for the following two tasks:
    \begin{itemize}
        \item Transmit the suitable input encoding;
        \item Commit to the randomness used in the input encoding.
    \end{itemize}
    \item Alice uses $\fSFVPTwo$ to transmit the circuit encoding of $f_B$ to Bob. An honest Bob could decode and get $f_B(x_A,x_B)$.
    \item Bob uses $\fSFVPTwo$ to transmit the circuit encoding of $f_A$ to Alice. An honest Alice could decode and get $f_A(x_A,x_B)$.
\end{enumerate}
Finally we note that to prove the simulation-based soundness we need a stronger form of soundness for the garbled circuit scheme: instead of letting one simulator simulate both the input encoding and circuit encoding, the input encoding is now simulated first by an input encoding simulator and the circuit encoding is simulated later by a circuit encoding simulator (given the simulated input encoding and the circuit outputs). Luckily Yao's garbled circuit satisfies this slightly stronger soundness requirement.
\subsubsection{De-quantizing the communication} Finally we ask: is it possible to construct protocols for secure two-party computations using classical channel and local quantum computations? Note that in our protocol the quantum communication comes from the following two steps:
 \begin{itemize}\item In $\fSFS$ protocol the client prepares and sends states in the form of $\frac{1}{\sqrt{2}}(\ket{0}\ket{x_0}\ket{r_0^{(in)}}\ket{r_0^{(out)}}+\ket{1}\ket{x_1}\ket{r_1^{(in)}}\ket{r_1^{(out)}})$ to the server. \item We make use of secure two-party computations from one-way functions \cite{BCKM20,GLSV20}, in which one party prepares and sends BB84 states to the other during the protocol.
 \end{itemize}
We make use of existing RSPV protocols for these states to replace these prepare-and-send steps. By \cite{zhang24} there exist RSPV protocols for these types of states assuming NTCF. Suppose in the original protocol Alice prepares and sends some states to Bob; after this step is replaced by the corresponding RSPV protocols, the soundness of the new protocol is preserved by the following reason:
\begin{itemize}\item When Bob is malicious, by the soundness of RSPV what the malicious Bob could do is no more than the ideal functionality (caught cheating or honest behavior) up to an approximation error.
\item When Alice is malicious, note that the effect of the RSPV protocols is no more than the following operations: Alice simulates and prepares the joint output states of RSPV and sends the simulated Bob's part to Bob. (Note that RSPV protocols do not take private inputs from Bob.) If the original protocol is sound against malicious Alice the new protocol should also be sound.
\end{itemize}

\section{Preliminaries}\label{sec:3}
In this section we review and clarify basic notions and notations, modeling and basic properties of cryptographic protocols, and important cryptographic primitives.\par
Note that many choices of notions and notations, including their definitions, would be similar to \cite{zhang24}.
\subsection{Notations and Basics}
We first clarify basic mathematical notations.
\begin{nota}
We use $\bN$ to denote the natural numbers (positive integers). We use $1^\kappa$ to denote the string $\underbrace{111\cdots 1}_{\text{for }\kappa\text{ times}}$. We use $[n]$ to denote the set $\{1,2,\cdots n\}$. We use $\leftarrow_r$ to denote random sampling. We use $||$ to denote string concatenation. We use $|\cdot |$ to denote the Euclidean norm of a vector. We also use $|C|$ to denote the size of a circuit $C$. We use $a\equiv b\mod m$ to denote the congruence modulo $m$. We use $\cdot$ to denote the inner product between vectors. We use $\tDist(S)$ to denote the set of probability distribution on $S$. We write $a\approx_{\epsilon}b$ when $|a-b|\leq \epsilon$ where $a,b$ are real numbers.
\end{nota}
We refer to \cite{NielsenChuangs,watroustqi} for basics of quantum computation. Below we review or clarify notations.
\begin{nota}
We use Dirac notation $\ket{\varphi}$ to denote pure states. $\ket{\varphi}\in \cH$ means that $\ket{\varphi}$ is in the Hilbert space $\cH$. We call $\ket{\varphi}$ normalized if $|\ket{\varphi}|=1$.\par
We use $\tD(\cH)$ to denote the set of density operators on a Hilbert space $\cH$. This could be seen as the quantum analog of probability distribution in the classical world.\par
Quantum operations on pure states are described by unitaries. The output state of applying a unitary quantum operation $U$ on quantum state $\ket{\varphi}$ is $U\ket{\varphi}$.\par
In general, quantum operations are described by completely-positive trace-preserving (CPTP) maps on density operators. The output state of applying a quantum operation $\cE$ on quantum state $\rho$ is denoted as $\cE(\rho)$.
\end{nota}
We work on registers, which is very convenient.
\begin{nota}
    A quantum register is a labeled Hilbert space. We use bold font (like $\bS$) to denote registers; the corresponding Hilbert space is $\cH_{\bS}$.\par
    Then we use $\otimes$ to denote the tensor product and $\tr_{\bR}$ to denote the partial trace that traces out register $\bR$.\par
    The joint state on a classical register and a quantum register is call a cq-state, which could also be described by density operators.\par
    We use $\Pi^{\breg}_{v}$ to denote the projection onto the space that the register $\breg$ is in value $v$, and use $\Pi^{\breg}_{S}$ to denote the projection onto the space that the value of register $\breg$ is in set $S$.
\end{nota}
\begin{nota}
We write $\ket{\varphi}\approx_{\epsilon}\ket{\phi}$ if $|\ket{\varphi}-\ket{\phi}|\leq \epsilon$.\par
We write $\rho\approx_{\epsilon}\sigma$ if $\tTD(\rho,\sigma)\leq \epsilon$ where $\tTD$ denotes the trace distance between $\rho,\sigma\in \tD(\cH)$.
\end{nota}
\begin{fact}[Purification]
For each $\rho\in \tD(\cH_{\bS})$, there exists $\ket{\phi}\in \cH_{\bS}\otimes\cH_{\bR}$ such that $\tr_{\bR}(\ket{\phi}\bra{\phi})=\rho$. We say $\ket{\phi}$ is a purification of $\rho$.\par
If $\ket{\phi},\ket{\phi^\prime}\in \cH_{\bS}\otimes\cH_{\bR}$ are two purification of $\rho$, then there exists a unitary $U$ operating on $\cH_{\bR}$ such that $U\ket{\phi}=\ket{\phi^\prime}$.\par
If $\ket{\phi},\ket{\phi^\prime}\in \cH_{\bS}\otimes\cH_{\bR}$ are purifications of $\rho$, $\rho^\prime$ correspondingly, $\ket{\phi}\approx_{\epsilon}\ket{\phi^\prime}$, then $\rho\approx_{\epsilon}\rho^\prime$.
\end{fact}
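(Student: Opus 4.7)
The plan is to prove the three parts of the fact in turn, each by a standard argument in quantum information theory. For existence, I would start from the spectral decomposition $\rho = \sum_i \lambda_i \ket{\psi_i}\bra{\psi_i}$ with $\lambda_i\geq 0$, $\sum_i \lambda_i = 1$, and $\{\ket{\psi_i}\}$ an orthonormal set in $\cH_{\bS}$. Assuming $\dim \cH_{\bR}$ is at least the rank of $\rho$ (which we may enforce by choosing $\cH_{\bR}$ large enough), pick any orthonormal system $\{\ket{i}\}\subset \cH_{\bR}$ and define $\ket{\phi}=\sum_i \sqrt{\lambda_i}\ket{\psi_i}\otimes\ket{i}$; a direct computation of $\tr_{\bR}(\ket{\phi}\bra{\phi})$ then recovers $\rho$.

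For the uniqueness up to a unitary on $\cH_{\bR}$, I would argue through the Schmidt decomposition. Any purification $\ket{\phi}\in \cH_{\bS}\otimes\cH_{\bR}$ of $\rho$ admits a Schmidt decomposition whose Schmidt coefficients are exactly $\sqrt{\lambda_i}$ (matching the eigenvalues of $\rho$) and whose Schmidt vectors on $\cH_{\bS}$ can be chosen to be the eigenvectors $\ket{\psi_i}$ of $\rho$. Thus any two purifications $\ket{\phi}$ and $\ket{\phi^\prime}$ can be written as $\sum_i \sqrt{\lambda_i}\ket{\psi_i}\otimes \ket{e_i}$ and $\sum_i \sqrt{\lambda_i}\ket{\psi_i}\otimes \ket{e^\prime_i}$ respectively, where $\{\ket{e_i}\},\{\ket{e^\prime_i}\}$ are orthonormal in $\cH_{\bR}$; defining $U$ on $\cH_{\bR}$ so that $U\ket{e_i}=\ket{e^\prime_i}$ (and extending arbitrarily to the orthogonal complement) gives the required unitary. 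A mild care needs to be taken when $\rho$ has degenerate eigenvalues, in which case the Schmidt vectors on $\cH_{\bS}$ are only determined up to a unitary inside each eigenspace; this ambiguity can be absorbed into $U$ on $\cH_{\bR}$.

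For the continuity statement, the plan is to combine two standard inequalities. First, for pure states the trace distance satisfies $\tTD(\ket{\phi}\bra{\phi},\ket{\phi^\prime}\bra{\phi^\prime})=\sqrt{1-|\braket{\phi|\phi^\prime}|^2}$, and the hypothesis $\ket{\phi}\approx_\epsilon\ket{\phi^\prime}$ implies $\mathrm{Re}\braket{\phi|\phi^\prime}\geq 1-\epsilon^2/2$, from which a short calculation yields $\tTD(\ket{\phi}\bra{\phi},\ket{\phi^\prime}\bra{\phi^\prime})\leq \epsilon$. Second, since the partial trace $\tr_{\bR}$ is a CPTP map and trace distance is monotone under CPTP maps, we get $\tTD(\rho,\rho^\prime)\leq \tTD(\ket{\phi}\bra{\phi},\ket{\phi^\prime}\bra{\phi^\prime})\leq \epsilon$, i.e.\ $\rho\approx_\epsilon\rho^\prime$.

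None of the three parts is genuinely hard; the only subtlety worth flagging is the degeneracy issue in the Schmidt argument for the second part, which is the place where an otherwise purely mechanical proof requires a small amount of care. Everything else reduces to invoking the spectral theorem, the Schmidt decomposition, and the contractivity of trace distance under CPTP maps.
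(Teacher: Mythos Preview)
Your proposal is correct and follows the standard textbook route (spectral/Schmidt decomposition for the first two parts, and the Fuchs--van de Graaf-type inequality plus monotonicity of trace distance for the third). Note, however, that the paper does not actually prove this Fact: it is stated as a basic background result with no accompanying proof, so there is no ``paper's own proof'' to compare against. Your argument is exactly the kind of justification one would expect for such a stated fact.
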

We refer to \cite{AroraBarak} for basics of complexity theory. Below we review or clarify notions and notations.
\begin{nota}
We use $\fpoly$ to denote a polynomial function (that is, $O(n^{c})$ and $\Omega(n^{-c})$ where $c\in \bN$), and use $\fneg$ to denote a negligible function (that is, converges to $0$ faster than any inverse polynomial). 
\end{nota}
\begin{nota}\label{nota:3.5}
    We say a family of classical functions $f=(f_n)_{n\in \bN}$, $f_n:\{0,1\}^n\rightarrow \{0,1\}$ is classical polynomial-time computable if it could be computed by a classical Turing machine $TM$ in polynomial time.\par
    This notion also applies to function families with both input length parameters and output length parameters: we say $f=(f_{n,m})_{n\in \bN,m\in \bN}$, $f_{n,m}:\{0,1\}^n\rightarrow \{0,1\}^m$ is classical polynomial-time computable if the same definition holds.
\end{nota}
\begin{nota}
    A classical circuit $C$ could be described by $((g_i)_{i\in [|C|]},\text{topo})$ where $(g_i)_{i\in [|C|]}$ is a sequence of elementary classical gates and topo is the topology of the circuit (which includes the circuit graph and the identification of input wires and the output wires).\par
    A quantum circuit is defined in the same way with $(g_i)_{i\in [|C|]}$ being a sequence of elementary quantum gates.\par
    We use $C(x)$ to denote the outputs of evaluating a circuit $C$ on $x$, and use $C(\rho)$ to denote the outputs of evaluating a quantum  circuit $C$ on initial state $\rho$.
\end{nota}
\begin{nota}
    We use $\Pr[C(\rho)\rightarrow o]$ to denote the probability that running a quantum circuit $C$ on state $\rho$ gives output $o$.
    \end{nota}
\begin{nota}\label{nota:3.8}
We say a family of circuit $C=(C_{\kappa})_{\kappa\in \bN}$ is polynomial-size if $|C_\kappa|=\fpoly(\kappa)$ and say it's polynomial-time uniform if the circuit description is classical polynomial-time computable.
\end{nota}


Then we review important complexity classes.
\begin{nota}
    We say a language $L$ is in complexity class P if it could be decided by a classical polynomial-time Turing machine.\par
    We say a language $L$ is in complexity class NP if there exists a classical polynomial-time Turing machine $TM$:
    \begin{itemize}
        \item (Completeness) For each $x\in L$, there exists a witness $w$ such that $TM(x,w)=1$.
        \item (Soundness) For each $x\not\in L$, for any $w$ there is $TM(x,w)=0$.
    \end{itemize}\par
    We say $(L_{\text{yes}},L_{\text{no}})$ is in complexity class QMA if there exists a polynomial-time uniform family of quantum circuits $V$:
    \begin{itemize}
        \item (Completeness) For each $H\in L_{\text{yes}}$, there exists a quantum state $w$ such that $\Pr[V(H,w)\rightarrow 1]\geq \frac{2}{3}$.
        \item (Soundness)  For each $H\in L_{\text{no}}$, for any quantum state $w$ there is $\Pr[V(H,w)\rightarrow 0]\geq \frac{2}{3}$.
    \end{itemize}\par
    Finally we note that there are non-uniform variants of these complexity classes that take advice strings (or advice states). For example:\par
    We say a $(L_{\text{yes}},L_{\text{no}})$ is in complexity class BQP/qpoly if there exists a polynomial-time uniform family of quantum circuits $V$ and a family of polynomial-size advice state $(v_{\kappa})_{\kappa\in \bN}$:
    \begin{itemize}
        \item (Completeness) For each $C\in L_{\text{yes}}$, $\Pr[V(C,v_{|C|})\rightarrow 1]\geq \frac{2}{3}$.
        \item (Soundness)  For each $C\in L_{\text{no}}$, $\Pr[V(C,v_{|C|})\rightarrow 0]\geq \frac{2}{3}$.
    \end{itemize}
\end{nota}
\begin{conv}\label{conv:1}
    Notation \ref{nota:3.5}, \ref{nota:3.8} formalize what we mean when we say a function is efficient. Note that for classical functions its efficiency could be described by either Turing machines (Notation \ref{nota:3.5}) or circuits (Notation \ref{nota:3.8}), while for quantum circuits we usually do not work on the quantum analog of Turing machines but work on quantum circuits instead.\par
    As a convention, in this paper we consider non-uniform version of these notions by default. For example, when we consider efficient quantum operations, we are considering a polynomial-time uniform family of quantum circuits $V$ together with a family of polynomial-size advice states $(v_{\kappa})_{\kappa\in \bN}$.
\end{conv}
Finally the following fact is about the description of efficient classical randomized functions.
\begin{fact}\label{fact:2}
    An efficient classical randomized function $\{0,1\}^n\rightarrow \tDist(\{0,1\}^m)$ could be described by a deterministic function on the inputs and random coins: $f:\{0,1\}^n\times\{0,1\}^L\rightarrow \{0,1\}^m$ where $L$ is the number of gates for evaluating the function.
\end{fact}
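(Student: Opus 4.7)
The plan is to start from a circuit-model (or equivalently randomized Turing machine) description of the efficient randomized function and make the randomness into an explicit input. First I would fix a gate set for randomized classical circuits consisting of the usual deterministic Boolean gates together with a single-bit coin-flip gate that, on no input, produces a fresh uniform bit independent of everything previously computed. By standard equivalences between randomized Turing machines and circuit families (Notation \ref{nota:3.5}, \ref{nota:3.8} and Convention \ref{conv:1}), any efficient classical randomized function $\{0,1\}^n \to \tDist(\{0,1\}^m)$ is computed by a polynomial-size circuit $C$ in this gate set; let $L$ be the number of gates of $C$, which is polynomial in $n$.

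The central construction is then: introduce an additional input register of length $L$ carrying a string $r \in \{0,1\}^L$, fix a topological ordering of the gates of $C$, and replace the $i$-th coin-flip gate in this ordering by a wire that simply reads off $r_i$. All other (deterministic) gates are left untouched. This yields a deterministic circuit $f : \{0,1\}^n \times \{0,1\}^L \to \{0,1\}^m$ whose size is bounded by $L$, hence still polynomial, and whose description is polynomial-time computable from that of $C$.

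Correctness is immediate: when $r$ is drawn uniformly from $\{0,1\}^L$, each substituted gate outputs a uniformly random bit independent of the preceding computation, so the joint distribution of values on all wires of $f(x, r)$ is identical to the joint distribution in a run of the original randomized $C$ on input $x$. In particular, the marginal distribution of $f(x, r)$ over uniform $r$ equals the desired output distribution on $\{0,1\}^m$.

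There is no substantive obstacle here; the statement is essentially a bookkeeping observation. The only point that requires any attention is fixing a clean gate set so that the coin-flip gates can be unambiguously enumerated and substituted, and noting that bounding $L$ by the total number of gates is sufficient (one could use strictly fewer random bits, but the stated bound is both achievable and convenient for later use in the paper).
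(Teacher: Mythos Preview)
Your proposal is correct and is the standard way to justify this observation. The paper itself states this as a \emph{fact} without proof, so there is no paper-side argument to compare against; your write-up supplies exactly the kind of routine verification one would expect here.
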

Note that when we say efficient functions we consider deterministic functions by default (but in this paper we will also encounter lots of randomized functions). Another convention is that when we use two different representations of efficient randomized functions (as in Fact \ref{fact:2}) we use semicolon to seprate the random coins: for example, the randomized function $f(x),x\in \{0,1\}^n$ as in Fact \ref{fact:2} could be represented by a deterministic function $f(x;r),x\in \{0,1\}^n,r\in \{0,1\}^L$.
\subsection{Basics of Cryptography and Quantum Cryptography}\label{sec:3.2}
Below we review the hierarchy of notions for modeling the cryptographic protocols in classical and quantum cryptography.
\subsubsection{Modeling}
\paragraph{Value and registers}As said before, it's convenient to work on registers for describing quantum states. For classical data, we could either describe them by values or say they're in some classical registers.
\paragraph{Interactive protocols} In this work we only need to consider interactive protocols between two parties; we call them either client-server or Alice-Bob. For modeling the security we also need to consider an environment, which describes everything else outside the protocol execution.\par
The protocol execution is synchronous, that is, the execution of protocols could be divided into cycles of four phases: the client (or Alice) does some local operations; the client sends messages to the server (or Bob); the server does some local operations; the server sends messages back to the client.\par
The operation of a protocol against an adversary (for example, a malicious server) could be described as
\begin{equation}\label{eq:real}\text{ProtocolName}^{\fAdv}(\tvals,\bregs)\end{equation}
Suppose the initial state (on the registers considered) is $\rho_0$, then the output state could be described as
$$\text{ProtocolName}^{\fAdv}(\tvals,\bregs)(\rho_0)$$
We say a protocol $\pi$ is a classical protocol if the operations in the honest setting are fully classical.
\paragraph{Inputs and outputs}Besides the inputs and outputs related to the tasks, the following inputs and outputs are commonly used. The protocol could take an additional input $1^\kappa$, which is called the security parameter, for formalizing the security (see below). For many problems the output registers will contain a special register $\bflag$, whose value is in $\{\fpass,\ffail\}$, where $\fpass$ means the protocol completes successfully (we also say the client accepts the results in the client-server setting) and $\ffail$ means the malicious party is caught cheating (we also say the protocol aborts or the client rejects in the client-server setting).\par
We use $\Pi_{\fpass}$ to denote the projection on register $\bflag$ to the passing space.
\paragraph{Modeling of cryptographic primitives}One typical template of definitions of cryptographic primitives is as follows. We would like the protocol to satisfy at least the following requirements:
\begin{itemize}
    \item (Correctness, or called completeness) When both parties are honest, the protocol completes successfully and the task is achieved.
    \item (Security, or called soundness) The protocol holds some security properties against a malicious attacker. For example: for any malicious attacker, if the task is not achieved correctly, the attacker is caught cheating.
    \item (Efficiency) Both parties run in polynomial time.
\end{itemize}
For the security, there are two settings:
\begin{itemize}
    \item Security in the classical world: the adversary is completely classical.
    \item Security in the (post-)quantum\footnote{``Post-quantum'' means the protocol is classical while the adversary is quantum.} world: the adversary could be quantum.
\end{itemize}
In this paper we consider security in the (post-)quantum world by default.\par
Formalizing the definition of the security is nontrivial. A paradigm for formalizing the security is based on the simulation, which we discuss below.
\subsubsection{Simulation-based security definition paradigm}
We first review the notions and notations of state family and indistinguishability.
\begin{conv}\label{conv:2}
We note that when we study the security, we consider a family of states parameterized by the security parameter $\kappa$, for example, $(\rho_{\kappa})_{\kappa\in \bN}$. For notation simplicity we simply write $\rho$ and make the security parameter and the underlying state family implicit. We note that when we talk about efficient operations there is also an implicit security parameter $\kappa$; so when we talk about $D(\rho)$ where $D$ is an efficient quantum operation, we are actually talking about $(D(1^\kappa)(\rho_\kappa))_{\kappa\in \bN}$. We omit these parameters for simplicity.
\end{conv}
\begin{defn}
    We write $\rho\approx^{\tind}_\epsilon\sigma$ if for any efficient quantum operation $D$, there exists a negligible function $\fneg$ such that
    $$|\Pr[D(\rho)\rightarrow 0]-\Pr[D(\sigma)\rightarrow 0]|\leq \epsilon+\fneg(\kappa).$$
    We say $\rho$ is $\epsilon$-indistinguishable to $\sigma$ and $D$ is called the distinguisher.
\end{defn}
As reviewed in Section \ref{sec:2.2}, the simulation-based security definition paradigm considers the indistinguishability between the ``real world'' (real execution) and the ``ideal world'' (ideal functionality). To formalize the security, we first need to define the ideal functionality $\tIdeal$. One typical template of definition for the ideal functionality is as follows:
\begin{exmp}\label{exmp:3.1}
Consider the client-server setting in this example. $\tIdeal$ receives a bit $b$ from the server and it works as follows:
\begin{itemize}
    \item If $b=0$, $\tIdeal$ sets $\bflag$ to be $\fpass$ and sets the outputs to be the honest behavior outputs.
    \item If $b=1$, $\tIdeal$ sets $\bflag$ to be $\ffail$.
\end{itemize}
\end{exmp}
Then as discussed in Section \ref{sec:2.2}, the simulation-based security is defined to be the indistinguishability of the real execution (that is, \eqref{eq:real}) and the ideal functionality composed with the simulation (that is, $\fSim\circ \tIdeal$); see also equation \eqref{eq:1} in Section \ref{sec:2.2}. If we expand the expression, it goes as follows:
\begin{exmp}
    Consider the client-server setting as in Example \ref{exmp:3.1}. We say a protocol $\pi$ is $\epsilon$-sound if the following holds.\par
    For any efficient quantum adversary $\fAdv$ there exist efficient quantum operations $\fSim=(\fSim_0,\fSim_1)$ such that for any initial state $\rho_0\in \tD(\cH_{\bS}\otimes\cH_{\bbE})$ there is
\begin{equation}\label{eq:12}\pi^{\fAdv}(\rho_0)\approx^{\tind}_{\epsilon}\underbrace{\fSim_1}_{\text{server side}}(\tIdeal(\underbrace{\fSim_0}_{\text{server side and outputs $b$}}(\rho_0)))\end{equation}
Here $\bS$ is the register corresponding to the server, $\bbE$ is the environment. Note that the distinguisher in \eqref{eq:12} has access to all the inputs, the client-side and server-side outputs and the environment.
\end{exmp}
We note that in this paper we need to consider approximate security, as in the definitions above. 
And we note that in \eqref{eq:12} the operations implicitly take the public parameters as their parameters, for example, the security parameter $\kappa$; the error tolerance $\epsilon$ could also be part of the public parameters.
\subsection{Important Cryptographic Primitives}\label{sec:3.3}
In this section we review basic cryptographic primitives and related results that will be used in our work.
\subsubsection{One-way functions and pseudorandom generators}\label{sec:3.3.1}
One-way functions and pseudorandom generators are both very basic primitives in cryptography. \cite{KLtextbook} One-way functions are functions that are easy to compute but hard to invert. Formally:
\begin{defn}[One-way function]
    We say an efficient function family $(f_\kappa)_{\kappa\in \bN},f_\kappa:\{0,1\}^\kappa\rightarrow \{0,1\}^*$ is a family of post-quantum one-way function if the following holds: for any efficient quantum operation $\fAdv$, there is a negligible function $\fneg(\kappa)$ such that
    $$\Pr_{x\leftarrow \{0,1\}^\kappa}[f(\fAdv(f(x)))=f(x)]\leq \fneg(\kappa)$$
\end{defn}
Pseudorandom genrators (PRG) are functions whose outputs are indistinguishable to random outputs. Formally:
\begin{defn}[PRG]\label{defn:PRG}
    We say an efficient function family $\tPRG(1^\kappa,1^L):\{0,1\}^\kappa\rightarrow \{0,1\}^L$ is a family of post-quantum pseudorandom generator if the following holds: for any efficient quantum operation $\fAdv$, for any $L=\fpoly(\kappa)$, there is a negligible function $\fneg(\kappa)$ such that
    \begin{equation}\label{eq:prg}\Pr\begin{bmatrix}s\leftarrow_r \{0,1\}^\kappa\\\fAdv(\tPRG(1^\kappa,1^L)(s))\rightarrow 0\end{bmatrix}\approx_{\fneg(\kappa)}\Pr\begin{bmatrix}u\leftarrow_r\{0,1\}^L\\\fAdv(u)\rightarrow 0\end{bmatrix}\end{equation}
\end{defn}
\begin{nota}Note that the notation in \eqref{eq:prg} denotes the probability of the last row when the variables are sampled as above.\par
    Another notation to express it is
    \begin{equation}\label{eq:prg2}\tDist\begin{bmatrix}s\leftarrow_r \{0,1\}^\kappa\\\tPRG(1^\kappa,1^L)(s)\end{bmatrix}\approx^{\tind}\tDist\begin{bmatrix}u\leftarrow_r\{0,1\}^L\\ u\end{bmatrix}\end{equation}
    which denotes the indistinguishability between the distributions of the last rows.
\end{nota}
A well-known fact is \cite{Zhanprf}:
\begin{thm}
    The existence of post-quantum one-way functions is equivalent to the existence of post-quantum PRGs.
\end{thm}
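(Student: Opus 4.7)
The plan is to prove both directions of the equivalence and, in the post-quantum setting, verify that every reduction step still goes through against quantum adversaries. For the easy direction (PRG $\Rightarrow$ OWF), I would observe that any length-expanding $\tPRG:\{0,1\}^\kappa\to\{0,1\}^L$ with $L>\kappa$ is itself a post-quantum OWF. Suppose for contradiction some efficient quantum $\fAdv$ inverts $\tPRG$ with non-negligible probability on seeded inputs; I would build a distinguisher $D$ that on input $y$ runs $\fAdv(y)$, checks whether the returned $s'$ satisfies $\tPRG(s')=y$, and outputs $0$ iff so. On pseudorandom inputs $D$ outputs $0$ with non-negligible probability by assumption, while on a uniform $y\in\{0,1\}^L$ the image of $\tPRG$ has density at most $2^{\kappa-L}$ so $D$ outputs $0$ with at most negligible probability; this contradicts Definition \ref{defn:PRG}.

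For the hard direction (OWF $\Rightarrow$ PRG), I would invoke the HILL construction adapted to the quantum setting, as carried out in \cite{Zhanprf}. The construction proceeds through four main steps: (1) From a post-quantum OWF $f$, extract the Goldreich--Levin hardcore predicate $b(x,r) = x\cdot r \bmod 2$, which remains hard to predict against quantum adversaries; (2) Use the pair $(f(x), r, b(x,r))$ together with universal hashing to obtain a pseudo-entropy generator whose output has noticeable computational entropy above its length; (3) Apply entropy extraction via the Leftover Hash Lemma together with HILL amplification to convert this into a one-bit-stretching PRG; (4) Apply the standard PRG length-extension construction to achieve any polynomial expansion $L(\kappa) = \fpoly(\kappa)$.

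The main obstacle, and the essential content of Zhandry's result beyond the classical HILL theorem, is verifying that each reduction preserves security against \emph{quantum} adversaries. The critical step is the quantum Goldreich--Levin theorem, whose classical proof uses rewinding-style list decoding that requires nontrivial adaptation to handle quantum distinguishers; once the quantum hardcore bit is established, the remaining steps are either information-theoretic and hence adversary-independent (such as the Leftover Hash Lemma and other statistical steps) or use straight-line black-box reductions that compose cleanly against quantum adversaries without rewinding. Combining the two directions yields the stated equivalence.
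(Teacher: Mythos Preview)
Your proposal is correct. The paper itself does not give a proof of this theorem: it simply states the result as ``a well-known fact'' and cites \cite{Zhanprf}. Your write-up supplies the actual argument that the paper defers to the citation, and the content matches what is in Zhandry's work: the easy direction via the standard distinguisher-from-inverter reduction, and the hard direction via the HILL construction with the quantum Goldreich--Levin theorem as the one step requiring genuine quantum-specific care. So there is no meaningful divergence to compare; you have filled in what the paper leaves as a black-box reference.
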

\begin{conv}\label{conv:3} In this paper we often omit ``post-quantum'' and simply say, for example, one-way functions or PRGs.
\end{conv}
\subsubsection{Commitments}\label{sec:3.3.comm}
Commitment is a very basic primitive in cryptography. There are different ways to formalize this primitive and we formalize it following the Naor's construction \cite{Naorcomm}.
\begin{defn}
    We say an efficient function family $\fCommit(1^n,1^\kappa):\{0,1\}^n\times\{0,1\}^{n\kappa}\times\{0,1\}^{3n\kappa}$ is a statistically-binding computationally-hiding commitment scheme if it satisfies:
    \begin{itemize}
        \item (Computationally hiding) For any efficient quantum adversary $\fAdv$, for any $n=\fpoly(\kappa)$, there is a negligible function $\fneg(\kappa)$ such that
        \begin{align*}
            &\Pr\begin{bmatrix}
                \fAdv\rightarrow (m_0,m_1,pp)\in \{0,1\}^{n}\times \{0,1\}^n\times \{0,1\}^{3n\kappa}\\
                r\leftarrow_r\{0,1\}^{n\kappa}\\
                \fAdv(\fCommit(m_0,r,pp))=0
            \end{bmatrix}\\\approx_{\fneg(\kappa)}&\Pr\begin{bmatrix}
                \fAdv\rightarrow (m_0,m_1,pp)\in \{0,1\}^{n}\times \{0,1\}^n\times \{0,1\}^{3n\kappa}\\
                r\leftarrow_r\{0,1\}^{n\kappa}\\
                \fAdv(\fCommit(m_1,r,pp))=0
            \end{bmatrix}
        \end{align*}
        \item (Statistically binding) For any $n=\fpoly(\kappa)$ there is a negligible function $\fneg(\kappa)$ such that
        \begin{equation}
            \Pr\begin{bmatrix}
                pp\leftarrow_r\{0,1\}^{3n\kappa}\\
                \exists (m_0,r_0,m_1,r_1):\fCommit(m_0,r_0,pp)=\fCommit(m_1,r_1,pp)\land m_0\neq m_1
            \end{bmatrix}\leq \fneg(\kappa)
        \end{equation}
    \end{itemize}
\end{defn}
Here we call $pp$ the public randomness and $r$ the sender-side randomness.
\begin{thm}[By \cite{Naorcomm}]
    There exists a statistically-binding computationally-hiding commitment scheme assuming the existence of one-way functions.
\end{thm}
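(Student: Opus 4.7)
The plan is to instantiate the classical Naor commitment scheme \cite{Naorcomm} using a length-tripling post-quantum PRG, which exists from post-quantum one-way functions by the theorem preceding Convention \ref{conv:3}. Concretely, let $G:\{0,1\}^\kappa\rightarrow\{0,1\}^{3\kappa}$ be such a PRG. Parse $m=(m_1,\ldots,m_n)\in\{0,1\}^n$, $r=(r_1,\ldots,r_n)\in(\{0,1\}^\kappa)^n$, and $pp=(pp_1,\ldots,pp_n)\in(\{0,1\}^{3\kappa})^n$. Define the $i$-th block of the output to be $G(r_i)$ if $m_i=0$ and $G(r_i)\oplus pp_i$ if $m_i=1$. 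The domain and codomain sizes match the definition exactly.

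For the \emph{computationally hiding} property, I would run a standard hybrid argument with $n$ hybrids, replacing $G(r_i)$ by a uniformly random $u_i\in\{0,1\}^{3\kappa}$ one block at a time. Each adjacent pair of hybrids differs only by one invocation of $G$ on a fresh seed, so any quantum distinguisher separating them with advantage $\delta$ yields a PRG distinguisher with advantage $\delta$, which must be negligible by Definition \ref{defn:PRG}. In the final hybrid, for any message and any $pp$, the commitment is a uniform string (since XORing uniform with anything is uniform), so the distributions corresponding to $m_0$ and $m_1$ are identical. The key point for the quantification in the definition is that the adversary outputs $(m_0,m_1,pp)$ before seeing the commitment, so the hybrid argument applies verbatim with $pp$ treated as adversarial auxiliary input; this gives total distinguishing advantage $n\cdot\fneg(\kappa)=\fneg(\kappa)$ since $n=\fpoly(\kappa)$.

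For the \emph{statistically binding} property, I would use a counting argument over $pp$. A binding collision requires $m_0\neq m_1$ together with $r_0,r_1$ such that the commitments agree; fix any index $i$ where $m_0$ and $m_1$ differ. Matching at block $i$ forces $G(r_0^{(i)})\oplus G(r_1^{(i)})=pp_i$, so the set of ``bad'' $pp_i$ has size at most $2^{2\kappa}$ (one value per ordered pair $(r_0^{(i)},r_1^{(i)})$), out of $2^{3\kappa}$ possibilities. Hence for a random $pp_i$ this probability is at most $2^{-\kappa}$, and union-bounding over the $n$ blocks (at least one of which must witness the differing bit) yields probability at most $n\cdot 2^{-\kappa}=\fneg(\kappa)$.

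There is no serious obstacle here; the only subtlety worth flagging is that the PRG must be secure against quantum distinguishers, since Convention \ref{conv:1} and the simulation-based framework of Section \ref{sec:3.2} consider quantum adversaries by default. This is handled by invoking the \emph{post-quantum} PRG from post-quantum one-way functions via \cite{Zhanprf}; the binding argument is purely combinatorial and requires no computational assumption. One could alternatively cite Naor's construction as a black box and simply verify that the parameter sizes in the definition coincide with the bit-by-bit instantiation, but writing out the two arguments above keeps the proof self-contained.
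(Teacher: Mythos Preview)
Your proposal is correct and is exactly the standard Naor construction that the theorem cites; the paper itself does not give a proof but simply attributes the result to \cite{Naorcomm}. Your bit-by-bit instantiation with a length-tripling post-quantum PRG matches the parameter sizes in the definition, and your hiding and binding arguments are the standard ones, so there is nothing to add.
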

A typical commitment protocol based on $\fCommit$ is as follows.
\begin{enumerate}
\item Committing phase:
\begin{enumerate}
    \item The receiver samples $pp\leftarrow_r \{0,1\}^{3n\kappa}$ and sends it to the sender.
    \item The sender samples and stores $r\leftarrow_r\{0,1\}^{n\kappa}$. The sender computes and sends $\fCommit(x,r,pp)$ to the receiver as the commitment.
\end{enumerate}
\item Opening phase:
\begin{enumerate}
    \item The sender sends $r$ to the receiver.
    \item The receiver checks $\fCommit(x,r,pp)=com$.
\end{enumerate}
\end{enumerate}
\subsubsection{Collapsing hash functions}\label{sec:3.3.2}
Collapsing hash functions are the quantum generalization of the collision-resistant hash functions. For convenience we first review the collision-resistant hash functions.
\begin{defn}[Collision-resistant hash functions]\label{defn:3.4}
    Suppose $\fHash(1^n,1^\kappa)$ is an efficient procedure that samples functions from $\{0,1\}^n$ to $\{0,1\}^\kappa$; we additionally require that the description length of $h$ sampled from $\fHash(1^n,1^\kappa)$ is $\fpoly(\kappa)$ (note that the description could be based on Turing machines). We say $\fHash$ is a family of collision-resistant hash functions if the following holds: for any efficient quantum adversary $\fAdv$, for any $n=\fpoly(\kappa)$, there exists a negligible function $\fneg(\kappa)$ such that
    $$\Pr\begin{bmatrix}
    h\leftarrow_r \fHash(1^n,1^\kappa)\\
    \fAdv(h)\rightarrow (x_0,x_1)\\
    h(x_0)=h(x_1)
    \end{bmatrix}\leq \fneg(\kappa)$$
\end{defn}
Note that although Definition \ref{defn:3.4} is already stated in the post-quantum setting, this definition is usually not sufficient when we need the quantum analog of collision-resistance. What we need is the following notion, the collapsing hash functions \cite{Unruh16}:
\begin{defn}
    Consider $\fHash$ whose basic set-up is as in Definition \ref{defn:3.4}. Suppose $\bX$ is a quantum register holding $n$ qubits. Define the following two operations:
    \begin{itemize}\item Define $\fEv(h,\bX)$ as follows, where $h$ is a function from $\{0,1\}^n$ to $\{0,1\}^\kappa$: $\fEv(h,\bX)$ evaluates $h$ on $\bX$ and measures and disgards the results (that is, only $\bX$ is kept in the end).
        \item Define $\fCollapse(\bX)$ as the operation that collapses $\bX$ on the standard basis.
    \end{itemize}
        We say $\fHash$ is a family of collapsing hash functions if the following holds: for any efficient quantum adversary $\fAdv$, for any $n=\fpoly(\kappa)$, there exists a negligible function $\fneg(\kappa)$ such that
        $$\Pr\begin{bmatrix}
            h\leftarrow_r \fHash(1^n,1^\kappa)\\
            \fAdv(h)\rightarrow \rho\in \tD(\cH_{\bX}\otimes \cH_{\bS})\\
            \fEv(h,\bX)\\
            \fAdv(h,\bX,\bS)\rightarrow 0
        \end{bmatrix}\approx_{\fneg(\kappa)}\Pr\begin{bmatrix}
            h\leftarrow_r \fHash(1^n,1^\kappa)\\
            \fAdv(h)\rightarrow \rho\in \tD(\cH_{\bX}\otimes \cH_{\bS})\\
            \fCollapse(\bX)\\
            \fAdv(h,\bX,\bS)\rightarrow 0
        \end{bmatrix}$$
\end{defn}
The following results are from existing works \cite{Unruh16a,Unruh16}.
\begin{thm}
    If $\fHash$ is collapsing, then it's collision-resistant.
\end{thm}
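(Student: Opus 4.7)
The plan is a standard reduction by contradiction: assume there exists an efficient quantum adversary $\fAdv^\prime$ that breaks collision resistance with non-negligible probability $\delta(\kappa)$, and construct from it an efficient quantum adversary $\fAdv$ that distinguishes $\fEv(h,\bX)$ from $\fCollapse(\bX)$ with advantage $\Omega(\delta)$, contradicting the collapsing property.

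Given $h$, $\fAdv$ first runs $\fAdv^\prime(h)$ to obtain a candidate collision $(x_0,x_1)$. It then writes both strings into an auxiliary register $\bS$ and prepares, on register $\bX$, the state
\[
\frac{1}{\sqrt{2}}\bigl(\ket{x_0}+\ket{x_1}\bigr),
\]
conditional on $h(x_0)=h(x_1)$ and $x_0\neq x_1$ (if the check fails, $\fAdv$ outputs a fixed bit; this happens with probability $1-\delta$ and only contributes uniformly to both sides, so the distinguishing gap is preserved). At this point the challenger applies either $\fEv(h,\bX)$ or $\fCollapse(\bX)$. Because $h(x_0)=h(x_1)$, the operation $\fEv(h,\bX)$ yields a deterministic hash value and leaves $\bX$ in the coherent superposition above, whereas $\fCollapse(\bX)$ projects $\bX$ onto either $\ket{x_0}$ or $\ket{x_1}$ with probability $\tfrac12$ each.

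To turn this into a distinguishing test, $\fAdv$ uses $(x_0,x_1)$ in $\bS$ to coherently compute an indicator bit into a fresh qubit $\bbb$: the map $\ket{x}\ket{x_0,x_1}\ket{0}_{\bbb}\mapsto \ket{x}\ket{x_0,x_1}\ket{[x=x_1]}_{\bbb}$ is reversible and efficient given $\bS$. In the $\fEv$ branch, $\bbb$ is in the state $\tfrac{1}{\sqrt{2}}(\ket{0}+\ket{1})$; in the $\fCollapse$ branch, $\bbb$ is in the classical mixture $\tfrac12\ket{0}\bra{0}+\tfrac12\ket{1}\bra{1}$. Measuring $\bbb$ in the Hadamard basis yields $0$ with probability $1$ in the first case and with probability $\tfrac12$ in the second, giving constant distinguishing advantage on the event that $\fAdv^\prime$ returned a valid collision. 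Overall $\fAdv$ distinguishes with advantage $\Omega(\delta)$, which is non-negligible, contradicting the collapsing assumption.

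The only delicate point is the bookkeeping around the cases $x_0=x_1$ or invalid output of $\fAdv^\prime$, and making sure the coherent indicator map on $\bbb$ does not leak information that would reduce the advantage in the $\fEv$ branch; these are handled by performing the $h(x_0)=h(x_1)\land x_0\neq x_1$ check reversibly before the indicator computation and by keeping $(x_0,x_1)$ in $\bS$ unchanged. The rest is routine, and the reduction is efficient since $\fAdv$ only adds polynomial-time classical and quantum operations on top of a single invocation of $\fAdv^\prime$.
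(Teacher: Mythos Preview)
The paper does not give its own proof of this statement; it simply cites Unruh's original works. Your overall strategy---use a collision to build a superposition on $\bX$ and then distinguish $\fEv$ from $\fCollapse$---is exactly the standard argument. However, there is a genuine gap in the distinguishing step as written.

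After you compute the indicator bit into $\bbb$ via $\ket{x}\ket{x_0,x_1}\ket{0}_{\bbb}\mapsto\ket{x}\ket{x_0,x_1}\ket{[x=x_1]}_{\bbb}$, the state in the $\fEv$ branch is
\[
\frac{1}{\sqrt{2}}\bigl(\ket{x_0}_{\bX}\ket{0}_{\bbb}+\ket{x_1}_{\bX}\ket{1}_{\bbb}\bigr)\otimes\ket{x_0,x_1}_{\bS},
\]
which is \emph{entangled} between $\bX$ and $\bbb$. Since $x_0\neq x_1$, the reduced state on $\bbb$ is the maximally mixed state $\tfrac{1}{2}I$, and a Hadamard-basis measurement on $\bbb$ returns $0$ with probability $\tfrac{1}{2}$, not $1$. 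This is identical to what happens in the $\fCollapse$ branch, so your distinguisher as described has advantage zero. Your remark about ``keeping $(x_0,x_1)$ in $\bS$ unchanged'' does not address this: the problematic entanglement is with $\bX$, not $\bS$.

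The fix is a single additional step: after computing the indicator, \emph{uncompute} $\bX$ using $\bbb$ and $\bS$ (controlled on $\bbb$, XOR $\bX$ with $x_{\bbb}$ from $\bS$). This leaves $\bX$ in $\ket{0\cdots 0}$ and $\bbb$ genuinely in $\ket{+}$ in the $\fEv$ branch, after which the Hadamard measurement works as you claim. With this correction the argument is complete and matches the standard proof.
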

\begin{thm}
    Existence of collapsing hash functions implies the existence of one-way functions.
\end{thm}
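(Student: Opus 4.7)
The plan is to route through collision-resistance and then apply the standard distillation of one-wayness from a compressing collision-resistant family. Since the previous theorem already asserts that any collapsing $\fHash$ is collision-resistant, I would focus only on the second half: turning a compressing collision-resistant hash family into a post-quantum one-way function.

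First I would pin down a compression factor by specializing to $n=2\kappa$ and considering $h \leftarrow_r \fHash(1^{2\kappa},1^\kappa)$. Let $r$ denote the randomness used to sample $h$; by the description-length stipulation in Definition \ref{defn:3.4} we have $|r|=\fpoly(\kappa)$. I would then define the candidate one-way function
$$f(r,x) := (r,\, h_r(x)), \qquad r \in \{0,1\}^{\fpoly(\kappa)},\; x\in\{0,1\}^{2\kappa},$$
whose input length is some fixed polynomial in $\kappa$. After renaming the security parameter this is an admissible OWF signature.

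To establish one-wayness, I would run a standard reduction to collision-resistance. Assume toward contradiction that some efficient quantum $\fAdv$ inverts $f$ with non-negligible probability $p=p(\kappa)$. Build a collision-finder $\fAdv'$ that samples $h \leftarrow_r \fHash$ honestly, draws $x \leftarrow_r \{0,1\}^{2\kappa}$, forms $y = h(x)$, runs $\fAdv(r,y)$ to obtain $x'$, and outputs the pair $(x,x')$. The central observation is that, conditioned on $(r,y)$, the inverter's view is independent of which preimage of $y$ the challenger happened to draw; hence, whenever $y$ has at least two preimages, $x'\neq x$ holds with probability at least $1/2$. Combined with a counting bound showing that $y=h(x)$ has multiple preimages with constant probability over random $x$ (because $h$ collapses $2^{2\kappa}$ inputs into $2^\kappa$ outputs, so at most a $2^{-\kappa}$ fraction of $x$'s can enjoy a unique preimage), I obtain that $\fAdv'$ finds a valid collision with probability $\Omega(p)$, contradicting the collision-resistance proved in the previous theorem.

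The only mildly delicate step is the symmetry argument that $\Pr[x'\neq x \mid y]\ge 1/2$ on collision-rich $y$'s; it needs to be written with care because the adversary is quantum, so ``the challenger's choice of preimage'' must be viewed as classical randomness that is independent of $\fAdv$'s internal state once $(r,y)$ is fixed. Once this symmetry is stated cleanly, the rest is routine: the counting bound is elementary, and efficiency of $\fAdv'$ follows because $\fHash$-sampling, evaluating $h_r$, and running $\fAdv$ are each $\fpoly(\kappa)$-time. I expect this symmetry-plus-counting step to be the main (though still minor) obstacle.
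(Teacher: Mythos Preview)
Your proposal is correct and follows the standard textbook reduction (collapsing $\Rightarrow$ collision-resistant $\Rightarrow$ one-way). The paper itself does not prove this theorem at all: it is stated as a known fact imported from \cite{Unruh16,Unruh16a}, with no argument given. So there is nothing to compare your approach against; you have simply supplied a proof where the paper chose to cite one.

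One small sharpening: your sentence ``whenever $y$ has at least two preimages, $x'\neq x$ holds with probability at least $1/2$'' is slightly loose as written. The clean statement is that, conditioned on $(r,y)$ with $|h_r^{-1}(y)|=k$, the challenger's $x$ is uniform over the $k$ preimages and independent of $\fAdv$'s output, so $\Pr[x=x'\mid r,y]=\tfrac{1}{k}\Pr[h_r(x')=y\mid r,y]$ and hence $\Pr[\text{collision}\mid r,y]=(1-\tfrac{1}{k})\Pr[\text{invert}\mid r,y]\geq \tfrac{1}{2}\Pr[\text{invert}\mid r,y]$ when $k\geq 2$. Combined with your $2^{-\kappa}$ bound on the mass of inputs with a unique preimage, this gives $\Pr[\text{collision}]\geq \tfrac{1}{2}(p-2^{-\kappa})$, which is non-negligible whenever $p$ is. Your observation that the quantum nature of $\fAdv$ causes no trouble here---because $x$ is classical side-information independent of $\fAdv$'s view $(r,y)$---is exactly right.
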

\begin{thm}
    There exists a family of collapsing hash functions assuming the LWE assumption.
\end{thm}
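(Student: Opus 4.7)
The plan is to follow the standard template from \cite{Unruh16a}: construct the hash via the SIS-style Ajtai construction, $h_A(x)=Ax\bmod q$ restricted to short inputs $x\in\{0,1\}^n$, where $A\in \mathbb{Z}_q^{\kappa'\times n}$ is sampled as the public hash key with parameters $(n,\kappa',q)$ chosen so that $n\gg \kappa'\log q$. First I would recall that under standard SIS (which is implied by LWE with appropriate parameters), this function is collision-resistant, giving Definition \ref{defn:3.4}. Then the real work is upgrading collision-resistance to the collapsing property in the sense required, which is the quantum-oriented strengthening.

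The key idea I would use is the \emph{lossy mode / injective mode} switching technique. I would define an alternative distribution $\mathcal{D}_{\text{inj}}$ for sampling $A$ such that the resulting $h_A$ is statistically injective on $\{0,1\}^n$ (for instance by choosing $A$ with a trapdoor-style structure so that the rows ``encode'' an LWE secret; this makes $A$ effectively tall and the map injective on short inputs). Under LWE, the honest sampling distribution of $A$ and the injective distribution $\mathcal{D}_{\text{inj}}$ are computationally indistinguishable even to efficient quantum adversaries. The first step in the argument is therefore a hybrid: replace the real key with an injective key and invoke LWE to bound the distinguishing advantage by a negligible function.

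Next I would argue that in the injective mode, collapsing holds \emph{statistically}. The point is that if $h_A$ is injective on $\{0,1\}^n$, then the operation $\fEv(h,\bX)$ followed by any purification already fully determines $\bX$ up to a negligible statistical deviation; thus $\fEv(h,\bX)$ and $\fCollapse(\bX)$ produce states that are statistically close on $(\bX,\bS)$, regardless of the adversary's preparation. Combining this with the hybrid from the previous step via the triangle inequality for the trace distance / indistinguishability then yields the full collapsing property against efficient quantum adversaries.

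The main obstacle is the second step: carefully designing the injective-mode distribution and proving the statistical collapsing bound without introducing a gap that an adversary could exploit quantumly. Two subtleties must be handled: (i) making sure the injective-mode parameters still admit an LWE-based indistinguishability proof (so the lossy/injective switch is sound against quantum distinguishers, which requires post-quantum LWE hardness rather than just classical hardness), and (ii) handling the fact that in the collapsing game the adversary is allowed to prepare an arbitrary superposition over $\bX$, so the statistical collapsing argument in the injective mode must be uniform over all such superpositions, which is where injectivity (as opposed to mere collision-resistance) is essential. Once these are nailed down, the theorem follows by composing the LWE-based computational step with the statistical collapsing step in the injective mode, exactly mirroring the structure of Unruh's argument in \cite{Unruh16a}.
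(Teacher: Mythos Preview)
Your proposal is a correct sketch of the Unruh argument from \cite{Unruh16a}, which is exactly what the paper cites; the paper itself gives no proof and simply states this as a known result from the literature. So there is nothing to compare in terms of approach.
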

The LWE \cite{regevLWE} assumption is a widely-accepted assumption with security against quantum computations.\par
\begin{thm}
    There exists a family of collapsing hash functions in the quantum random oracle model.
\end{thm}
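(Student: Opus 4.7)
The plan is to instantiate the hash family with the random oracle itself. For each input length $n=\fpoly(\kappa)$, define $\fHash(1^n,1^\kappa)$ to sample a function $h:\{0,1\}^n\to\{0,1\}^\kappa$ uniformly at random, provided as oracle access. We must show that for any adversary making $q=\fpoly(\kappa)$ quantum queries, the two experiments $\fEv(h,\bX)$ and $\fCollapse(\bX)$ produce computationally (in fact statistically) indistinguishable output states. The natural framework is Zhandry's compressed oracle technique, which lets us simulate $H$ exactly via an internal quantum database register $\bD$ while only ``recording'' points actually queried in superposition.

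The first step is to describe the joint state right before the challenge operation. After $\fAdv$ has issued its $q_1\leq q$ preparation queries and produced $\rho\in\tD(\cH_{\bX}\otimes\cH_{\bS})$, the compressed-oracle formalism gives a purified joint state on $\bX\otimes\bS\otimes\bD$ in which $\bD$ is a superposition of small partial function tables, each of size at most $q_1$. The challenge operation $\fEv(h,\bX)$ is implemented by (i) querying the compressed oracle on the input register $\bX$ into an auxiliary register $\bY$, (ii) measuring and discarding $\bY$, and finally (iii) uncomputing the query. The goal is to show this is statistically close to measuring $\bX$ in the computational basis.

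The key lemma is that, conditioned on any fixed database outcome, the map ``hash then measure the image'' acts on $\bX$ as a measurement in a basis refining the computational basis, up to collisions in the database. Concretely, split the support of $\bX$ into (a) inputs already recorded in $\bD$ and (b) fresh inputs. On the recorded part, hashing yields a value fully determined by $\bD$, so measuring it imposes no new constraint on $\bX$ beyond what the database already encodes. On the fresh part, the compressed oracle appends a uniformly random value to $\bD$ for each new $x$ in the superposition; measuring the image then projects $\bX$ onto preimages of the observed value. Using the standard QROM collision bound $O(q^3/2^\kappa)$, distinct $x$'s in the support receive distinct images except with negligible probability, so the measurement outcome uniquely identifies $x$. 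Thus the post-measurement state is $\fneg(\kappa)$-close in trace distance to the one obtained by $\fCollapse(\bX)$, and the final uncomputation step is identical in both experiments.

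The main obstacle is the careful bookkeeping when $\bX$ is already entangled with $\bD$ from the preparation phase: one cannot simply claim ``fresh randomness'' for the challenge query. The cleanest way to handle this is a hybrid argument in which the challenge query is replaced by a direct measurement of the database entry at position $\bX$ (well-defined only after the database is extended to include $\bX$'s support), with the closeness of the two hybrids controlled by the compressed-oracle indistinguishability lemma of Zhandry and the QROM collision bound. Assembling the pieces yields, for any efficient $\fAdv$, a negligible $\fneg(\kappa)$ such that the two distinguishing probabilities differ by at most $O(q^3/2^\kappa)+\fneg(\kappa)$, establishing the collapsing property.
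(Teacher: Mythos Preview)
The paper does not give its own proof of this statement; it is listed among ``results from existing works'' and attributed to Unruh. So there is no detailed argument in the paper to compare against. That said, the standard proof (Unruh's) is quite different from what you propose: it does not use compressed oracles at all. Instead, one shows that any distinguisher between $\fEv$ and $\fCollapse$ can be converted into a collision finder. Concretely, after the $\fEv$-measurement records $y$, the register $\bX$ is supported on $h^{-1}(y)$; running the distinguisher's unitary, measuring $\bX$, rewinding, and measuring $\bX$ again yields two preimages of $y$ that differ with probability polynomially related to the distinguishing advantage. Since quantum collision finding against a random oracle succeeds with probability at most $O(q^3/2^\kappa)$, the collapsing advantage is negligible. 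This route is short and avoids any database bookkeeping.

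Your compressed-oracle route is plausible in principle, but the sketch as written has a real gap. The sentence ``using the standard QROM collision bound $O(q^3/2^\kappa)$, distinct $x$'s in the support receive distinct images except with negligible probability, so the measurement outcome uniquely identifies $x$'' is not correct: the support of $\bX$ can be exponentially large (e.g.\ all of $\{0,1\}^n$), and a function into $\{0,1\}^\kappa$ certainly has many collisions there. The $O(q^3/2^\kappa)$ bound limits an adversary's ability to \emph{find} a collision with $q$ queries; it says nothing about injectivity on the support of an arbitrary superposition. What actually does the work in a compressed-oracle proof is different: after the challenge query, the database $\bD$ holds the entry $(x,y)$ in each branch, and for distinct $x$ these database states are \emph{orthogonal}, so tracing out $\bD$ already collapses $\bX$. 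The remaining task is to argue that $A_2$'s $q_2$ subsequent queries cannot undo this collapse, which requires bounding how much those queries can ``read out'' the $(x,y)$ entry and re-cohere the branches; this is where a genuine hybrid or database-extraction lemma is needed, and your last paragraph gestures at it without supplying it. Also, step (iii) ``uncomputing the query'' is spurious: once $\bY$ is measured and discarded, there is nothing to uncompute, and the definition of $\fEv$ does not call for it.
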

The quantum random oracle model (QROM) \cite{QRO} is a widely-used idealized model for modeling hash functions. Although the constructions in QROM are not always instantiable \cite{rorevisited,ES20}, it provides a good evidence that hash functions used in reality, like SHA-3, are indeed collapsing.
\subsubsection{Secure two-party computations}\label{sec:3.3.3}
Below we review the notion of secure two-party computations \cite{GB01}. We first formalize its set-up. Here we consider its general form, where both parties have private inputs and both parties receive private outputs.
\begin{setup}\label{setup:twopc}The set-up for general secure two-party computations is as follows. We consider a protocol between Alice and Bob.\par
    The inputs of the protocol are: \begin{itemize}\item Public parameters: problem sizes $1^{n_A},1^{n_B},1^{m_A},1^{m_B}$, where $A$ means Alice, $B$ means Bob, $n$ means the input size and $m$ means the output size; security parameter $\kappa$.
        \item Public efficient classical functions $f_{A}:\{0,1\}^{n_A}\times\{0,1\}^{n_B}\rightarrow \{0,1\}^{m_A}$, $f_{B}:\{0,1\}^{n_A}\times\{0,1\}^{n_B}\rightarrow \{0,1\}^{m_B}$.
        \item Alice-side input $x_{A}\in \{0,1\}^{n_A}$,  Bob-side input $x_{B}\in \{0,1\}^{n_B}$.
    \end{itemize}
    The outputs of the protocol are stored in the following registers: Alice-side $\bflag^{(A)}$ holding values in $\{\fpass,\ffail\}$, $\boldy^{(A)}$ holding values in $\{0,1\}^{m_A}$, Bob-side $\bflag^{(B)}$ holding values in $\{\fpass,\ffail\}$, $\boldy^{(B)}$ holding values in $\{0,1\}^{m_B}$.\par
    Either Alice or Bob could be malicious. For modeling the initial states in the malicious setting, when Alice is malicious, suppose the Alice-side registers are denoted by $\bS^{(A)}$; when Bob is malicious, suppose the Bob-side registers are denoted by $\bS^{(B)}$. The environment is denoted by $\bbE$.\par
    The completeness and soundness are in Definition \ref{defn:3.6}, \ref{defn:3.7} below. The efficiency is defined in the common way.
\end{setup}
\begin{defn}\label{defn:3.6}
    We say a protocol under Set-up \ref{setup:twopc} is complete if when both parties are honest, the output state of the protocol is negligibly close to the following state: $\bflag^{(A)},\bflag^{(B)}$ holds the value $\fpass$, $\boldy^{(A)}$ holds $f_A(x^{(A)},x^{(B)})$, $\boldy^{(B)}$ holds $f_B(x^{(A)},x^{(B)})$.
\end{defn}
\begin{defn}\label{defn:3.7}
    To define the soundness under Set-up \ref{setup:twopc}, we first define $\tTwoPCIdeal^{(A)},\tTwoPCIdeal^{(B)}$ as follows.\par
    $\tTwoPCIdeal^{(B)}$ works as follows. It interacts with the honest Alice and malicious Bob. Note that the Alice's private inputs are prepared in advanced but Bob's private inputs could be adversarily generated.
    \begin{enumerate}
\item $\tTwoPCIdeal^{(B)}$ receives a bit $b\in \{0,1\}$ from Bob, then:
\begin{itemize}
    \item (Early abort option) If $b=1$, $\tTwoPCIdeal^{(B)}$ sets $\bflag^{(A)}$ to be $\ffail$.
    \item If $b=0$, $\tTwoPCIdeal^{(B)}$ takes $x^{(A)}$ from Alice's inputs and take $x^{(B)}$ from Bob. Then $\tTwoPCIdeal^{(B)}$ stores $f_B(x^{(A)},x^{(B)})$ in $\boldy^{(B)}$. Then:
    \begin{enumerate}
        \item (Early abort option) $\tTwoPCIdeal^{(B)}$ receives a bit $b^\prime\in \{0,1\}$ from Bob, then:
        \begin{itemize}\item If $b=1$, $\tTwoPCIdeal^{(B)}$ sets $\bflag^{(A)}$ to be $\ffail$.\item If $b=0$, $\tTwoPCIdeal^{(B)}$ sets $\bflag^{(A)}$ to be $\fpass$, and stores $f_A(x^{(A)},x^{(B)})$ in $\boldy^{(A)}$.\end{itemize}
    \end{enumerate}
\end{itemize}
\end{enumerate}
$\tTwoPCIdeal^{(B)}$ is defined similarly by swapping the role of Alice and Bob in the definition above.\par
    We say a protocol $\pi$ under Set-up \ref{setup:twopc} is sound if:
    \begin{itemize}
        \item (Soundness against malicious Bob) For any efficient quantum adversary $\fAdv$ playing the role of Bob, there exist efficient quantum operations $\fSim$ such that for any polynomial-size inputs $1^{n_A},1^{n_B},1^{m_A},1^{m_B},f_A,f_B,x^{(A)}$ and any initial state $\rho_0\in \tD(\cH_{\bS^{(B)}}\otimes \cH_{\bbE})$:
        \begin{equation}\label{eq:2pcsec}\pi^{\fAdv}(\rho_0)\approx^{\tind}(\underbrace{\fSim}_{\text{working on the server side and interacting with $\tTwoPCIdeal^{(B)}$}}\circ\tTwoPCIdeal^{(B)})(\rho_0)\end{equation}
        We note that the distinguisher in \eqref{eq:2pcsec} has access to all the inputs above, client-side states, server-side states and the environment.
        \item (Security against malicious Alice) Similar to the definition above by swapping Alice and Bob in the definition.
    \end{itemize}
\end{defn}
\begin{conv}\label{conv:4}
    We make the following convention for working on protocol design and analysis.\par
    First, in this paper, we use ``Set-up'' to organize the set-up of protocols and notions, as in this section. Then when we design protocols, simply refering to a set-up gives lots of information including the inputs, output registers and security definitions. For readability we still review the inputs when we describe protocols.\par
    For simplicity of expressions, we might omit the inputs that are public and given by values. Note that in \eqref{eq:2pcsec} above we already make the inputs $1^{n_A},f_A$, etc implicit for simplicity.
\end{conv}
We need the following theorem in our work, which is proved in \cite{BCKM20,GLSV20}.
\begin{thm}\label{thm:2pcowf}
    Assuming the existence of one-way functions, there exists a quantum protocol for general secure two-party computations.
\end{thm}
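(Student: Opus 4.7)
The plan is to reproduce the high-level construction strategy of \cite{BCKM20,GLSV20}, which proceeds in two stages: first build a maliciously-secure quantum oblivious transfer (QOT) from one-way functions, and then use standard OT-based compilers together with Yao's garbled circuits (also known to exist from OWFs) to obtain general secure two-party computation. The reason OT is the right target is that, classically, OT is a complete primitive for general 2PC but requires public-key assumptions; the quantum advantage here is precisely that, using quantum communication, OT can be based on OWFs.

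For the first stage, I would start from a semi-honest QOT in the style of Crepeau--Kilian: the sender transmits random BB84 states $H^{\theta_i}\ket{x_i}$ for random $\theta_i,x_i\in \{0,1\}$, the receiver measures each qubit in either the computational or Hadamard basis according to its choice bit $c$, and after a basis announcement the two parties distill the standard OT correlation from the subset of qubits where the bases agreed. To lift this to malicious security I would use Naor's statistically-binding computationally-hiding commitment scheme (which exists from OWFs, see Section \ref{sec:3.3.comm}): the receiver commits to its measurement bases and outcomes before the sender reveals $\theta$, and a cut-and-choose phase opens a random subset of commitments to detect deviations from the honest measurement strategy. A symmetric commit-and-check mechanism, combined with post-quantum zero-knowledge arguments for NP (also constructible from OWFs in the quantum setting via the techniques leading to \cite{BCKM20,GLSV20}), handles a malicious sender who might prepare entangled rather than BB84 states. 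The resulting protocol admits a simulator on either side by extracting from the commitments and equivocating on the cut-and-choose challenge.

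For the second stage, once a maliciously-secure QOT is available, general 2PC follows by standard (classical) compilers. Concretely, Alice garbles the circuit computing $(f_A,f_B)$ using Yao's construction instantiated from OWFs, transmits her own input labels directly, and transmits Bob's input labels through the QOT with Bob's input bits as choice bits; Bob evaluates the garbled circuit to recover $f_B(x_A,x_B)$ and sends back Alice's output share. Malicious security of Yao in the standalone setting is then obtained by sandwiching the protocol with zero-knowledge proofs of consistent garbling and input behavior, again from OWFs.

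The main obstacle is the malicious-secure QOT step: the simulator must be able to extract a malicious receiver's effective choice bit and a malicious sender's effective messages while simultaneously dealing with the fact that quantum rewinding is delicate. The key technical ingredients to import from \cite{BCKM20,GLSV20} are the extractable commitments and post-quantum zero-knowledge that tolerate quantum rewinding, together with the analysis of the cut-and-choose over BB84-state transmissions that bounds the trace distance between the real execution and the ideal OT functionality. Once QOT with simulation-based malicious security is in hand, the rest of the construction is a routine OWF-only compilation that matches Set-up \ref{setup:twopc} and Definitions \ref{defn:3.6}, \ref{defn:3.7}.
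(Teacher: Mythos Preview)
The paper does not give its own proof of this theorem: it is stated as a cited result, attributed directly to \cite{BCKM20,GLSV20}. Your proposal correctly recognizes this and sketches the construction strategy of those works (BB84-based QOT from OWF-based commitments, then OT-to-2PC compilation via garbled circuits), which is an accurate high-level summary of what those references do. So there is nothing to compare against in the paper itself, and your sketch is consistent with the cited sources.
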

\subsubsection{Succinct arguments, succinct-AoK and succinct-ZK-AoK}\label{sec:3.3.4}
In this section we review the notion of succinct arguments, succinct arguments of knowledge (succinct-AoK) and succinct zero-knowledge arguments of knowledge (succinct-ZK-AoK) for NP. \cite{Kilian92,CMSZ21}
\begin{setup}\label{setup:sa}
    The set-up for succinct arguments / succinct-AoK / succinct-ZK-AoK for NP is as follows. We consider a classical protocol between a client and a server.\par
    The protocol takes the following public parameters and inputs: problem sizes $1^n,1^m$, security parameter $1^\kappa$, an efficient classical relation $R:\{0,1\}^n\times\{0,1\}^m\rightarrow \{0,1\}$, an input $x\in \{0,1\}^n$.\par
    The server holds a classical register $\bS_w$ whose value is in $\{0,1\}^m$.\par
    The client-side output register is $\bflag$ whose value is in $\{\fpass,\ffail\}$.\par
    For succinct arguments and succinct-AoK the server is the malicious party. For succinct-ZK-AoK both parties could be malicious. For modeling the initial states in the malicious server setting, suppose the overall server-side registers are denoted by $\bS$. For modeling the initial states in the malicious client setting, the client-side registers are denoted by $\bbC$ and the environment is denoted by $\bbE$.\par
    The completeness is in Definition \ref{defn:3.8} below. The soundness and zero-knowledge property is as follows.\begin{itemize}\item If we consider the succinct arguments primitive, the soundness is in Definition \ref{defn:3.9}; \item if we consider the succinct-AoK, the soundness is in Definition \ref{defn:3.10} below; if we consider the succinct-ZK-AoK, the soundness is in Definition \ref{defn:3.10} below and the zero-knowledge property is in Definition \ref{defn:3.zk} below.\end{itemize} The efficiency is defined in the common way. We additionally want the property that the size of the client-side computation is $\fpoly(n,\kappa)$.
\end{setup}
\begin{defn}\label{defn:3.8}
    We say a protocol under Set-up \ref{setup:sa} is complete if when the server is honest and initially $\bS_w$ holds a value $w$ that satisfies $R(x,w)=1$, the client accepts with probability $\geq 1-\fneg(\kappa)$.
\end{defn}
\begin{defn}\label{defn:3.9}
    We say a protocol under Set-up \ref{setup:sa} (for succinct arguments primitive) is sound if the following holds. For any efficient quantum adversary $\fAdv$ as the server, any polynomial-size inputs $1^n,1^m,R,x$, any initial state $\rho_0\in \tD(\cH_{\bS})$, if there is no $w$ that satisfies $R(x,w)=1$, the client rejects with probability $\geq 1-\fneg(\kappa)$.
\end{defn}
\begin{defn}\label{defn:3.10}
    We say a protocol under Set-up \ref{setup:sa} (for succinct-AoK primitive) is $(\delta,\epsilon)$-sound if the following holds. For any efficient quantum adversary $\fAdv$ as the server there exists an efficient quantum operation $\fExt$ such that for any polynomial-size inputs $1^n,1^m,R,x$, for any initial state $\rho_0\in \tD(\cH_{\bS})$, if the client accepts with probability $\geq 1-\delta$, there is
    $$\tr(\Pi_{w:R(x,w)=1}^{\bS_{w}}\fExt(\rho_0))\geq 1-\epsilon-\fneg(\kappa)$$ 
\end{defn}
\begin{defn}\label{defn:3.zk}
    We say a protocol $\pi$ under Set-up \ref{setup:sa} (for succinct-ZK-AoK primitive) is zero-knowledge if the following holds. For any efficient quantum adversary $\fAdv$ as the client there exists an efficient quantum operation $\fSim$ such that for any polynomial-size inputs $1^n,1^m,R,x$, for any initial state $\rho_0\in \tD(\cH_{\bbC}\otimes\cH_{\bbE})$ and any $w$ in register $\bS_w$ such that $R(x,w)=1$:
    \begin{equation}\label{eq:17}
        \pi^{\fAdv}(\rho_0)\approx^{\tind}\underbrace{\fSim}_{\text{client side}}(\rho_0)\otimes\underbrace{\ket{w}\bra{w}}_{\bS_w}
    \end{equation}
    Note that the distinguisher in \eqref{eq:17} also has access to all the inputs, which are omitted in the expression\footnote{Note that since the statement is quantified with ``for all inputs'', whether or not we explicitly give these inputs to the distinguisher actually do not matter: the distinguisher could always hardcode them. In the later content we will not care about this detail.}.
\end{defn}
By Kilian's work \cite{Kilian92} and the proof of its post-quantum security \cite{CMSZ21}, there is:
\begin{thm}\label{thm:existzk}
    Assuming the existence of collapsing hash functions, there exists a succinct-ZK-AoK protocol for NP that is $(\delta,O(\delta))$-sound for all $\delta>0$.
\end{thm}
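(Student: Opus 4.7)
The plan is to instantiate Kilian's succinct argument via PCPs and Merkle trees, where the Merkle tree is built using a collapsing hash function rather than just a collision-resistant one, and to use a zero-knowledge PCP to obtain the ZK property. Concretely, given a relation $R$ and instance $x$, the prover would (i) compute a zero-knowledge PCP proof $\pi$ for ``$\exists w:R(x,w)=1$'' with polylogarithmic query complexity, (ii) build a Merkle tree over $\pi$ using $h \leftarrow_r \fHash(1^*,1^\kappa)$ and send the root $c$ to the verifier, (iii) receive the PCP query locations $(q_1,\dots,q_k)$, (iv) respond with the symbols at these positions together with Merkle authentication paths, and (v) let the verifier check both the paths against $c$ and the local PCP predicate.

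First I would argue completeness, succinctness, and efficiency. Completeness is immediate from PCP completeness and correctness of the Merkle tree. Succinctness follows because a PCP for NP has query complexity $\fpoly(\log|C_R|,\kappa)$, so the transcript and the verifier's work are both $\fpoly(n,\kappa)$ (the verifier never materializes the full PCP or circuit $C_R$; it only evaluates local PCP predicates on the $k$ opened positions). The $\fpoly(n,\kappa)$ bound on the client-side computation required by Set-up \ref{setup:sa} therefore follows.

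Next I would obtain the $(\delta,O(\delta))$-AoK by invoking the post-quantum extraction result of \cite{CMSZ21}. The core idea is that against a malicious quantum prover $\fAdv$ that makes the verifier accept with probability $\geq 1-\delta$, one uses quantum state-repair / measure-and-restore techniques to extract answers to many independent query patterns without collapsing the prover's state. The collapsing property of $\fHash$ is exactly what makes this work: opening one set of Merkle positions is indistinguishable (to the distinguisher induced by $\fAdv$'s next-message function) from measuring those positions in the standard basis, so successive extractions can be stitched together and the extractor obtains answers consistent with a single committed string. Then PCP soundness lets us decode this string into a witness $w$ with $R(x,w)=1$, except with probability $O(\delta)+\fneg(\kappa)$. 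This yields an extractor $\fExt$ satisfying Definition \ref{defn:3.10}.

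For zero-knowledge I would use a ZK-PCP, whose defining property is that the joint distribution of any admissible query set's answers can be efficiently simulated without the witness. The simulator $\fSim$ against a malicious verifier $\fAdv^*$ would commit to a Merkle tree of $\fpoly(\kappa)$-many freshly simulated symbols (using the computational hiding implicit in collapsing / the ZK-PCP simulator's distribution over local views), extract $\fAdv^*$'s queries, and open the corresponding positions using the simulated values; indistinguishability from the real interaction follows from a hybrid that first replaces the real PCP with the ZK-PCP simulation and then uses collision-resistance of $\fHash$ to argue that the verifier cannot exploit the committed-but-unopened positions. The main obstacle is unambiguously the AoK step: classical rewinding is unavailable in the quantum setting, and the delicate measure/repair analysis of \cite{CMSZ21} is what both (a) forces the use of the collapsing assumption rather than mere collision-resistance, and (b) produces the linear $O(\delta)$ loss rather than the negligible loss one gets classically. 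All other pieces (PCP, Merkle tree, ZK-PCP) can be imported essentially as black boxes.
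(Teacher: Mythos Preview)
The paper does not give its own proof of this theorem; it is stated as a known preliminary with a one-line citation to Kilian's construction \cite{Kilian92} and its post-quantum analysis \cite{CMSZ21}. Your proposal expands exactly this route---Kilian's PCP-plus-Merkle argument instantiated with a collapsing hash, the CMSZ state-repair extractor for the $(\delta,O(\delta))$ AoK bound, and a ZK-PCP for zero-knowledge---so it matches the paper's approach.

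One caveat on your ZK sketch: the simulator you describe (``commit to simulated symbols \ldots\ extract $\fAdv^*$'s queries \ldots\ open those positions'') cannot work as written, because the Merkle commitment is statistically binding and the queries arrive only after the root is sent; you cannot decide post hoc what to open. The standard fix is Watrous-style quantum rewinding (guess the query set, commit to ZK-PCP-simulated answers at those positions and garbage elsewhere, check whether the verifier's queries match, reset on mismatch), which you do not mention. This is a technical omission rather than a wrong approach, but it is precisely the point where post-quantum ZK needs care, so your claim that ZK can be ``imported essentially as a black box'' while only AoK is delicate understates the issue.
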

We note that the $\delta,\epsilon$ parameters in our set-up are different from the $\epsilon$ parameter in \cite{CMSZ21}: we define the parameters to be the failing errors, while \cite{CMSZ21} defines $\epsilon$ to be the success probability.
\subsubsection{Yao's garbled circuits}\label{sec:3.3.5}
Yao's garbled circuits is a basic and famous construction in secure computations. \cite{YaoGCOrigin} It achieves a notion called decomposable randomized encoding. \cite{AppleBaum2017} In this section we review these notions.
\begin{defn}\label{defn:3.gc}
    A decomposable randomized encoding scheme is defined to be efficient classical algorithms $(\fGarbleC,\fGarbleI,\fDc)$ where:
    \begin{itemize}
        \item $\fGarbleI$ is a deterministic algorithm that takes $1^n,1^\kappa$ as parameters, takes $x\in \{0,1\}^n,r\in \{0,1\}^{n\times\kappa}$ and calculates the encoding of $x$ using $r$. This part is called the input encoding and $r$ is the shared randomness used in both $\fGarbleI$ and $\fGarbleC$.
        \item Use $\cC_n$ to denote the set of circuits with input length $n$. $\fGarbleC$ takes $1^n,1^\kappa$ as parameters, takes $C\in \cC_n,r\in \{0,1\}^{n\times\kappa}$ and calculates the encoding of $C$ using $r$. This part is called the circuit encoding. Note that $\fGarbleC$ itself could be a randomized algorithm; its deterministic function description that makes its internal random coins explicit could be written as $\fGarbleC(C,r;r_{\text{GCin}})$ (see Fact \ref{fact:2}).
        \item $\fDc$ is a deterministic algorithm that takes the outputs of $\fGarbleC(C,r)$ and $\fGarbleI(x,r)$ as inputs and outputs $C(x)$.
    \end{itemize}
    Here ``decomposable'' means that the encoding of $C(x)$ could be decomposed into the input encoding and the circuit encoding separately. What $\fGarbleC$ and $\fGarbleI$ share is only the shared random coins $r$.\par
        
    And they satisfy the following properties:
    \begin{itemize}
        \item (Completeness) For any polynomial-size $1^n$, any $x,C$:
        $$\Pr_{r\leftarrow\{0,1\}^{n\kappa}}[\fDc(\fGarbleC(C,r),\fGarbleI(x,r))=C(x)]\geq 1-\fneg(\kappa)$$
        \item (Soundness) There exists an efficient classical algorithm $\fSim$ such that for any polynomial-size $1^n$, any $x,C$:
        \begin{equation}\label{eq:14}\fSim(C(x))\approx^{\tind}\tDist\begin{bmatrix}r\leftarrow_r\{0,1\}^{n\kappa}\\ (\fGarbleC(C,r),\fGarbleI(x,r))\end{bmatrix}\end{equation}
    \end{itemize}
\end{defn}

By Yao's famous construction \cite{YaoGCOrigin} there is:
\begin{thm}
    Assuming the existence of one-way functions, Yao's Garbled Circuits a decomposable randomized encoding scheme.
\end{thm}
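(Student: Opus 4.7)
The plan is to instantiate the construction, verify the decomposability structure directly from the form of the construction, then establish completeness by inspection and soundness through the standard hybrid argument that appears in textbook treatments of Yao's garbled circuits.

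First I would set up the construction. Fix a circuit $C\in \cC_n$ with topology $\text{topo}$ and gates $(g_i)_{i\in[|C|]}$. The shared randomness $r\in \{0,1\}^{n\kappa}$ is parsed as $n$ seeds $(r_i)_{i\in[n]}$ of length $\kappa$; each seed $r_i$ is expanded via a PRG (existing by Section \ref{sec:3.3.1} from one-way functions) to produce two wire labels $(k_i^0,k_i^1)$ for input wire $i$. The internal coins $r_{\text{GCin}}$ of $\fGarbleC$ are used to sample two labels $(k_j^0,k_j^1)$ for every non-input wire and to generate the randomness of the symmetric encryption used in the garbled tables. For each gate $g$ with input wires $a,b$ and output wire $c$, the garbled table consists of the four ciphertexts $\fEn_{k_a^\alpha}(\fEn_{k_b^\beta}(k_c^{g(\alpha,\beta)}))$ for $(\alpha,\beta)\in\{0,1\}^2$, in randomly permuted order, where $\fEn$ is a symmetric encryption scheme built from the PRG. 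The output wires have their label-to-value mapping appended. Then $\fGarbleC(C,r;r_{\text{GCin}})$ outputs the list of garbled tables together with the output translation table; $\fGarbleI(x,r)$ outputs the $n$ input labels $(k_i^{x_i})_{i\in[n]}$. Decoding $\fDc$ proceeds gate-by-gate: given the two input labels of a gate, trial-decrypt the four ciphertexts (using standard verifiable-decryption padding) to obtain the output label, and finally translate output wire labels into bits. Decomposability is immediate from this description: $\fGarbleI$ only reads $x$ and the input-wire seeds derived from $r$, and $\fGarbleC$ never reads $x$.

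Completeness follows by induction over the topological ordering of the circuit: if the labels propagated into gate $g$ are $k_a^{\alpha}, k_b^{\beta}$ with $\alpha,\beta$ equal to the honest wire values on input $x$, then exactly one ciphertext in the garbled table decrypts (with overwhelming probability in $\kappa$, using the verifiability padding of $\fEn$) to $k_c^{g(\alpha,\beta)}$, which is again the honest value on wire $c$. Taking a union bound over the $\fpoly(n,\kappa)$ gates and using the negligibility of any single decryption failure yields $\Pr[\fDc(\cdots)=C(x)]\geq 1-\fneg(\kappa)$.

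For soundness I would build the simulator $\fSim$ in the standard way: on input $y=C(x)$, $\fSim$ samples a single ``active'' label $k_j^*$ for every wire $j$, fills every slot of every garbled table with an encryption of the active label of the outgoing wire (so all four ciphertexts decrypt to the same label), sets the output translation table so that the active label of each output wire maps to the corresponding bit of $y$, and outputs the active labels of the input wires as the simulated $\fGarbleI$. The indistinguishability in \eqref{eq:14} is proved by a hybrid argument that replaces real garbled tables with simulated ones one gate at a time, walking through the circuit in topological order. The single-step indistinguishability uses the PRG-based semantic security of $\fEn$: since the inactive label $k_j^{1-\alpha}$ of any wire $j$ never appears as a plaintext anywhere it is only ever used as an encryption key, and by the PRG/PRF security it can be replaced by truly random keys, under which the ciphertexts encrypting inactive labels are indistinguishable from ciphertexts encrypting the active label. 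The main obstacle, and the step that consumes most of the formal effort, is bookkeeping this hybrid cleanly: one has to argue that in each intermediate hybrid the set of ``revealed'' labels is exactly $\{k_j^{\alpha_j}\}_j$ for the honest wire values $\alpha_j$, so that the inactive labels remain information-theoretically hidden keys whose pseudorandomness can be invoked. This is handled by the standard oblivious-gate-evaluation lemma, and the composition of $\fpoly(n,\kappa)$ such indistinguishabilities yields \eqref{eq:14}, completing the proof.
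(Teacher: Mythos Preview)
The paper does not prove this theorem at all; it simply cites it as Yao's famous construction \cite{YaoGCOrigin} and moves on. Your proposal is a correct outline of the standard textbook proof (construct garbled tables from a PRG-based symmetric scheme, verify decoding gate-by-gate, and prove simulation security via the topological-order hybrid that swaps inactive-label ciphertexts using semantic security), so there is no discrepancy to discuss beyond the fact that you are supplying a proof where the paper supplies only a citation.
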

As discussed in Section \ref{sec:2.7.3}, in this work we need a relatively stronger form of soundness for Yao's scheme, which is formalized below.
\begin{defn}\label{defn:3.14gc}
    We say a garbling scheme under Definition \ref{defn:3.gc} is sound with input-first simulation if there exist efficient classical algorithms $\fSim_{\text{ie}},\fSim_{\text{ce}}$ such that for any polynomial-size $1^n$, any $x,C$:
    \begin{equation}
        \tDist\begin{bmatrix}
            \tilde{x}\leftarrow\fSim_{\text{ie}}()\\ (\tilde{x},\fSim_{\text{ce}}(\tilde{x},C(x)))
        \end{bmatrix}\approx^{\tind}\tDist\begin{bmatrix}r\leftarrow_r\{0,1\}^{n\kappa}\\
            \hat{x}:=\fGarbleI(x,r)\\ (\hat{x},\fGarbleC(C,r))\end{bmatrix}
    \end{equation}
\end{defn}
Note that we use $:=$ to define new variables in the expressions.
\begin{thm}\label{thm:yaogc}
    Assuming the existence of one-way functions, Yao's Garbled Circuits a decomposable randomized encoding scheme with soundness under Definition \ref{defn:3.14gc}.
\end{thm}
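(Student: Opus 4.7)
The plan is to exhibit the two simulators explicitly and then prove the distribution equivalence in Definition \ref{defn:3.14gc} by a gate-by-gate hybrid argument of the kind used in the standard soundness proof of Yao's scheme. The central observation is that in Yao's construction, the marginal distribution of $\fGarbleI(x,r)$ over a random $r$ is, up to a PRG stretch, simply a tuple of $n$ uniformly random $\kappa$-bit labels, one for each input wire; the dependence on $x$ only determines \emph{which} of the two labels per wire is revealed, but since both labels are freshly sampled random strings, the revealed label is marginally uniform and independent of $x$. This is exactly what makes the input-first simulation possible.

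Concretely, I would define $\fSim_{\text{ie}}()$ to output $n$ independent uniform strings $(\tilde{L}_1,\ldots,\tilde{L}_n)\leftarrow_r (\{0,1\}^\kappa)^n$, interpreting each $\tilde{L}_i$ as the ``active'' label on input wire $i$. Then $\fSim_{\text{ce}}(\tilde{x}, y)$ with $\tilde{x}=(\tilde{L}_i)_i$ and $y=C(x)$ runs the usual Yao simulator: for every internal and every output wire, sample a single uniform $\kappa$-bit ``active'' label; propagate the active labels through the circuit gate by gate; for each gate, construct a garbled table of four ciphertexts under the two input-wire labels in which only the row indexed by the active pair decrypts to the active output label, while the other three rows encrypt a dummy value $0^\kappa$; finally, build the output decoding table so that the active label on each output wire decodes to the corresponding bit of $y$.

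To prove the indistinguishability in Definition \ref{defn:3.14gc}, I would follow the standard hybrid argument (as in \cite{YaoGCOrigin}): order the gates topologically and introduce one hybrid per gate that replaces the honest garbled table by the simulated one. Each consecutive pair of hybrids differs only in the contents of three of the four ciphertexts of a single gate, and their computational indistinguishability reduces to the semantic security of the underlying symmetric encryption scheme (equivalently, a pseudorandomness property), which is implied by the existence of one-way functions. The input-wire marginal is handled by an initial hybrid that replaces the PRG-expanded wire labels used by $\fGarbleI$ with truly uniform strings; this is indistinguishable by the PRG security and is what justifies using a $\fSim_{\text{ie}}$ that ignores $C$ and $x$ entirely.

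The main obstacle, and the reason this is not literally the textbook statement, is the \emph{input-first} ordering: we must simulate $\tilde{x}$ \emph{before} seeing $C(x)$, and then glue on a consistent circuit encoding. I expect this to work because, as noted above, $\tilde{x}$ in Yao's scheme carries no information beyond ``$n$ random labels,'' so nothing in the later simulation is forced by $\tilde{x}$ except the identity of the active input labels, which the circuit-encoding simulator can then propagate forward. A secondary subtlety is the precise accounting of the shared randomness $r\in\{0,1\}^{n\kappa}$ versus the internal randomness $r_{\text{GCin}}$ of $\fGarbleC$: one must arrange that the labels determined by $r$ (via a PRG) coincide in $\fGarbleI$ and $\fGarbleC$, and that all remaining label sampling for interior wires lives inside $r_{\text{GCin}}$, so that the hybrid argument cleanly decouples.
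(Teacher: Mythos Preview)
The paper does not supply a proof of Theorem~\ref{thm:yaogc}; it is stated in the preliminaries as a known/folklore strengthening of the standard Yao soundness and is simply asserted. So there is no ``paper's proof'' to compare against.

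Your sketch is correct and is exactly the argument one would expect. The decisive point you identified is the right one: in Yao's scheme the marginal of $\fGarbleI(x,r)$ over fresh $r$ is (computationally) just a tuple of $n$ uniform $\kappa$-bit labels, independent of $x$, so $\fSim_{\text{ie}}()$ can sample those labels without seeing $C$ or $C(x)$; then $\fSim_{\text{ce}}$ is the usual Yao simulator that takes the active input labels and the output $y=C(x)$, samples one active label per remaining wire, builds each garbled table with the active row correct and the other three rows holding dummies, and fixes the output decoding table to hit $y$. The gate-by-gate hybrid reducing to CPA-security of the label-keyed encryption (hence to one-way functions) goes through unchanged, and the extra initial PRG hybrid you mention handles the fact that the shared randomness $r\in\{0,1\}^{n\kappa}$ must be stretched to produce both labels per input wire.

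Two small notes. First, Definition~\ref{defn:3.14gc} writes $\fSim_{\text{ce}}(\tilde{x},C(x))$ without listing $C$ as an input, but the simulator must of course know the circuit topology to output a garbled circuit of the right shape; this is the standard convention (privacy is with respect to $x$ only) and is implicit in the paper's Definition~\ref{defn:3.gc} as well. Second, the ``input-first'' ordering here is strictly weaker than adaptive security of garbling: you only need the \emph{marginal} of the input encoding to be simulatable, and then the circuit encoding is simulated given $C(x)$ (which depends on the adversary's choice of nothing --- $x$ and $C$ are fixed in the indistinguishability game). So the usual selective-security Yao argument suffices; no online/adaptive-garbling machinery is needed.
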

\subsubsection{Trapdoor claw-free functions}\label{sec:3.3.6}
(Noisy) trapdoor claw-free functions (NTCF/TCF) is an important technique in quantum cryptography. It is raised in \cite{BCMVV,MahadevVerification} for the test-of-quantumness and CVQC problem. There are some variants of definitions raised for different problems, for example, NTCF/TCF with the adaptive hardcore bit property \cite{BCMVV,GVRSP}, or extended NTCF \cite{MahadevVerification}. Below we review the definition of NTCF without the adaptive hardcore bit property \cite{BKVV,zhang24}.
\begin{defn}[NTCF, without the adaptive hardcore bit property]\label{defn:3.12}
    The NTCF is defined to be a class of polynomial time algorithms as below, which take security parameter $1^\kappa$ as the parameter. $\fKg$ is a sampling algorithm. $\fDc$, $\fCHK$ are deterministic algorithms. $\fEv$ is allowed to be a sampling algorithm. $ \fpoly^\prime$ is a polynomial that determines the the range size. $$\fKg(1^\kappa)\rightarrow (\sk,\pk),$$ $$\fEv_\pk: \{0,1\}\times \{0,1\}^{\kappa}\rightarrow \{0,1\}^{\fpoly^\prime(\kappa)},$$ $$\fDc_\sk: \{0,1\}\times \{0,1\}^{\fpoly^\prime(\kappa)}\rightarrow \{0,1\}^{\kappa}\cup \{\bot\},$$ $$\fCHK_{\pk}: \{0,1\}\times \{0,1\}^{\kappa}\times \{0,1\}^{\fpoly^\prime(\kappa)}\rightarrow \{\ftrue ,\ffalse\}$$ And they satisfy the following properties:\par
	\begin{itemize}
	\item (Correctness) 
	\begin{itemize}
	\item (Noisy 2-to-1) For all possible $(\sk,\pk)$ in the range of $\fKg(1^\kappa)$ there exists a sub-normalized probability distribution $(p_y)_{y\in \{0,1\}^{\fpoly^\prime(\kappa)}}$ that satisfies: for any $y$ such that $p_y\neq 0$, $\forall b\in \{0,1\}$, there is $\fDc_\sk(b,y)\neq \bot$, and
	\begin{equation}\label{eq:64co}\fEv_\pk(\ket{+}^{\otimes (1+\kappa)})\approx_{\fneg(\kappa)}\sum_{y:p_y\neq 0}\frac{1}{\sqrt{2}}(\ket{\fDc_\sk(0,y)}+\ket{\fDc_\sk(1,y)})\otimes \sqrt{p_y}\ket{y}\end{equation}
	\item (Correctness of $\fCHK$) For all possible $(\sk,\pk)$ in the range of $\fKg(1^\kappa)$, $\forall x\in \{0,1\}^{\kappa}, \forall b\in \{0,1\}$:
	$$\fCHK_\pk(b,x,y)=\ftrue\Leftrightarrow\fDc_{\sk}(b,y)=x$$
	\end{itemize}
	\item (Claw-free) For any BQP adversary $\fAdv$,
	\begin{equation}\Pr\left[\begin{aligned}&(\sk,\pk)\leftarrow \fKg(1^\kappa),\\&\fAdv(\pk)\rightarrow (x_0,x_1,y):\quad x_0\neq \bot,x_1\neq \bot, x_0\neq x_1\\&\fDc_\sk(0,y)=x_0,\fDc_\sk(1,y)=x_1\end{aligned}\right]\leq \fneg(\kappa)\end{equation}
		\end{itemize}
\end{defn}
By \cite{BCMVV,BKVV} we have the following:
\begin{thm}
    NTCF could be instantiated from the LWE assumption.
\end{thm}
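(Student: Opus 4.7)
The plan is to instantiate NTCF directly from the lattice trapdoor construction of \cite{BCMVV} and its simplification in \cite{BKVV}. Since Definition \ref{defn:3.12} omits the adaptive hardcore bit property, only the basic building blocks are needed, and the construction reduces essentially to standard LWE trapdoor sampling. Recall the decisional LWE assumption for appropriate $(n,q,\chi)$: samples $(A,\,As+e)$ with $A\leftarrow_r \bZ_q^{m\times n}$, $s\leftarrow_r \bZ_q^n$, $e\leftarrow \chi^m$ are computationally indistinguishable from uniform; the search version says $s$ cannot be recovered from such samples.

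First I would spell out the construction. $\fKg(1^\kappa)$ samples a matrix $A\in \bZ_q^{m\times n}$ together with a short trapdoor basis $T_A$ (for example via the Micciancio--Peikert gadget trapdoor) and a short LWE secret $s$; the public key is $\pk=(A,b)$ with $b=As+e$ for small $e$, and the secret key is $\sk=(T_A,s)$. The evaluation $\fEv_\pk(\mu,x)$ is the quantum procedure that prepares the output register $y=Ax+\mu\cdot b+e'$ with $e'$ drawn from a discrete Gaussian via coherent Gaussian sampling; a standard argument shows this is implementable so that the resulting state on input $\ket{+}^{\otimes(1+\kappa)}$ is $\fneg(\kappa)$-close to the right-hand side of \eqref{eq:64co}. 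The decoder $\fDc_\sk(\mu,y)$ uses $T_A$ to find the short preimage of $y-\mu\cdot b$ with respect to $A$, returning a single $x\in\{0,1\}^\kappa$ (after the appropriate rounding/encoding), and $\fCHK_\pk(\mu,x,y)$ simply recomputes $\fEv_\pk(\mu,x)$ and checks proximity to $y$.

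Next I would verify the two required properties. For the noisy 2-to-1 correctness, the approximate identity $As\approx b$ means that for each $y$ in the image, the two preimages are $\fDc_\sk(0,y)$ and $\fDc_\sk(1,y)$, which differ by approximately $s$; so with overwhelming probability over $(\sk,\pk)$ both branches are non-$\bot$, and the superposition produced by $\fEv_\pk$ on $\ket{+}^{\otimes(1+\kappa)}$ is exactly of the form claimed in \eqref{eq:64co}. Correctness of $\fCHK$ is immediate from its definition. For the claw-free property, an adversary producing $(x_0,x_1,y)$ with $x_0\neq x_1$ and $\fDc_\sk(b,y)=x_b$ yields $x_0-x_1\approx s$, a short vector from which $s$ is efficiently recoverable, contradicting search-LWE; this gives a black-box reduction from claw-freeness to LWE.

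The main subtlety, rather than a genuine obstacle, is the joint calibration of noise and modulus parameters so that three things hold simultaneously: (i) every $y$ in the support of $\fEv_\pk(\ket{+}^{\otimes(1+\kappa)})$ has both preimages decodable (otherwise \eqref{eq:64co} degenerates to one branch and the 2-to-1 structure breaks); (ii) the state prepared by the quantum evaluation is $\fneg(\kappa)$-close in trace distance to the idealized superposition, which requires careful handling of Gaussian tail truncation and normalization; (iii) the LWE parameters remain in the hard regime for search-LWE under quantum attacks. All three are already handled in \cite{BCMVV,BKVV}; crucially, dropping the adaptive hardcore bit requirement removes the need for the more delicate leftover-hash/Fourier analysis of a leaked inner-product bit and leaves only standard LWE plus discrete-Gaussian preimage sampling, which is why the proof is essentially a pointer to the cited references.
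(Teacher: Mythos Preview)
The paper does not actually prove this theorem: it simply states the result and attributes it to \cite{BCMVV,BKVV} with no further argument. Your proposal goes further by sketching the actual LWE-based construction from those references, and the sketch is essentially correct and faithful to what \cite{BCMVV,BKVV} do; so you have supplied more than the paper itself does here.
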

\subsubsection{Remote state preparation with verifiability}\label{sec:3.3.7}
Remote state preparation with verifiability (RSPV) \cite{zhang24} is an important primitive in quantum cryptography. Famous works on RSPV include \cite{GVRSP}, where a classical channel RSPV protocol for $\ket{+_\theta}$ states is constructed. \cite{zhang24} gives a framework for working on RSPV and gives more protocols. Below we review the notions and results that are related to our work.\par
We first review a common set-up for RSPV. Below we repeat the definitions used in \cite{zhang24}.
\begin{setup}\label{setup:rspv}
    A typical set-up for RSPV is as follows. We consider a protocol between a client and a server. We consider a tuple of states $(\ket{\varphi_1},\ket{\varphi_2},\cdots\ket{\varphi_D})$.\par
    The protocol takes the following parameters: security parameter $1^\kappa$, approximation error parameter $1^{1/\epsilon}$.\par
    The outputs of the protocol are stored in the following registers: the client-side $\bflag$ holding value in $\{\fpass,\ffail\}$, the client-side classical register $\bD$ holding value in $[D]$, the server-side quantum register $\bQ$ whose dimension is suitable for holding the states.\par
    The server is the malicious party. For modeling the initial states in the malicious setting, suppose the server-side registers are denoted by $\bS$ and the environment is denoted by $\bbE$.\par
    The completeness and soundness are in Definition \ref{defn:3.13}, \ref{defn:3.14} below. The efficiency is defined in the common way.
\end{setup}
\begin{defn}[Completeness]\label{defn:3.13}
    We say a protocol under Set-up \ref{setup:rspv} is complete if when the server is honest the output state of the protocol is negligibly close to the following state: $\bflag$ holds the value $\fpass$, $\bD$ holds a random $i\in [D]$, $\bQ$ holds the corresponding $\ket{\varphi_i}$.\par
    Below we use $\rho_{tar}$ to denote this output state on $\bD,\bQ$ in the honest setting.
\end{defn}
\begin{defn}[Soundness]\label{defn:3.14}
    To define the soundness under Set-up \ref{setup:rspv}, we first define $\tRSPVIdeal$ as follows.\par
    $\tRSPVIdeal$ receives a bit $b$ from the server, then:
    \begin{itemize}
        \item If $b=0$, it sets $\bflag$ to be $\fpass$, samples $i\in [D]$ and stores it in $\bD$, and stores $\ket{\varphi_i}$ in $\bQ$.
        \item If $b=1$, it sets $\bflag$ to be $\ffail$.
    \end{itemize}
    We say a protocol $\pi$ under Set-up \ref{setup:rspv} is approximately sound if:\par
    For any efficient quantum adversary $\fAdv$, there exist efficient quantum operations $\fSim=(\fSim_0,\fSim_1)$ such that for any state $\rho_0\in \tD(\cH_{\bS}\otimes\cH_{\bbE})$:
    $$\pi^{\fAdv}(\rho_0)\approx_{\epsilon}^{\tind}\underbrace{\fSim_1}_{\text{on }\bS,\bQ}(\tRSPVIdeal(\underbrace{\fSim_0}_{\text{on }\bS\text{ and outputs }b}(\rho_0)))$$
    We note that (1) there are some variants of soundness definitions for RSPV. The definition we use here corresponds to Definition 3.3 in \cite{zhang24}, which follows the standard form of the simulation-based paradigm. (2) Since the error tolerance parameter is already part of the set-up so we could simply say ``approximately sound'' in the definition without ambiguity; we could also say it's $\epsilon$-sound for convenience.    
\end{defn}
In the framework of \cite{zhang24}, there is a useful notion called preRSPV. Below we review this notion.
\begin{setup}\label{setup:prerspvnt}
    A typical set-up of preRSPV is as follows. The basic set-up is similar to Set-up \ref{setup:rspv} with the following differences: we are considering a pair of protocols $(\pi_{\ttest},\pi_{\tcomp})$ instead of a single protocol; and we do not model $1^{1/\epsilon}$ as part of the protocol inputs\footnote{Instead, $\epsilon$ directly appears in the soundness definition.}.\par
    The completeness and soundness are in Definition \ref{defn:3.15}, \ref{defn:3.16} below.
\end{setup}
\begin{defn}\label{defn:3.15}
    We say $(\pi_{\ttest},\pi_{\tcomp})$ under Set-up \ref{setup:prerspvnt} is complete if:
    \begin{itemize}
        \item In $\pi_{\ttest}$, when the server is honest, the passing probability is negligibly close to $1$.
        \item In $\pi_{\tcomp}$, when the server is honest, the output state of the protocol is as required in Definition \ref{defn:3.13}.
    \end{itemize}
\end{defn}
\begin{defn}\label{defn:3.16}
    We say $(\pi_{\ttest},\pi_{\tcomp})$ under Set-up \ref{setup:prerspvnt} is $(\delta,\epsilon)$-sound if:\par
    For any efficient quantum adversary $\fAdv$, there exists an efficient quantum operation $\fSim$ such that for any initial state $\rho_0\in \tD(\cH_{\bS}\otimes\cH_{\bbE})$:\par
    If
    $$\tr(\Pi_{\fpass}(\pi_{\ttest}^{\fAdv}(\rho_0)))\geq 1-\delta$$
    then
    $$\Pi_{\fpass}(\pi_{\tcomp}^{\fAdv}(\rho_0))\approx_\epsilon^{\tind}\Pi_{\fpass}(\underbrace{\fSim}_{\bS,\bQ,\bflag}(\underbrace{\rho_{tar}}_{\bD,\bQ}\otimes\rho_0))$$
\end{defn}
Note that here we only consider indistinguishability on the passing space, which is chosen for some convenience reason in \cite{zhang24}. As shown in \cite{zhang24}, there is a protocol template that amplifies a preRSPV to an RSPV by a suitable sequential repetition:
\begin{mdframed}[backgroundcolor=black!10]
    Below we assume $(\pi_{\ttest},\pi_{\tcomp})$ is a preRSPV under Set-up \ref{setup:prerspvnt} that is $(\delta,\epsilon_0)$-sound. 
    \begin{prtl}\label{prtl:amptempltemp}
        A temporary protocol $\pi_{\ttemp}$ is as follows. The protocol inputs are: public parameters $1^{1/\epsilon}, 1^\kappa$. It is required that $\epsilon>\epsilon_0$.
        \begin{enumerate}
            \item Define $L=\frac{512}{\delta(\epsilon-\epsilon_0)^3},p=\frac{\epsilon-\epsilon_0}{8}.$\par
            For each $i\in [L]$:
            \begin{enumerate}
                \item The client randomly chooses $\tmode^{(i)}=\ttest$ with probability $p$ or $\tmode^{(i)}=\tcomp$ with probability $1-p$.
                \item The client executes $\pi_{\tmode^{(i)}}$ with the server. Store the outputs in $\bflag^{(i)},\bD^{(i)},\bQ^{(i)}$.
            \end{enumerate}
            \item The client sets $\bflag$ to be $\ffail$ if any of $\bflag^{(i)}$ is $\ffail$. Otherwise it sets $\bflag$ to be $\fpass$; and it randomly chooses $i\in [L]$ such that $\tmode^{(i)}=\tcomp$ and sends $i$ to the server; both parties use the states in $\bD^{(i)},\bQ^{(i)}$ as the outputs.
        \end{enumerate}
    \end{prtl}
    \begin{prtl}\label{prtl:amptempl}
        An RSPV protocol under Set-up \ref{setup:rspv} is as follows. The protocol inputs are: public parameters $1^{1/\epsilon}, 1^\kappa$. It is required that $\epsilon>\epsilon_0$.
        \begin{enumerate}
            \item Define $L=\frac{1728}{(\epsilon-\epsilon_0)^3}$.\par
            For each $i\in [L]$:
            \begin{enumerate}
                \item The client and the server execute $\pi_{\ttemp}$ with approximation error tolerance $\frac{\epsilon+\epsilon_0}{2}$. Store the outputs in register $\bflag^{(i)},\bD^{(i)},\bQ^{(i)}$.
            \end{enumerate}
            \item The client sets $\bflag$ to be $\ffail$ if any of $\bflag^{(i)}$ is $\ffail$. Otherwise it sets $\bflag$ to be $\fpass$; and it randomly chooses $i\in [L]$ and sends $i$ to the server; both parties use the states in $\bD^{(i)},\bQ^{(i)}$ as the outputs.
        \end{enumerate}
    \end{prtl}
\end{mdframed}
The completeness and efficiency are from the protocol description.
\begin{thm}\label{thm:3.10}
    Protocol \ref{prtl:amptempl} is $\epsilon$-sound.
\end{thm}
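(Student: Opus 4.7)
The plan is to establish $\epsilon$-soundness of Protocol \ref{prtl:amptempl} in two stages matching the two layers of repetition: first I would show that $\pi_{\ttemp}$ (Protocol \ref{prtl:amptempltemp}) achieves an RSPV-style soundness with tolerance $(\epsilon+\epsilon_0)/2$ from the $(\delta,\epsilon_0)$-preRSPV hypothesis, and then argue that the outer repetition of $\pi_{\ttemp}$ in Protocol \ref{prtl:amptempl} amplifies this to tolerance $\epsilon$. In both stages the simulator is essentially dictated by the cut-and-choose structure of the construction.

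For the inner stage, given an adversary $\fAdv$ against $\pi_{\ttemp}$, the simulator $\fSim$ emulates all $L$ rounds internally: it samples the $\tmode^{(i)}$ indicators honestly, runs $\pi_{\ttest}$ verbatim against $\fAdv$ in each test round, and outputs $b=1$ to $\tRSPVIdeal$ (setting $\bflag=\ffail$) if any test fails; otherwise it picks the random compute index $i^\star$ as in the protocol and invokes the preRSPV simulator from Definition \ref{defn:3.16} on the composed state, feeding in the target state $\rho_{tar}$ drawn from $\tRSPVIdeal$. The analysis then reduces to showing that, conditioned on the non-abort event, the chosen compute round is indistinguishable from its preRSPV-simulated counterpart up to error $(\epsilon+\epsilon_0)/2$. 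This would be done via a quantum Markov-style averaging: because the Bernoulli$(p)$ test/compute indicators are independent of the adversary's quantum register at each point, the expected ``instantaneous test-passing failure probability'' at round $i^\star$ (in the hypothetical experiment where round $i^\star$ is instead a test) is bounded by the total test failure probability divided by $pL$; the choices $L=512/(\delta(\epsilon-\epsilon_0)^3)$ and $p=(\epsilon-\epsilon_0)/8$ make this small enough that, except on an event of mass $O(\epsilon-\epsilon_0)$, the chosen round has test-passing probability $\geq 1-\delta$, so the $(\delta,\epsilon_0)$-preRSPV soundness kicks in and contributes $\epsilon_0$ to the distinguishing advantage.

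For the outer stage, Protocol \ref{prtl:amptempl} simply runs $\pi_{\ttemp}$ with tolerance $(\epsilon+\epsilon_0)/2$ for $L'=1728/(\epsilon-\epsilon_0)^3$ times, aborts if any fails, and outputs a uniformly random copy. The simulator composes $L'$ copies of the stage-one simulator with $\tRSPVIdeal$. A hybrid argument that replaces one $\pi_{\ttemp}$ copy at a time with its ideal+simulator form would give accumulated error at most $L'\cdot (\epsilon+\epsilon_0)/2$ on the full joint state; however, since the final protocol keeps only one uniformly chosen copy, the effective per-copy error that matters is the \emph{averaged} deviation of a random index, which the parameter choice $L'=\Theta((\epsilon-\epsilon_0)^{-3})$ is tuned to bound by $(\epsilon-\epsilon_0)/2$, yielding total error $\leq \epsilon$.

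The main obstacle is making the quantum Markov/cut-and-choose step rigorous: classically one can condition freely on ``round $i$ was a test'' versus ``round $i$ was a compute,'' but quantumly the adversary's register after round $i$ depends coherently on the concrete measurement outcomes produced in that round, and the test- and compute-mode operations act on this register in genuinely different ways. I expect the argument to proceed by defining, for each candidate compute index $i$, a \emph{test-substitution} experiment in which round $i$ is replaced by an execution of $\pi_{\ttest}$; independence of the classical indicator from the quantum state at the start of round $i$ makes the marginal distribution on everything \emph{before} round $i$ identical in the original and substituted experiments, after which the $(\delta,\epsilon_0)$-preRSPV guarantee can be invoked on the substituted experiment and transferred back via triangle inequality. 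Getting the accounting of accumulated distinguishing advantage to match the constants in the parameter choices of the two protocols is the quantitatively delicate part.
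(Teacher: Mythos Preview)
Your two-stage decomposition matches the paper's own proof, which simply cites \cite{zhang24}: the inner layer (Protocol \ref{prtl:amptempltemp}) is the ``preRSPV to RSPV'' amplification there, and the outer layer (Protocol \ref{prtl:amptempl}) is the ``passing-space to all-space'' amplification from Section 3.1.1 of \cite{zhang24}. Your cut-and-choose / Markov averaging picture for the inner layer is the right idea.

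However, your account of the \emph{outer} stage contains a real gap. You write that a naive hybrid would accumulate error $L'\cdot(\epsilon+\epsilon_0)/2$, and that because the protocol outputs a uniformly random copy ``the effective per-copy error that matters is the averaged deviation of a random index,'' bounded by $(\epsilon-\epsilon_0)/2$. This is not correct: if each $\pi_{\ttemp}$ execution is only $(\epsilon+\epsilon_0)/2$-indistinguishable from ideal, outputting a uniformly random one of them is still only $(\epsilon+\epsilon_0)/2$-indistinguishable from ideal---averaging over the index cannot shrink a per-index distinguishing advantage. So the mechanism you describe would not explain why $L'=\Theta((\epsilon-\epsilon_0)^{-3})$ repetitions buy you anything.

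The actual role of the outer repetition is different. The inner protocol $\pi_{\ttemp}$ only delivers a \emph{passing-space} guarantee: $\Pi_{\fpass}(\pi_{\ttemp}^{\fAdv}(\cdot))$ is $(\epsilon+\epsilon_0)/2$-close to $\Pi_{\fpass}$ of a simulation (this is all the preRSPV hypothesis in Definition \ref{defn:3.16} can yield, since it only constrains the passing component of $\pi_{\tcomp}$). The full RSPV soundness in Definition \ref{defn:3.14} requires indistinguishability on the \emph{entire} output, including the failing part. The outer repetition closes this gap: since all $L'$ copies must pass for $\bflag=\fpass$, the expected ``local failing mass'' at a uniformly random round (conditioned on passing all earlier rounds) is at most $1/L'$, so by Markov most rounds have conditional passing probability close to $1$; on those rounds the passing-space simulation is already close to the full state, and the fraction of bad rounds is absorbed into the $(\epsilon-\epsilon_0)/2$ slack. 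That is what the choice $L'=1728/(\epsilon-\epsilon_0)^3$ is calibrated for---not error averaging, but bounding the gap between passing-space and all-space simulation.
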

\begin{proof}
    The amplification is the amplification procedure used in \cite{zhang24}. Note that the amplification from preRSPV to Protocol \ref{prtl:amptempltemp} is what \cite{zhang24} calls ``preRSPV to RSPV amplification''. The amplification to Protocol \ref{prtl:amptempl} is to amplify from the passing-space simulation to all-space simulation, which is proved in Section 3.1.1 of \cite{zhang24}.
\end{proof}
Finally we review the following theorems about the concrete constructions of RSPV protocols, which are also needed in our work.
\begin{thm}[By \cite{zhang24}]\label{thm:kpsrspv}
    Assuming the existence of NTCF, there exists a classical-channel RSPV protocol for state family $\{\frac{1}{\sqrt{2}}(\ket{0}\ket{x_0}+\ket{1}\ket{x_1}):x_0,x_1\in \{0,1\}^n\}$.
\end{thm}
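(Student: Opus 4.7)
The plan is to construct a preRSPV protocol $(\pi_{\ttest},\pi_{\tcomp})$ in the sense of Set-up \ref{setup:prerspvnt}, from the NTCF family of Definition \ref{defn:3.12}, and then apply Theorem \ref{thm:3.10} to amplify it to an $\epsilon$-sound RSPV protocol under Set-up \ref{setup:rspv}. This is exactly the route taken in \cite{zhang24}, and the statement we want to prove is the conclusion of that construction, specialized to the target state family $\{\tfrac{1}{\sqrt{2}}(\ket{0}\ket{x_0}+\ket{1}\ket{x_1})\}$.

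The honest template for both $\pi_{\ttest}$ and $\pi_{\tcomp}$ is as follows. First the client samples $(\sk,\pk)\leftarrow\fKg(1^\kappa)$ and sends $\pk$. The server prepares $\ket{+}^{\otimes(1+\kappa)}$ (one ``branch'' qubit plus the preimage register), applies $\fEv_\pk$, measures the output register to obtain $y$, and sends $y$ back. By the noisy-$2$-to-$1$ property \eqref{eq:64co}, the post-measurement state on the server is negligibly close to $\frac{1}{\sqrt{2}}(\ket{0}\ket{\fDc_\sk(0,y)}+\ket{1}\ket{\fDc_\sk(1,y)})$, and the client can compute $x_0:=\fDc_\sk(0,y)$ and $x_1:=\fDc_\sk(1,y)$ using its trapdoor. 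In $\pi_{\tcomp}$ the protocol ends here. In $\pi_{\ttest}$, the client then flips a coin: either it asks the server to measure in the standard basis and return $(b,x)$, which it checks via $\fCHK_\pk(b,x,y)=\ftrue$; or it asks the server to measure in the Hadamard basis, returning a string $d$, and the client checks a suitable linear relation against $(x_0,x_1)$ analogous to \eqref{eq:3r2}.

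The main obstacle — and the reason this is nontrivial without the adaptive hardcore bit — is proving soundness of the Hadamard branch of $\pi_{\ttest}$: in the usual BCMVV style one uses the adaptive hardcore bit to argue that a server who passes the Hadamard test must be in a coherent superposition over both preimages. To avoid this assumption, I would follow the strategy of \cite{zhang24}, which replaces the direct adaptive-hardcore-bit reduction with a structural argument along the lines sketched in Section \ref{sec:2.4.2}: define the basis-honest, basis-phase correspondence, and phase-honest forms of states, and use the computational-basis check together with the claw-free property to pin the residual server state into basis-honest form first, then use the Hadamard check to promote this into phase-honest form. The claw-free property is exactly what rules out a server storing both $x_0$ and $x_1$ simultaneously, which is what the adaptive hardcore bit would otherwise have ensured.

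Once the preRSPV is $(\delta,\epsilon_0)$-sound for some $\delta,\epsilon_0$ coming from the above analysis, Theorem \ref{thm:3.10} applied to Protocols \ref{prtl:amptempltemp} and \ref{prtl:amptempl} gives, for any target $\epsilon>\epsilon_0$, an $\epsilon$-sound RSPV protocol under Set-up \ref{setup:rspv}, which is exactly the statement of Theorem \ref{thm:kpsrspv}. Completeness and the classical-channel property are immediate from the template, since the only quantum operations are performed on the server side and all messages $(\pk,y,\text{basis choice},\text{measurement outcome})$ are classical.
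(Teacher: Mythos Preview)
The paper does not prove this theorem; it is stated as a preliminary result imported from \cite{zhang24} and used as a black box (note the attribution ``By \cite{zhang24}'' in the theorem header, and its placement in the preliminaries Section~\ref{sec:3.3.7}). There is thus no proof in the paper to compare your proposal against.

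Your sketch is a plausible outline of how such a construction could go, and the overall shape (NTCF-based preRSPV plus amplification via Theorem~\ref{thm:3.10}) is consistent with the framework the paper attributes to \cite{zhang24}. One caution, though: when you say you would handle the Hadamard-test soundness ``along the lines sketched in Section~\ref{sec:2.4.2}'' via the basis-honest / basis-phase-correspondence forms, you are borrowing the analysis template of \emph{this} paper's SFS protocol, where the client prepares and sends a quantum state and later introduces explicit input/output padding to make the Hadamard test work for arbitrary functions. In the NTCF-based classical-channel setting the client sends only $\pk$ and the server manufactures the superposition itself, so the mechanism that replaces the adaptive hardcore bit must come from the NTCF structure (claw-freeness together with whatever additional argument \cite{zhang24} supplies), not from client-side padding. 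Your paragraph gestures at this but does not actually identify the replacement argument; since that is precisely the nontrivial step, the sketch as written is more of a pointer to \cite{zhang24} than an independent proof.
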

\begin{thm}[By \cite{BGKPV23}]\label{thm:bb84rspv}
    Assuming the existence of NTCF, there exists a classical-channel RSPV protocol for BB84 states.
\end{thm}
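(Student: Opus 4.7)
The plan is to reduce the claim to constructing a preRSPV (Set-up \ref{setup:prerspvnt}) for BB84 states and then invoke the amplification template of Theorem \ref{thm:3.10} to obtain a full $\epsilon$-sound RSPV. The preRSPV itself will be built from NTCF in a test-and-compute style, reusing ideas already present in the proof of Theorem \ref{thm:kpsrspv}.

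First, I would use NTCF to have the server generate a two-preimage (``claw'') state. The client samples $(\sk,\pk) \leftarrow \fKg(1^\kappa)$ and sends $\pk$; the server prepares $\frac{1}{\sqrt{2}}(\ket{0}+\ket{1})\otimes \ket{+}^{\otimes\kappa}$ on a control-and-preimage pair of registers, applies $\fEv_\pk$ into a fresh image register, and measures the image, obtaining $y$. By the noisy 2-to-1 property \eqref{eq:64co} the residual state is negligibly close to $\frac{1}{\sqrt{2}}(\ket{0}\ket{x_0}+\ket{1}\ket{x_1})$, where the claw $(x_0,x_1)$ is computable by the client from $\sk$ and $y$. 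This is the same building block that yields the state family of Theorem \ref{thm:kpsrspv}.

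Next, to distill a BB84 qubit on the first register, the client privately samples a basis flag $a\in\{0,1\}$ and a bit $b\in\{0,1\}$. In computation mode, the server is instructed to measure the preimage register in either the standard basis (when $a=0$) or the Hadamard basis (when $a=1$) and report the outcome $m$; combined with $(x_0,x_1)$ this determines, and the client learns, which BB84 state $H^a Z^b\ket{0}$ has been prepared on the first qubit. In test mode, the client randomly runs a standard-basis check (the server measures everything and the client verifies the preimage lies in $\{x_0,x_1\}$) or a Hadamard-basis check (the server measures everything in Hadamard basis and the client verifies the parity relation linking the outcome with $x_0\oplus x_1$). The claw-free property of NTCF then rules out any malicious server that passes both tests with high probability but produces a state far from the honest one, giving the $(\delta,\epsilon_0)$ soundness required by Definition \ref{defn:3.16}.

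The main obstacle is that, naively, telling the server which basis to measure leaks the basis flag $a$ and therefore violates the indistinguishability required by Definition \ref{defn:3.14}: BB84 states form a uniform mixture equal to $I/2$, so any transcript that determines $a$ cannot be simulated from the state alone. The fix, following \cite{BGKPV23}, is to route the basis choice through the NTCF so that the server's classical transcript is computationally independent of $a$ under the claw-free hypothesis; concretely, one can randomize the role of the branch bit via an additional NTCF-level masking, or replace the explicit basis instruction by an encrypted instruction derived from independent trapdoor information, so that the simulator can reproduce the transcript knowing only whether to enter test or compute mode. Once this hiding step is in place, the soundness reduction goes through by a hybrid argument against the NTCF assumption, and Protocol \ref{prtl:amptempl} together with Theorem \ref{thm:3.10} upgrades the resulting preRSPV to a classical-channel $\epsilon$-sound RSPV for BB84 states, completing the proof.
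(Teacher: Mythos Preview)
The paper does not give a proof of this theorem: it is stated in the preliminaries as a result imported from \cite{BGKPV23}, with no argument beyond the citation. So there is no ``paper's own proof'' to compare your proposal against.

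That said, your sketch has a genuine gap at exactly the point you flag as the ``main obstacle.'' Everything up to that point is the standard NTCF claw-state preparation, which is fine. But your proposed fix for basis hiding is not a construction: you write that one can ``randomize the role of the branch bit via an additional NTCF-level masking, or replace the explicit basis instruction by an encrypted instruction derived from independent trapdoor information,'' without saying what either of these actually is. This is precisely the nontrivial content of \cite{BGKPV23}. The difficulty is that you are working from NTCF \emph{without} the adaptive hardcore bit property (Definition \ref{defn:3.12}), so you cannot simply appeal to a hardcore-bit-style argument to hide a basis choice that is correlated with the trapdoor; and ``encrypting the instruction'' begs the question of what encryption you have available from NTCF alone that is compatible with the server's quantum operations and the soundness analysis. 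Your proposal as written does not supply the mechanism, so it does not constitute a proof.

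A second, smaller issue: in your computation mode you say the client samples both $a$ and $b$ in advance and then the server's measurement ``determines'' which BB84 state was prepared. But if the client has already fixed $b$, the server's measurement outcome must be checked against it rather than used to determine it; and if $b$ is determined by the outcome, then the state is not being sampled uniformly from the client's perspective until after the server acts, which interacts with the simulation-based soundness in Definition \ref{defn:3.14} in a way you have not addressed.
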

Note that the NTCF above does not assume the adaptive hardcore bit property, as formalized in Definition \ref{defn:3.12}.
\section{Secure Function Sampling with Long Outputs}\label{sec:4}
In this section we formalize the notion of secure function sampling (SFS), prove its classical impossibility, and prepare for the construction of its quantum protocols.
\subsection{Definition of Secure Function Sampling}\label{sec:4.1}
In this section we formalize the notion of SFS. Below we first formalize its set-up.
\begin{setup}\label{setup:1}
    The set-up for SFS is as follows. We consider a protocol between a client and a server.\par
    The inputs of the protocol are all public, which are: problem sizes $1^n,1^m$, security parameter $1^\kappa$, error tolerance parameter $1^{1/\epsilon}$, an efficient classical function $f:\{0,1\}^n\rightarrow \{0,1\}^m$.\par
    The outputs of the protocol are as follows. The client-side output registers are $\bflag$, which holds a value in $\{\fpass,\ffail\}$, and $\bx^{(out)}$, which holds a string in $\{0,1\}^n$; the server-side output register is $\bQ^{(out)}$, which holds a string in $\{0,1\}^m$.\par
    The server is the malicious party. For modeling the initial states in the malicious setting, suppose the server-side registers are denoted by $\bS$ and the environment is denoted by $\bbE$.\par
    The completeness and soundness are in Definition \ref{defn:4.1}, \ref{defn:4.2} below. The efficiency is defined in the common way.
\end{setup}
As discussed in the introduction, in this paper we would like to construct an SFS protocol in succinct communication.\par
The completeness of SFS is as follows.
\begin{defn}[Completeness of SFS]\label{defn:4.1}
We say a protocol under Set-up \ref{setup:1} is complete if when the server is honest, the output state of the protocol is negligibly close to the following state: the register $\bflag$ holds the value $\fpass$, the register $\bx^{(out)}$ holds a random classical string $x\leftarrow_r\{0,1\}^n$, the register $\bQ^{(out)}$ holds the corresponding classical string $f(x)$.\par
We use $\rho_{tar}$ to denote the output state on $\bx^{(out)},\bQ^{(out)}$ in the honest setting.
\end{defn}
The soundness of SFS is as follows.
\begin{defn}[Soundness of SFS]\label{defn:4.2}
    To define the soundness, we first define $\tSFSIdeal$ as follows.\par
    $\tSFSIdeal$ receives a bit $b\in\{0,1\}$ from the server, then:
        \begin{itemize}\item If $b=0$, $\tSFSIdeal$ sets $\bflag$ to be $\fpass$, and it randomly samples $x\leftarrow_r \{0,1\}^n$, stores $x$ in $\bx^{(out)}$ and stores $f(x)$ in $\bQ^{(out)}$.
            \item If $b=1$, $\tSFSIdeal$ sets $\bflag$ to be $\ffail$.
        \end{itemize}
    Then we define the soundness as follows. We say a protocol $\pi$ under Set-up \ref{setup:1} is $\epsilon$-sound if:\par
For any efficient quantum adversary $\fAdv$, there exist efficient quantum operations $\fSim=(\fSim_0,\fSim_1)$ such that for any polynomial-size inputs $1^n,1^m,1^{1/\epsilon},f$ and any initial state $\rho_0\in \tD(\cH_{\bS}\otimes\cH_{\bbE})$:
\begin{equation}\label{eq:13}\pi^{\fAdv}(\rho_0)\approx^{\tind}_{\epsilon}\underbrace{\fSim_1}_{\text{on }\bS,\bQ^{(out)}}(\tSFSIdeal(\underbrace{\fSim_0}_{\text{on }\bS\text{ and outputs }b}(\rho_0)))\end{equation}
\end{defn}
\subsection{Classical Impossibility}\label{sec:4.2}
In this section we prove the SFS primitive is not achievable by a classical protocol. 
As described in Section \ref{sec:2.6}, the impossibility result considers a pseudorandom generator (PRG) as the function $f$. Below we state the theorem.
\begin{thm}\label{thm:4.1r}
    Suppose a classical protocol $\pi$ in Set-up \ref{setup:1} is complete, efficient and has communication complexity $\fpoly(n,\kappa)$. Then the following statement could \emph{not} be true:\par
    For any efficient quantum adversary $\fAdv$, there exist efficient quantum operations $\fSim=(\fSim_0,\fSim_1)$ such that for any polynomial-size $1^n,1^m$, take $\epsilon<1-O(1)$ and $f=\tPRG(1^n,1^m)$, for any initial state $\rho_0\in \tD(\cH_{\bS}\otimes\cH_{\bbE})$:
\begin{equation}\label{eq:13r}\pi^{\fAdv}(f)(\rho_0)\approx^{\tind}_{\epsilon}\underbrace{\fSim_1}_{\text{on }\bS,\bQ^{(out)}}(\tSFSIdeal(f)(\underbrace{\fSim_0}_{\text{on }\bS\text{ and outputs }b}(\rho_0)))\end{equation}
\end{thm}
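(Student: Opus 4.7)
The plan is to derive a contradiction by a compression argument specialized to $f=\tPRG(1^n,1^m)$, with $m$ chosen larger than the total client-to-server communication complexity $|T|=\fpoly(n,\kappa)$ plus $\kappa$, so that no encoding from $m$-bit strings back to themselves through $|T|$ bits can succeed on many inputs.

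I would first fix a non-uniform classical adversary $\fAdv$ whose initial state and random coins are hardwired and which plays the honest server strategy while additionally copying the entire sequence of received client-to-server messages into its register $\bS$. Because both the strategy and the randomness are fixed, the honest server's value in $\bQ^{(out)}$ is a deterministic, efficiently computable function $\Phi$ of the client-to-server transcript, and $\Phi(\bS)=\bQ^{(out)}$ holds identically in the real execution by construction. I would then apply Definition \ref{defn:4.2} to obtain efficient $\fSim_0,\fSim_1$ satisfying \eqref{eq:13r}. The completeness of $\pi$ gives $\bflag=\fpass$ in the real execution with probability $1-\fneg$, so $\epsilon$-indistinguishability forces $\fSim_0$ to output $b=0$ with probability at least $1-\epsilon-\fneg$, and on that event $\fSim_1$ produces some $\tilde T\in\bS$ from the value $y\in\{0,1\}^m$ that $\tSFSIdeal$ has placed in $\bQ^{(out)}$.

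Feeding the computational-basis distinguisher $D(\bS,\bQ^{(out)},\bflag)=[\Phi(\bS)=\bQ^{(out)}\wedge\bflag=\fpass]$, which outputs $1$ identically in the real execution, into the $\epsilon$-indistinguishability statement yields
\[\Pr_{x\leftarrow_r\{0,1\}^n}\bigl[\Phi(\fSim_1(\tPRG(x)))=\tPRG(x)\bigr]\geq 1-\epsilon-\fneg(\kappa).\]
On the other hand, a counting bound (swapping the summation order over $y\in\{0,1\}^m$ and $\tilde T\in\{0,1\}^{|T|}$) gives
\[\Pr_{y\leftarrow_r\{0,1\}^m}[\Phi(\fSim_1(y))=y]=2^{-m}\sum_{\tilde T}\Pr[\fSim_1(\Phi(\tilde T))\to\tilde T]\leq 2^{|T|-m}=\fneg(\kappa).\]
Hence the efficient algorithm $y\mapsto[\Phi(\fSim_1(y))=y]$ distinguishes $\tPRG(U_n)$ from $U_m$ with advantage at least $1-\epsilon-\fneg$, contradicting Definition \ref{defn:PRG} as soon as $\epsilon<1-O(1)$.

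The main obstacle, I expect, is not the bookkeeping but engineering a distinguisher that genuinely \emph{forces} $\fSim_1$ to compress. A priori $\fSim_1$ could freely overwrite $\bQ^{(out)}$ or write junk to $\bS$, and only the hardwired-adversary identity $\bQ^{(out)}=\Phi(\bS)$ — which holds deterministically in the real execution and cannot be reproduced in the ideal world without $\fSim_1$ actually using the value $\bQ^{(out)}$ to choose $\bS$ — rules this out. A secondary issue is that $\fSim_1$ may be a quantum operation, but since $D$ is a classical computational-basis check, the counting bound applies verbatim to the measured distribution of $\bS$, and the indistinguishability hypothesis is already against quantum distinguishers.
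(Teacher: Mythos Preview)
Your overall strategy is exactly the paper's: use the honest server (with transcript stored) as the adversary, turn the simulator into a compressor from $\bQ^{(out)}$ to a short transcript, turn the determinism of the classical protocol into a decompressor $\Phi$, and finish with PRG security plus a counting bound. The counting calculation and the swap-of-sums step are fine.

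However, there is a genuine gap in your distinguisher. Your $D$ checks $\Phi(\bS)=\bQ^{(out)}$ where $\bQ^{(out)}$ is the register \emph{after} $\fSim_1$ has acted. But $\fSim_1$ operates on both $\bS$ and $\bQ^{(out)}$, so it can simply ignore its input, write an arbitrary fixed transcript $T_0$ into $\bS$, overwrite $\bQ^{(out)}$ with $\Phi(T_0)$, and pass your check with probability $1$ --- without ever reading the value $\tSFSIdeal$ placed there. Your sentence ``cannot be reproduced in the ideal world without $\fSim_1$ actually using the value $\bQ^{(out)}$ to choose $\bS$'' is therefore not justified, and the displayed inequality $\Pr_x[\Phi(\fSim_1(\tPRG(x)))=\tPRG(x)]\geq 1-\epsilon-\fneg$ does not follow from the distinguisher you wrote down: the right-hand side of your check is the post-$\fSim_1$ contents of $\bQ^{(out)}$, not $\tPRG(x)$.

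The fix is one line: let $D$ read the \emph{client-side} output register $\bx^{(out)}$ (which the distinguisher has access to by Definition~\ref{defn:4.2}, but $\fSim_1$ does not touch) and check $\Phi(\bS)=f(\bx^{(out)})\wedge\bflag=\fpass$. In the real execution this holds with probability $1-\fneg$ by completeness and your identity $\bQ^{(out)}=\Phi(\bS)$; in the ideal world $\bx^{(out)}$ holds the freshly sampled $x$ and $\fSim_1$ sees only $f(x)=\tPRG(x)$, so the check forces exactly the compression statement you wrote. This is precisely what the paper does: its equation~\eqref{eq:20} compares the joint distribution of $(\bx^{(out)},\bbt)$ across real and ideal, which amounts to anchoring the test to the client's $x$ rather than to the simulator-controlled $\bQ^{(out)}$.
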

The following corollary reduces the existence of PRG to one-way functions and encapsulates the theorem using Set-up \ref{setup:1}:
\begin{cor}
Assuming the existence of post-quantum one-way functions, there does not exist a classical SFS protocol as described in Set-up \ref{setup:1} with communication complexity $\fpoly(n,\kappa,1/\epsilon)$.
\end{cor}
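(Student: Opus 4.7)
The plan is to derive a contradiction with post-quantum PRG security via a compression argument, adapting the technique of \cite{HW15} as sketched in Section \ref{sec:2.6}. Assume for contradiction that the classical protocol $\pi$ is complete, has communication complexity $\fpoly(n,\kappa)$ independent of $m$, and satisfies the stated simulation-based soundness with error $\epsilon<1-c$ for some absolute constant $c>0$. I will exhibit, for an appropriate $m=\fpoly(n,\kappa)$, a polynomial-time distinguisher between $\tPRG(U_n)$ and the uniform distribution on $\{0,1\}^m$ with non-negligible advantage, contradicting Definition \ref{defn:PRG}.

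First I instantiate the honest-but-recording adversary $\fAdv$: it runs the honest server algorithm and additionally writes into $\bS$ its full classical view, namely all client-to-server messages together with its own random coins. Since $\pi$ is classical, this view has length $\ell=\fpoly(n,\kappa)$, independent of $m$, and deterministically produces the honest output. Applying the assumed soundness to this $\fAdv$ yields simulators $\fSim=(\fSim_0,\fSim_1)$. Consider the efficient predicate $E$ on the final joint state: ``$\bflag=\fpass$ and running the honest server algorithm on the string stored in $\bS$ reproduces $\bQ^{(out)}$''. Completeness gives $\Pr[E]\geq 1-\fneg(\kappa)$ in the real execution, so $\epsilon$-indistinguishability gives $\Pr[E]\geq 1-\epsilon-\fneg(\kappa)$ in the ideal execution. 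Since $E$ forces $\bflag=\fpass$ and the ideal execution only reaches $\bflag=\fpass$ on the branch where $\fSim_0$ emits $b=0$, I conclude that $p_0:=\Pr[b=0]\geq 1-\epsilon-\fneg(\kappa)$, and conditioned on $b=0$ the simulator's string in $\bS$ decodes under the honest server algorithm to the value placed in $\bQ^{(out)}$ by $\tSFSIdeal$ with probability at least $1-O(\epsilon)-\fneg(\kappa)$.

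Next I turn this into compression of PRG outputs. Fix a canonical $\rho_0$ (e.g., the empty state). On the $b=0$ branch, $\tSFSIdeal$ samples $x\leftarrow_r\{0,1\}^n$ and writes $y:=\tPRG(x)$ into $\bQ^{(out)}$; then $\fSim_1$ emits a string $v\in\{0,1\}^\ell$. Define $\fCompressor(y)$ to be the output on $\bS$ produced by $\fSim_1$ when run on $\fSim_0(\rho_0)$'s output in $\bS$ together with $y$ placed directly into $\bQ^{(out)}$, and define $\fDecompressor(v)$ to be the output of the honest server algorithm evaluated on the view $v$. By the previous paragraph, for $y=\tPRG(U_n)$,
$$\Pr[\fDecompressor(\fCompressor(y))=y]\geq 1-O(\epsilon)-\fneg(\kappa).$$
Conversely, for each fixing of the internal randomness of $\fCompressor$ and $\fDecompressor$, the composition is a deterministic map that factors through $\{0,1\}^\ell$, so its image has size at most $2^\ell$; averaging over a uniform input,
$$\Pr_{y\leftarrow_r\{0,1\}^m}[\fDecompressor(\fCompressor(y))=y]\leq 2^{\ell-m}.$$

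Choosing $m:=\ell+\kappa$ (still $\fpoly(n,\kappa)$) makes the right-hand side equal $2^{-\kappa}$, which is negligible, while the first bound remains at least $c'-\fneg(\kappa)$ for some constant $c'>0$ depending on $c$. The tester that outputs $1$ iff $\fDecompressor(\fCompressor(y))=y$ therefore separates $\tPRG(U_n)$ from $U_m$ with non-negligible advantage, contradicting the post-quantum PRG security guaranteed by the corollary's one-way-function assumption. The step I expect to need the most care is the conditioning on $b=0$: I must argue that after conditioning the marginal of $\bQ^{(out)}$ remains close to $\tPRG(U_n)$ and that the simulator does not corrupt $\bQ^{(out)}$ between receiving it from $\tSFSIdeal$ and the moment one decodes, both of which reduce to applying $\epsilon$-indistinguishability to simple efficient predicates (namely $\bflag=\fpass$ alone, and $\bQ^{(out)}=f(\bx^{(out)})$) on the final joint state. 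The remaining bookkeeping --- dimension matching for $\fSim_1$'s output on $\bS$, efficiency of $\fDecompressor$ (which is simply the honest server algorithm), and verifying that $m$ stays polynomial in $n,\kappa$ --- is routine.
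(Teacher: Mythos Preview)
Your overall approach is the same as the paper's --- a compression argument via the simulator, yielding a PRG distinguisher --- and the high-level structure is sound. However, there is one real gap: you assert that the honest server's recorded view, \emph{including its random coins}, has length $\ell=\fpoly(n,\kappa)$ independent of $m$. The communication bound only controls the client-to-server messages; the honest server's random tape is bounded merely by its running time, which is $\fpoly(n,m,\kappa)$ since the server must at minimum write down the $m$-bit string $f(x)$. So $\ell$ may scale with $m$, the choice $m:=\ell+\kappa$ becomes circular, and the compression bound $2^{\ell-m}$ is vacuous.

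The paper avoids this by treating the server's random coins $r$ as \emph{shared side information} between compressor and decompressor rather than as part of the compressed string: it considers the deterministic adversary $\fAdv(r)$ with hardcoded coins, records only the client-to-server transcript (length $\fpoly(n,\kappa)$) in $\bbt$, and lets the decompressor take $(t,r)$ as input. For each fixed $r$ (and each fixing of the simulator's internal randomness) the composed map $y\mapsto t\mapsto\fDecompressor(t,r)$ factors through the transcript alone, so the information-theoretic bound $2^{\fpoly(n,\kappa)-m}$ goes through. Equivalently, you can place $r$ in the initial state $\rho_0$ so that a single simulator handles all $r$ uniformly, or fix $r$ non-uniformly by averaging. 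Once the random coins are handled this way, the remainder of your argument --- the predicate $E$, the conditioning on $b=0$, and the care about $\fSim_1$ possibly touching $\bQ^{(out)}$ --- is correct and lines up with the paper's proof.
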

The overview of the proof is given in Section \ref{sec:2.6}. The formal proof is as follows.
\begin{proof}[Proof of Theorem \ref{thm:4.1r}]
    Suppose there exists a classical protocol $\pi$ that is complete, efficient, has communication complexity $\fpoly(n,\kappa)$ and satisfies \eqref{eq:13r}. We prove that's impossible.\par
    Consider an adversary $\fAdv$ as follows: $\fAdv$ follows the honest behavior of the protocol, but keeps the transcript of each step. In the end it outputs both the honest setting outputs and the transcripts. Suppose the output register holding the transcripts part is $\bbt$, which holds a classical string in $\{0,1\}^{\fpoly(n,\kappa)}$. By the completeness, for any $n,m,\rho_0$, the output states of $\pi^{\fAdv}(f)(\rho_0)$ should have the following form: the client holds a random $x\leftarrow_r \{0,1\}^n$, the server holds the corresponding $\tPRG(x)\in\{0,1\}^m$ in register $\bQ^{(out)}$, and $\bbt$ holds the transcripts; $\rho_0\in \tD(\cH_{\bS}\otimes\cH_{\bbE})$ is not affected. Finally we use $r\in \{0,1\}^L$ to denote the random coins used by $\fAdv$ and use $\fAdv(r)$ to denote the adversary running with random coin $r$ (recall that $\pi$ is a classical protocol so this could be done); then the discussion above holds with probability $1-\fneg(\kappa)$ for adversary $\fAdv(r)$ with a random $r\leftarrow_r\{0,1\}^L$.\par
    Below we will construct $\fCompressor:\{0,1\}^m\times\{0,1\}^L\rightarrow \{0,1\}^{\fpoly(n,\kappa)}$ and $\fDecompressor:\{0,1\}^{\fpoly(n,\kappa)}\times\{0,1\}^L\rightarrow\{0,1\}^m$. Intuitively they could be understood as a compressor that compresses the PRG outputs to the transcripts in $\bbt$ and a decompressor that decompresses $\bbt$ to PRG outputs.
    \begin{itemize}
    \item By \eqref{eq:13r} against adversary $\fAdv(r),r\in \{0,1\}^L$, there exists efficient quantum operations $(\fSim_0(r),\fSim_1(r))$ such that \eqref{eq:13r} holds.\par
    Define $\fCompressor:\{0,1\}^m\times\{0,1\}^L\rightarrow \{0,1\}^{\fpoly(n,\kappa)}$ as follows. We use $y\in\{0,1\}^m$ to denote its first input and use $r\in \{0,1\}^L$ to denote its second input. $\fCompressor$ first prepares the state $\fSim_0(r)(\rho_0)$, and measures to collapse the register holding $b$; then it stores $y$ in register $\bQ^{(out)}$; then it applies $\fSim_1(r)$; in the end it keeps the states in $\bbt$ and disgards all the other parts.
    By \eqref{eq:13r} we have
    \begin{equation}\label{eq:20}\tDist\begin{bmatrix}x\leftarrow_r\{0,1\}^n,r\leftarrow_r\{0,1\}^L\\ (x,\fCompressor(\tPRG(x),r))\end{bmatrix}\approx_{\epsilon}^{\tind}\tDist\begin{bmatrix}\text{exec }\pi^{\fAdv}(f)(\rho_0)\\ \text{outputs }\bx^{(out)},\bbt\end{bmatrix}\end{equation}
    \item Note that $\pi$ is a classical protocol so there is an efficient algorithm that recovers the outputs given the adversary's code together with its random coins, the initial state, and the messages from the client to the server.\par
     Define $\fDecompressor:\{0,1\}^{\fpoly(n,\kappa)}\times\{0,1\}^L\rightarrow\{0,1\}^m$ as follows. $\fDecompressor$ interprets its first input as the messages from the client to the server and interprets its second input as the server's random coins, simulates the protocol execution and outputs the outputs of $\pi$ on $\bQ^{(out)}$. Note that the initial state $\rho_0$ is not used in the operation of $\fAdv$ and the code of $\fAdv$ could be hard-coded. Then we have:
    \begin{equation}\label{eq:21}
        \tDist\begin{bmatrix}r\leftarrow_r\{0,1\}^L\\\text{exec }\pi^{\fAdv(r)}(f)(\rho_0)\\(x,t):=\text{values in }\bx^{(out)},\bbt\\ (x,\fDecompressor(t,r))\end{bmatrix} \approx_{\fneg(\kappa)}\tDist\begin{bmatrix}x\leftarrow_r\{0,1\}^n\\ (x,\tPRG(x))\end{bmatrix}
    \end{equation}
\end{itemize}
Combining \eqref{eq:20}\eqref{eq:21} we have
$$\tDist\begin{bmatrix}x\leftarrow_r\{0,1\}^n,r\leftarrow_r\{0,1\}^L\\ (x,\fDecompressor(\fCompressor(\tPRG(x),r),r))\end{bmatrix}\approx^{\tind}_{\epsilon}\tDist\begin{bmatrix}x\leftarrow_r\{0,1\}^n\\ (x,\tPRG(x))\end{bmatrix}$$
Note that $\tPRG$ is a determinisitic function. This directly implies that
$$\Pr\begin{bmatrix}
    x\leftarrow_r\{0,1\}^n,r\leftarrow_r\{0,1\}^L\\
    \fDecompressor(\fCompressor(\tPRG(x),r),r)=\tPRG(x)
\end{bmatrix}\geq 1-\epsilon-\fneg(\kappa)$$
Apply the property of PRG we get
$$\Pr\begin{bmatrix}
    u\leftarrow_r\{0,1\}^m,r\leftarrow_r\{0,1\}^L\\
    \fDecompressor(\fCompressor(u,r),r)=u
\end{bmatrix}\geq 1-\epsilon-\fneg(\kappa)$$
which is impossible information-theoretically when $m>\kappa+ \fpoly(n,\kappa)$ and $1-\epsilon\geq O(1)$. This completes the proof.
\end{proof}
\paragraph{Note} We note that Definition \ref{defn:4.2} is stated in a setting where the adversary could be quantum; thus Theorem \ref{thm:4.1r} is about the impossibility of the post-quantum security (classical protocol against quantum adversaries). We could also state the impossibility result in a setting where the adversary is assumed to be classical; our impossibility proof also works in this setting.
\subsection{Secure Function Sampling via RSPV}\label{sec:4.3}
We will construct the quantum protocol for SFS in Section \ref{sec:5}, \ref{sec:6}. This section is a preparation for the construction.\par
As discussed in Section \ref{sec:2.2}, we will view SFS as an RSPV problem, and make use of an intermediate notion called preRSPV. Below we formalize the set-up of the corresponding preRSPV.
\begin{setup}\label{setup:preRSPV}
    The basic set-up is similar to Set-up \ref{setup:1} with the following differences: we are considering a pair of protocols $(\fSFSTest,\fSFSComp)$ instead of a single protocol; and we do not model $1^{1/\epsilon}$ as part of the protocol inputs.\par
    The completeness and soundness are in Definition \ref{defn:4.3}, \ref{defn:4.4} below.
\end{setup}
\begin{defn}\label{defn:4.3}
    We say $(\fSFSTest,\fSFSComp)$ under Set-up \ref{setup:preRSPV} is complete if:
    \begin{itemize}
        \item In $\fSFSTest$, when the server is honest, the passing probability is negligibly close to $1$.
        \item In $\fSFSComp$, when the server is honest, the otuput state of the protocol is as required in Definition \ref{defn:4.1}.
    \end{itemize}
\end{defn}
\begin{defn}\label{defn:4.4}
    We say $(\fSFSTest,\fSFSComp)$ under Set-up \ref{setup:preRSPV} is $(\delta,\epsilon)$-sound if:\par
    For any efficient quantum adversary $\fAdv$, there exists an efficient quantum operation $\fSim$ such that for any initial state $\rho_0\in \tD(\cH_{\bS}\otimes\cH_{\bbE})$:\par
    If
    $$\tr(\Pi_{\fpass}(\fSFSTest^{\fAdv}(\rho_0)))>1-\delta$$
    then 
    $$\Pi_{\fpass}(\fSFSComp^{\fAdv}(\rho_0))\approx_{\epsilon}^{\tind}\Pi_{\fpass}(\underbrace{\fSim}_{\bS,\bQ^{(out)},\bflag}(\underbrace{\rho_{tar}}_{\bx^{(out)},\bQ^{(out)}}\otimes\rho_0))$$
    where $\rho_{tar}$ is as defined in Definition \ref{defn:4.1}.
\end{defn}
As discussed in Section \ref{sec:3.3.7}, the preRSPV could be amplified to an RSPV, which is the SFS that we formalized in Set-up \ref{setup:1}. (Note that although the set-ups are slightly different in the sense that Set-up \ref{setup:prerspvnt} considers a finite set of states while Set-up \ref{setup:preRSPV} works on a function $f$, the amplification still works.)\par
Then we briefly discuss the organization of Section \ref{sec:5} and \ref{sec:6}, where we construct our SFS protocol. We will first formalize the succinct testing for two-party relations in Section \ref{sec:5}, and then formalize the full SFS protocol in Section \ref{sec:6}. In Section \ref{sec:6} we first give the preRSPV $(\fSFSTest,\fSFSComp)$ in Section \ref{sec:6.1.1}, and then amplify it to an SFS in Section \ref{sec:6.1.2}. The intuitions of the protocols are described in Section \ref{sec:techoverv}. The overview of the security proof of $(\fSFSTest,\fSFSComp)$ is given in Section \ref{sec:6.2.1}, which makes use of the notions in \cite{cvqcinlt}. The security proof is completed in Section \ref{sec:6.2.2} and \ref{sec:6.2.3}.\par
Finally we discuss some conventions that might be implicit in later constructions and proofs. First we could review Convention \ref{conv:1}, \ref{conv:2}, \ref{conv:3}, \ref{conv:4}. We additionally note that in the protocol description we usually omit the explicit statement in the form of ``set $\bflag$ to be $\ffail$ if any step fails'' for simplicity: for example, if we are constructing Protocol A and this Protocol A calls Protocol B, we omit ``set $\bflag$ to be $\ffail$ if Protocol B fails''.
\section{Protocol for Reducing Server-to-Client Communications}\label{sec:5}
In this section we formalize the primitive that we call succinct testing for two-party relations, and construct the protocol. As said before, this protocol is a variant of the compiler in \cite{BKLMM22}. In Section \ref{sec:5.1} we formalize the problem set-up; in Section \ref{sec:5.2} we give the protocol construction.
\subsection{Problem Set-up}\label{sec:5.1}
\begin{nota}
    Recall that a relation between $S_1$ and $S_2$ is a function $S_1\times S_2\rightarrow \{0,1\}$. The efficiency is defined in the same way as Definition \ref{nota:3.5}, \ref{nota:3.8}.
\end{nota}
In this section we formalize the notion of succinct testing for two-party relations. Below we first formalize its set-up.
\begin{setup}\label{setup:succ}
The set-up for succinct testing for two-party relations protocols is as follows. We consider a protocol between a client and a server.\par
The inputs, outputs and registers are similar to Set-up \ref{setup:sa} with the following difference: $x\in \{0,1\}^n$ is the client-side input instead of a public input.\par
The completeness and soundness are in Definition \ref{defn:5.1}, \ref{defn:5.2} below. The efficiency is defined in the common way. We additionally want the property that the communication complexity is $\fpoly(n,\kappa)$.
\end{setup}
\begin{defn}[Completeness]\label{defn:5.1}
    We say a protocol under Set-up \ref{setup:succ} is complete if when the server is honest and initially $\bS_w$ holds a value $w$ that satisfies $R(x,w)=1$, the client accepts with probability $\geq 1-\fneg(\kappa)$.
\end{defn}
\begin{defn}[Soundness]\label{defn:5.2}We say a protocol under Set-up \ref{setup:succ} is $(\delta, \epsilon)$-sound if the following holds. For any malicious server $\fAdv$ there exists an efficient quantum operation $\fExt$ such that for any polynomial-size inputs $1^n,1^m,R,x$, for any initial state $\rho_0\in \tD(\cH_{\bS})$, if the client accepts with probability $\geq 1-\delta$, there is
    $$\tr(\Pi_{w:R(x,w)=1}^{\bS_{w}}\fExt(\rho_0))\geq 1-\epsilon-\fneg(\kappa)$$
    
\end{defn}
\subsection{Protocol Design}\label{sec:5.2}
Our succinct testing protocol is as follows. The protocol uses a family of collapsing hash functions as discussed in Section \ref{sec:3.3.2}. Then recall that there exists a succinct argument of knowledge protocol for NP assuming collapsing hash functions (see Section \ref{sec:3.3.4}).
\begin{mdframed}[backgroundcolor=black!10]
    The succinct testing protocol $(\fsuccTest)$ is as follows. 
    \begin{prtl}[$\fsuccTest$]\label{prtl:succtest}This protocol works under Set-up \ref{setup:succ}. The protocol inputs are: public parameters $1^n,1^m,1^\kappa$, public relation $R:\{0,1\}^n\times\{0,1\}^m\rightarrow\{0,1\}$, client-side input $x\in \{0,1\}^n$.\par
        Suppose $\fHash$ is a family of collapsing hash functions. Suppose $\fSAoK$ is a succinct arguments of knowledge protocol for NP that is $(\delta,O(\delta))$-sound for all $\delta>0$.
        \begin{enumerate}
            \item The client samples $h\leftarrow \fHash(1^m,1^\kappa)$ and sends it to the server.
            \item The server computes $h(w)=c$ and sends $c$ to the client.
            \item Using $\fSAoK$, the server proves the following statement to the client:
            $$\exists w: h(w)=c$$
            \item The client reveals $x$ to the server.
            \item Using $\fSAoK$, the server proves the following statement to the client:
            $$\exists w: R(x,w)=1\land h(w)=c$$
        \end{enumerate}
    \end{prtl}
\end{mdframed}
The correctness and the efficiency are from the protocol description. The communication complexity is $\fpoly(n,\kappa)$. The assumption used in the protocol is the collapsing hash functions, used in both the commitment (step 2) and the succinct arguments of knowledge protocol.\par
Below we state the soundness.
\begin{thm}\label{thm:5.1}
   For any $\delta>0$, $\fsuccTest$ is $(\delta,O(\delta))$-sound.
\end{thm}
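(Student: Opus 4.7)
The plan is to extract the witness $w$ by invoking the $(\delta,O(\delta))$-soundness of $\fSAoK$ on the second (post-reveal) argument-of-knowledge in Protocol \ref{prtl:succtest}. First I would observe that since $\fAdv$ makes the overall client accept with probability at least $1-\delta$ and overall acceptance is the conjunction of the two $\fSAoK$ acceptances, each $\fSAoK$ invocation individually accepts with probability at least $1-\delta$.

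I would then construct the extractor $\fExt$ as follows. It first simulates steps 1-4 of $\fsuccTest$ honestly against $\fAdv$ starting from $\rho_0$: sample $h$, receive the commitment $c$, simulate the first $\fSAoK$ execution, and reveal $x$. Let the resulting server-side state be $\rho_4$. Then it invokes the extractor guaranteed by the $(\delta,O(\delta))$-soundness of $\fSAoK$ (Theorem \ref{thm:existzk}) for the second argument-of-knowledge, whose statement is $\exists w:\,R(x,w)=1\land h(w)=c$. Since the second $\fSAoK$ accepts on $\rho_4$ with probability at least $1-\delta$, that extractor writes into $\bS_w$ a classical $w$ satisfying $R(x,w)=1$ with probability at least $1-O(\delta)-\fneg(\kappa)$, which is exactly what Definition \ref{defn:5.2} asks for.

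For Theorem \ref{thm:5.1} as stated this already suffices, but I would note that the first $\fSAoK$ invocation together with the collapsing property of $\fHash$ provide a stronger, implicit guarantee that the witness is already fixed at the moment of commitment $c$. This is what will make $\fsuccTest$ usable as a drop-in replacement for the server's declarations inside the larger Protocols \ref{prtl:1}-\ref{prtl:2} (compare the discussion in Section \ref{sec:2.5}). If one prefers to surface this in the proof, one extracts $w_1$ from the first $\fSAoK$, uses collapsing to argue that measuring $\bS_w$ to $w_1$ right after the commitment does not noticeably change the rest of the transcript, then extracts $w_2$ from the second $\fSAoK$ and concludes $w_1=w_2$ via the collision-resistance implied by collapsing.

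The main technical subtlety I expect is lifting the per-state soundness of Theorem \ref{thm:existzk} to the mixed state $\rho_4$ induced by the randomness of steps 1-4. I would avoid a naive Markov argument, which would give only $O(\sqrt{\delta})$ extraction error, by invoking the soundness of $\fSAoK$ directly on $\rho_4$ viewed as a single (mixed) input state for the second argument-of-knowledge, so that the $(\delta,O(\delta))$ bound of Theorem \ref{thm:existzk} translates to an $O(\delta)$ end-to-end error, matching the claim.
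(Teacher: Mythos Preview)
Your primary route has a genuine gap: the extractor $\fExt$ in Definition~\ref{defn:5.2} must not use the client's private input $x$. The quantifier order is ``there exists $\fExt$ such that for all inputs $x$'', and more to the point, this is the entire content of Set-up~\ref{setup:succ}: $x$ is private to the client, and the goal is to certify that the server already holds a good $w$ \emph{before} it sees $x$. (This is exactly how the primitive is consumed in Lemma~\ref{lem:6.3}, where the server-side operation $O$ must produce the correct value without access to the client's secrets $x_0,x_1,r_0^{(out)},r_1^{(out)}$.) Your $\fExt$ simulates step~4 by ``revealing $x$'' and then calls the extractor for the second $\fSAoK$; that extractor is $x$-dependent as well. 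So your main construction is not a valid $\fExt$ for Definition~\ref{defn:5.2}.

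What you describe as an optional strengthening is in fact the required argument, and it is precisely what the paper does. The paper's $\fExt$ simulates only steps~1--2 and then applies the extractor $\fExt_1$ for the \emph{first} $\fSAoK$, whose statement ``$\exists w:h(w)=c$'' is $x$-free. The second $\fSAoK$ and its $x$-dependent extractor $\fExt_2(x)$ appear only in the \emph{analysis}: collapsing lets one measure $\bS_w$ after $\fExt_1$ without disturbing the rest of the interaction, and then $\fExt_2(x)$ plus collision-resistance force that measured value to satisfy $R(x,\cdot)=1$. So you have all the ingredients, but you have inverted which one is the extractor and which one is the analysis tool.
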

\begin{proof}
    Consider an efficient quantum adversary $\fAdv$. Suppose on some inputs and initial state $\rho_0$ the protocol passes with probability $\geq 1-\delta$. Then the adversary should pass with probability $\geq 1-\delta$ in both the $\fSAoK$ in step 3 and step 5.\par
    Use $\rho_2$ to denote the joint state between the server and the client. Apply the soundness of $\fSAoK$ to step 3 we know that there exists an efficient quantum operation $\fExt_1$ working on the server side such that
    \begin{equation}\label{eq:23}\tr(\Pi_{w:h(w)=c}^{\bS_{w}}\fExt_1(\rho_2))\geq 1-O(\delta)-\fneg(\kappa)\end{equation}
    without loss of generality we could assume $\fExt_1$ is a unitary operation.\par
Apply the soundness of $\fSAoK$ to step 5 we know that there exists an efficient quantum operation $\fExt_2$ which takes $x$ as input and works on the server side such that
\begin{equation}\label{eq:24}\tr(\Pi_{w:R(x,w)=1\land h(w)=c}^{\bS_{w}}\fExt_2(x)(\rho_2))\geq 1-O(\delta)-\fneg(\kappa)\end{equation}
Now we apply the collapsing property of the hash functions to the state in \eqref{eq:23}. Use $\fM$ to denote the operation that measure the register $\bS_{w}$ and stores the results in register $\bS_{\ttemp}$; here $\bS_{\ttemp}$ is a new register that is not touched by either the adversary or the extractor. Then by the collapsing property we know $\fExt_1(\rho_2)$ is $O(\delta)$-indistinguishable to $\fM(\fExt_1(\rho_2))$ against all the efficient distinguishers that do not touch $\bS_{\ttemp}$. Substituting this to \eqref{eq:24} we get
\begin{equation}\label{eq:25}
    \tr(\Pi_{w:R(x,w)=1\land h(w)=c}^{\bS_{w}}\fExt_2(x)(\fExt_1^{-1}(\fM(\fExt_1(\rho_2)))))\geq 1-O(\delta)-\fneg(\kappa)
\end{equation}
Then since $h$ is collision-resistant the values in $\bS_w$ and $\bS_{\ttemp}$ should be the same on the space that $h(w)=c$:
\begin{equation}\label{eq:252}
    \tr(\Pi^{\bS_w,\bS_{\ttemp}}_{(w_1,w_2):w_1=w_2}\Pi_{w:R(x,w)=1\land h(w)=c}^{\bS_{w}}\fExt_2(x)(\fExt_1^{-1}(\fM(\fExt_1(\rho_2)))))\geq 1-O(\delta)-\fneg(\kappa)
\end{equation}
This further implies that the value in $\bS_{\ttemp}$ should also satisfy $R(x,w)=1$ with high probability, which further implies:
\begin{equation}\label{eq:253}
    \tr(\Pi_{w:R(x,w)=1\land h(w)=c}^{\bS_{w}}(\fExt_1(\rho_2)))\geq 1-O(\delta)-\fneg(\kappa)
\end{equation}
Thus we could define the overall extractor $\fExt$ as follows: $\fExt$ simulates the first and second step of $\fsuccTest$ on its own and applies $\fExt_1$ to extract the witness to register $\bS_{w}$. Then \eqref{eq:253} implies $\tr(\Pi_{w:R(x,w)=1}^{\bS_{w}}(\fExt(\rho_0)))\geq 1-O(\delta)-\fneg(\kappa)$ which is what we want.
\end{proof}
    
\section{Construction of Our Secure Function Sampling Protocol}\label{sec:6}
In this section we construct our SFS protocol. In Section \ref{sec:6.1} we give the protocol construction. In Section \ref{sec:6.2} we give the security proof.
\subsection{Protocol Design}\label{sec:6.1}
As said before, we first construct a preRSPV, and then amplify it to an RSPV which is exactly an SFS.
\subsubsection{Construction of the preRSPV}\label{sec:6.1.1}
\begin{mdframed}[backgroundcolor=black!10]
    The preRSPV $(\fSFSTest, \fSFSComp)$ is as follows. The protocols works under Set-up \ref{setup:preRSPV}. The protocol inputs are: public parameters $1^n,1^m,1^\kappa$, public function $f:\{0,1\}^n\rightarrow \{0,1\}^m$.
    \begin{prtl}[$\fSFSTest$]\label{prtl:3}
    \begin{enumerate}
        \item The client samples and stores the following strings: function inputs $x_0,x_1\in \{0,1\}^n$; input padding $r_0^{(in)},r_1^{(in)}\in \{0,1\}^\kappa$; output padding $r_0^{(out)},r_1^{(out)}\in \{0,1\}^\kappa$. Then the client prepares and sends \begin{equation}\label{eq:1r2n}\frac{1}{\sqrt{2}}(\ket{0}\ket{x_0}\ket{r_0^{(in)}}\ket{r_0^{(out)}}+\ket{1}\ket{x_1}\ket{r_1^{(in)}}\ket{r_1^{(out)}})\end{equation} to the server.
        \item The server evaluates $f$ on the received state and gets the state
        $$\frac{1}{\sqrt{2}}(\underbrace{\ket{0}}_{\bQ^{(sub)}}\underbrace{\ket{x_0}}_{\bQ^{(in)}}\underbrace{\ket{r_0^{(in)}}}_{\bQ^{(inpad)}}\underbrace{\ket{r_0^{(out)}}}_{\bQ^{(outpad)}}\underbrace{\ket{f(x_0)}}_{\bQ^{(out)}}+\ket{1}\ket{x_1}\ket{r_1^{(in)}}\ket{r_1^{(out)}}\ket{f(x_1)}).$$
        The server measures $\bQ^{(in)}$ and $\bQ^{(inpad)}$ on the Hadamard basis; denote the measurement results as $d^{(in)}\in \{0,1\}^n,d^{(inpad)}\in \{0,1\}^\kappa$. The server sends back these results to the client.\par
        The client checks $d^{(inpad)}\neq 0^\kappa$. Then it calculates and stores 
        \begin{equation}\label{eq:32}\theta_0=d^{(in)}\cdot x_0+d^{(inpad)}\cdot r_0^{(in)}\mod 2\end{equation} \begin{equation}\label{eq:33}\theta_1=d^{(in)}\cdot x_1+d^{(inpad)}\cdot r_1^{(in)}\mod 2\end{equation}
        Now the remaining state on the server side is :
        \begin{equation}\label{eq:2r2n}\frac{1}{\sqrt{2}}((-1)^{\theta_0}\ket{0}\ket{r_0^{(out)}}\ket{f(x_0)}+(-1)^{\theta_1}\ket{1}\ket{r_1^{(out)}}\ket{f(x_1)}).\end{equation}
       \item The client randomly chooses to execute one of the following two tests with the server:
       \begin{itemize}\item (Computational basis test) The client asks the server to measure $\bQ^{(sub)}$, $\bQ^{(outpad)}$ and $\bQ^{(out)}$ on the computational basis; denote the measurement results as $c$. The client and the server execute $\fsuccTest$ for the following relation $$c\in\{0||r_0^{(out)}||f(x_0),1||r_1^{(out)}||f(x_1)\}.$$
        \item (Hadamard basis test) The client asks the server to measure $\bQ^{(sub)}$, $\bQ^{(outpad)}$ and $\bQ^{(out)}$ on the Hadamard basis; denote the measurement results as $d^{(sub)}\in \{0,1\},d^{(outpad)}\in \{0,1\}^\kappa,d^{(out)}\in \{0,1\}^m$. The client and the server execute $\fsuccTest$ for the following relation
            \begin{equation}\label{eq:3r2n}d^{(sub)}+d^{(outpad)}\cdot (r_0^{(out)}+r_1^{(out)})+d^{(out)}\cdot(f(x_0)+f(x_1))\equiv \theta_1-\theta_0 \mod 2.\end{equation}
       \end{itemize}
    \end{enumerate}
    \end{prtl}
    \begin{prtl}[$\fSFSComp$]\label{prtl:3n}
        \begin{enumerate}
            \item[1, 2.] The step 1 and step 2 are the same as Protocol \ref{prtl:3}.
            \item[3.] The client asks the server to measure $\bQ^{(sub)}$ and $\bQ^{(outpad)}$ on the computational basis; denote the measurement results as $c$. The server sends back $c$ to the client.\par
             The client checks $c\in \{0||r_0^{(out)},1||r_1^{(out)}\}$. The client chooses $x_0$ as the client-side outputs if $c=0||r_0^{(out)}$ and chooses $x_1$ as the client-side outputs if $c=1||r_1^{(out)}$. The server-side outputs are stored in $\bQ^{(out)}$.\par
        \end{enumerate}
    \end{prtl}
\end{mdframed}
The correctness and efficiency are from the protocol description; the communication complexity is $\fpoly(n,\kappa)$. The assumption used in the protocol is the collapsing hash functions, used in the calling to the succinct testing protocol.\par
The soundness is stated below.
\begin{thm}\label{thm:6.1}
    For any $\delta>0$, $(\fSFSTest,\fSFSComp)$ is $(\delta,\fpoly(\delta))$-sound.
\end{thm}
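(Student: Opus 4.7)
The plan is to follow the proof guideline from \cite{cvqcinlt}, decomposing the soundness argument into a sequence of restrictions on the form of the joint client–server state right after step 2. Concretely, I would define the notions of \emph{basis-honest form}, \emph{basis-phase correspondence form}, and \emph{phase-honest form} adapted to the state in \eqref{eq:2r2n}, then show that (i) passing the $\fsuccTest$ in the computational test branch forces basis honesty on registers $\bQ^{(sub)},\bQ^{(outpad)},\bQ^{(out)}$; (ii) passing in the Hadamard test branch together with the stored $\theta_0,\theta_1$ enforces the correct relative phase between the two branches; (iii) from (i) and (ii) one recovers a state of the form $\tfrac{1}{\sqrt{2}}((-1)^{\theta_0}\ket{0}\ket{r_0^{(out)}}\ket{f(x_0)}\ket{\varphi_0}+(-1)^{\theta_1}\ket{1}\ket{r_1^{(out)}}\ket{f(x_1)}\ket{\varphi_1})$ up to $\fpoly(\delta)$ error, where each $\ket{\varphi_b}$ on the auxiliary server register can be simulated from $f(x_b)$ alone. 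Finally, the computation-mode output follows from the standard branch-collapse analysis of step 3 of $\fSFSComp$.

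First I would pass from $\fsuccTest$ to useful information on the adversary's state. By Theorem \ref{thm:5.1}, for each of the two test branches (computational and Hadamard), conditioning on passing with probability $\geq 1-\delta$ yields efficient extractors that recover, respectively, a classical string $c$ satisfying $c\in\{0\|r_0^{(out)}\|f(x_0),1\|r_1^{(out)}\|f(x_1)\}$ and a tuple $(d^{(sub)},d^{(outpad)},d^{(out)})$ satisfying \eqref{eq:3r2n}, with error $O(\delta)$. I would then invoke the collapsing property of the hash function used inside $\fsuccTest$ (as in the proof of Theorem \ref{thm:5.1}) to argue that the extraction is equivalent, up to computational indistinguishability, to having actually measured $\bQ^{(sub)},\bQ^{(outpad)},\bQ^{(out)}$ in the appropriate basis. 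This yields the basis-honest form on the computational branch: after step 2 the server-side state is $\fpoly(\delta)$-close to
\begin{equation*}
\tfrac{1}{\sqrt{2}}\bigl((-1)^{\theta_0}\ket{0}\ket{r_0^{(out)}}\ket{f(x_0)}\ket{\varphi_0}+(-1)^{\theta_1}\ket{1}\ket{r_1^{(out)}}\ket{f(x_1)}\ket{\varphi_1}\bigr)
\end{equation*}
up to a server-side unitary; the padding $r_b^{(out)}$ plays the role of an identifier that lets the client branch-collapse cleanly in step 3 of $\fSFSComp$.

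Next I would use the Hadamard branch to upgrade basis honesty to phase honesty, i.e., to show that each $\ket{\varphi_b}$ is simulable from $f(x_b)$ alone. Since $d^{(inpad)}\neq 0^\kappa$ and $r_b^{(in)}$ is uniform and hidden, the phases $\theta_0,\theta_1$ in \eqref{eq:32}\eqref{eq:33} are (conditionally) uniform and independent, which guarantees that a single-branch adversary could pass the Hadamard check with probability exactly $\tfrac12$. Following the interference calculation sketched in Section \ref{sec:2.3.2}, the fact that the full two-branch state passes with probability $\geq 1-\fpoly(\delta)$ forces the passing and failing components of the two branches to interfere as in \eqref{eq:6}\eqref{eq:7}. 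This in turn exhibits an explicit operation mapping $\ket{\varphi_0}$ to $\ket{\varphi_1}$ that only uses $f(x_0),f(x_1)$ and freshly sampled inputs, so one branch can be re-created from the other given only the target function value. Packaging this gives an efficient simulator $\fSim$ that, on input $\rho_{tar}\otimes\rho_0$, samples the other branch internally, runs $\fAdv$ on the simulated superposition, and outputs a state $\fpoly(\delta)$-indistinguishable from $\Pi_{\fpass}(\fSFSComp^{\fAdv}(\rho_0))$.

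The main obstacle I expect is bookkeeping the error blow-ups through the chain of reductions so that the final bound is a genuine polynomial in $\delta$ rather than a weaker function. In particular: (a) the extractor guarantees of $\fsuccTest$ are only $(\delta,O(\delta))$-sound and each of the two branches is only reached with probability $\tfrac12$, so one must carefully average over test-branch choices before applying extraction; (b) the collapsing swap that turns the extracted classical witness into an in-place measurement introduces another computational error that must be propagated; (c) the linear-algebraic step that upgrades branch decomposition into a phase-honest form requires quantitative versions of \eqref{eq:6}\eqref{eq:7}, which I would prove using a Jordan-type decomposition of the Hadamard-test projectors, and care is needed because here $\delta$ is only small, not negligible. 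Once these quantitative pieces are assembled in the order above, the final simulator and the $\fpoly(\delta)$ indistinguishability bound required by Definition \ref{defn:4.4} follow directly.
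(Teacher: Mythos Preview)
Your plan is essentially the paper's approach: restrict the post-step-2 state successively to basis-honest form, then basis-phase correspondence form, then full simulability, using the computational and Hadamard test branches, and finish with the branch-collapse analysis of step 3 of $\fSFSComp$. The interference calculation you sketch matches the paper's Lemmas \ref{lem:6.3}--\ref{lem:6.6}.

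One point your proposal glosses over is where the branch/phase independence actually comes from. You write that ``$\theta_0,\theta_1$ are (conditionally) uniform and independent'' because $r_b^{(in)}$ is uniform and hidden, and use this to bound the single-branch Hadamard passing probability by $\tfrac12$. But after step 2 an arbitrary adversary has acted on the whole state, and in the basis-honest decomposition $\ket{\psi}=\Pi_0\ket{\psi}+\Pi_1\ket{\psi}$ there is no a priori reason the $b$-branch is independent of $\theta_{1-b}$ in the purified picture. The paper isolates this as a separate lemma (Lemma \ref{lem:6.4}) proved not from any test but from the \emph{preparation structure} of steps 1--2: one constructs an auxiliary state $\ket{\psi_{mid}}$ (copying $\bQ^{(in)}$ before the adversary acts and uncomputing after) and then applies an operation $\fAlignR$ on the reference system alone to disentangle $\theta_{1-b}$ from the $b$-branch. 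This is the technically delicate step, and it is not captured by the sentence ``$r_b^{(in)}$ is hidden.'' Once you insert this argument between your (i) and (ii), the rest of your outline goes through; without it, the $\tfrac12$ bound for a single branch is unjustified after the adversary has acted. The remaining devices you mention --- invoking collapsing to turn extraction into an in-place measurement, and a Jordan-type decomposition for the analogues of \eqref{eq:6}\eqref{eq:7} --- are not needed here: the paper uses only the bare extractor guarantee of $\fsuccTest$ and a direct linear-algebra computation (its equations \eqref{eq:42op}--\eqref{eq:46op}).
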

The proof of Theorem \ref{thm:6.1} is given in Section \ref{sec:6.2}.
\subsubsection{Amplification from preRSPV to SFS}\label{sec:6.1.2}
Below we further amplify the preRSPV to an RSPV which is exactly an SFS. (See Section \ref{sec:4.3} and Section \ref{sec:3.3.7} for related discussions).
\begin{mdframed}[backgroundcolor=black!10]
    The SFS (or RSPV) protocol $\fSFS$ is as follows.
    \begin{prtl}[$\fSFS$]\label{prtl:6}This protocol works under Set-up \ref{setup:preRSPV}. The protocol inputs are: public parameters $1^n,1^m,1^\kappa,1^{1/\epsilon}$, public function $f:\{0,1\}^n\rightarrow \{0,1\}^m$.\par
    The protocol is by applying the amplification procedure in Protocol \ref{prtl:amptempltemp}, \ref{prtl:amptempl} to $(\fSFSTest,\fSFSComp)$.
    \end{prtl}
\end{mdframed}
The correctness and the efficiency are from the protocol description. The communication complexity is $\fpoly(n,\kappa,1/\epsilon)$. The assumption used in the protocol is the collapsing hash functions.\par
Below we state and prove the security.
\begin{thm}\label{thm:6.2}
    Protocol \ref{prtl:6} is $\epsilon$-sound.
\end{thm}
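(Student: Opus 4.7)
The plan is to derive Theorem \ref{thm:6.2} essentially as a direct corollary of Theorem \ref{thm:6.1} combined with the preRSPV-to-RSPV amplification result (Theorem \ref{thm:3.10}). First, I would observe that Protocol \ref{prtl:6} is defined to be the instantiation of the generic amplification template (Protocol \ref{prtl:amptempltemp} followed by Protocol \ref{prtl:amptempl}) applied to the pair $(\fSFSTest,\fSFSComp)$, so the only task is to show that this amplification produces an $\epsilon$-sound RSPV protocol under Set-up \ref{setup:1}.

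To apply Theorem \ref{thm:3.10} I need a fixed pair $(\delta,\epsilon_0)$ for which $(\fSFSTest,\fSFSComp)$ is $(\delta,\epsilon_0)$-sound and such that $\epsilon_0<\epsilon$. By Theorem \ref{thm:6.1}, for every $\delta>0$ the pair is $(\delta,\fpoly(\delta))$-sound, where $\fpoly(\delta)$ denotes some fixed polynomial in $\delta$ coming from the concrete soundness bound. Given the target error tolerance $\epsilon$ (which is part of the protocol input via $1^{1/\epsilon}$), I will choose $\delta$ small enough that $\fpoly(\delta)\le \epsilon/2$; since $\delta$ enters Protocol \ref{prtl:amptempltemp} only through the repetition parameter $L=\Theta(1/(\delta(\epsilon-\epsilon_0)^3))$, which remains polynomial in $n,\kappa,1/\epsilon$, this choice keeps Protocol \ref{prtl:6} efficient and its communication complexity $\fpoly(n,\kappa,1/\epsilon)$.

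With $\epsilon_0:=\fpoly(\delta)\le \epsilon/2<\epsilon$, Theorem \ref{thm:3.10} directly yields that the protocol obtained from the two-stage amplification is $\epsilon$-sound in the sense of Definition \ref{defn:3.14}, which under our Set-up \ref{setup:1} translates exactly into Definition \ref{defn:4.2} (the ideal functionality $\tSFSIdeal$ in Definition \ref{defn:4.2} is the specialization of $\tRSPVIdeal$ of Definition \ref{defn:3.14} to the state family $\{\ket{x}\ket{f(x)}\}_{x\in\{0,1\}^n}$, and $\rho_{tar}$ matches in both definitions). This gives the desired $\epsilon$-soundness.

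I do not expect a serious technical obstacle here: Theorem \ref{thm:3.10} already encapsulates both the preRSPV-to-RSPV amplification of \cite{zhang24} and the passing-space-to-all-space lift needed to match the simulation-based soundness of Definition \ref{defn:4.2}. The only mild subtleties are (i) checking that the function-parameterized set-up of SFS is compatible with the finite-state-family set-up in which Theorem \ref{thm:3.10} is stated (which is immediate because the amplification template treats the target state only through the honest output $\rho_{tar}$ and makes no structural use of the index set), and (ii) confirming that $\delta$ and hence $L$ can be chosen as functions of $1/\epsilon$ without blowing up efficiency, which follows from the polynomial dependence in Theorem \ref{thm:6.1}. Both are routine bookkeeping.
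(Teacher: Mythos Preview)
Your proposal is correct and takes essentially the same approach as the paper: the paper's proof is the one-line ``By combining Theorem \ref{thm:6.1} and Theorem \ref{thm:3.10},'' and your write-up simply spells out the bookkeeping (choosing $\delta$ so that $\fpoly(\delta)<\epsilon$, and checking the set-up compatibility) that this sentence leaves implicit.
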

\begin{proof}
    By combining Theorem \ref{thm:6.1} and Theorem \ref{thm:3.10}.
\end{proof}
What remains to be done is the proof of Theorem \ref{thm:6.1}, which is given in the rest of this section.
\subsection{Security Proofs}\label{sec:6.2}
In this section we prove Theorem \ref{thm:6.1}.
\subsubsection{An overview of the proof of Theorem \ref{thm:6.1}}\label{sec:6.2.1}
As discussed in Section \ref{sec:2.4.2}, to prove Theorem \ref{thm:6.1}, we take an approach based on the notions in \cite{cvqcinlt}. We will define a series of ``forms'' of states, and analyze different components of the protocol $(\fSFSTest,\fSFSComp)$. We will prove these different components of the protocol restrict the forms of states step by step. Below we elaborate our approach.\par
First, to analyze the security, we work on the purified joint state between the client and the server, that is, we without loss of generality assume all the states are purified by some reference registers; this makes the security proof easier to work on.\par
Then we focus on the states by the end of the step 2 of the protocol. Note that the honest form of the state is given in \eqref{eq:2r2n}, which is:
\begin{equation}\label{eq:2r2n2}\frac{1}{\sqrt{2}}((-1)^{\theta_0}\ket{0}\ket{r_0^{(out)}}\ket{f(x_0)}+(-1)^{\theta_1}\ket{1}\ket{r_1^{(out)}}\ket{f(x_1)}).\end{equation}
We would like to prove the server's state in the malicious setting should also be close to this state in certain sense. To achieve this we define \emph{basis-honest form} and \emph{basis-phase correspondence form}, which are both sets of states, roughly as follows:
\begin{itemize}
    \item The basis-honest form contains states in the form of 
    $$\ket{0}\ket{r^{(out)}_0}\ket{f(x_0)}\ket{\psi_0}+\ket{1}\ket{r^{(out)}_1}\ket{f(x_1)}\ket{\psi_1}$$
    That is, the server-side registers $\bQ^{(sub)}\bQ^{(outpad)}\bQ^{(out)}$ store the right values as in the honest behavior, but there is no control on the adversary's state on the other registers. Note that we omit the client-side registers and the phases haven't been considered.
    \item The basis-phase correspondence form contains states in the form of 
    $$\sum_{\theta_0,\theta_1\in \{0,1\}^2}\underbrace{\ket{\theta_0}\ket{\theta_1}}_{\text{client side}}(\ket{0}\ket{r^{(out)}_0}\ket{f(x_0)}\ket{\psi_{0,\theta_0}}+\ket{1}\ket{r^{(out)}_1}\ket{f(x_1)}\ket{\psi_{1,\theta_1}})$$
    Here we take the phase information into consideration and require that the $0$-branch does not depend on the value of $\theta_1$ and the $1$-branch does not depend on the value of $\theta_0$.
\end{itemize}
Based on these notions, our approach for analyzing the output state of the step 2 goes as follows:
\begin{enumerate}
    \item If the server could pass the computational basis test from a state $\ket{\psi}$, then $\ket{\psi}$ is roughly (in certain sense) in the basis-honest form up to an isometry.
    \item If $\ket{\psi}$ is prepared by the first two steps of $\fSFSTest$ and is in the basis-honest form, then $\ket{\psi}$ is roughly (in certain sense) in a basis-phase correspondence form.
    \item If $\ket{\psi}$ is in a basis-phase correspondence form and satisfies certain efficiently-preparable property, if it could pass the Hadamard basis test, then $\ket{\psi}$ is roughly (in certain sense) the state \eqref{eq:2r2n}.
\end{enumerate}
In summary different components of $\fSFSTest$ controls the forms of states step by step. (Note that some form of efficiently preparable condition is needed to go through.)\par
Once we have control on the form of output states by the step 2 of $(\fSFSTest,\fSFSComp)$, the analysis of the remaining steps is relatively easy. The third step of $\fSFSComp$ basically forces the server to collapse the state; from the values of the output padding the client could keep the right function input $x$, while the server only holds the corresponding $f(x)$.
\paragraph{Comparison to \cite{cvqcinlt}}We note that many notions are from \cite{cvqcinlt} and we compare both works as follows.
\begin{itemize}\item On the one hand the technical difficulties are much simpler than that of \cite{cvqcinlt}. \cite{cvqcinlt} needs to analyze several tests and to reduce the technical works \cite{cvqcinlt} sacrifices, for example, composable security and efficient simulation. Here we do not need to sacrifice these desirable properties.
    \item On the other hand since we want our protocol to be (sequentially) composable we need careful technical works to construct the simulator.
\end{itemize}
\subsubsection{Definitions and statements}\label{sec:6.2.2}
\begin{setup}\label{setup:sfspreproof} The set-up for analyzing the output state by the step 2 of $(\fSFSTest,\fSFSComp)$ is as follows.\par
    The register naming and set-up is as follows:
\begin{itemize}
    \item The initial state is $\ket{\varphi_{\text{init}}}\in \cH_{\bS}\otimes\cH_{\bbE}$.
    \item The client holds registers $\bx_0$, $\bx_1$, $\btheta_0,\btheta_1$, $\bbbr^{(out)}_0$, $\bbbr^{(out)}_1$. The values in these registers are $x_0,x_1,\theta_0$, etc, as used in Protocol \ref{prtl:3}, \ref{prtl:3n}; so the domain of values in these registers follow the description in the protocol. In the honest setting these values should be all uniformly random.\par
    Use $\bx$ to denote $(\bx_0,\bx_1)$ and $\btheta,\bbbr^{(in)},\bbbr^{(out)}$ are defined similarly. Use $\bd^{\text{(step 2)}}$ to denote the client side registers holding $(d^{(in)},d^{(inpad)})$.
    \item The server holds registers $\bQ^{(sub)}$, $\bQ^{(outpad)}$, $\bQ^{(out)}$. In the honest setting the server's state should be \eqref{eq:2r2n}.
    \item We could assume the following client-side registers have been disgarded: $\bbbr^{(in)},\bd^{\text{(step 2)}}$.
    \item We consider a purification system (or called reference system) $\bR$, which contains the purification system for each classical registers (for example, $\bx$ and $\bbbr^{(out)}$); we also assume the disgarded registers are part of $\bR$. Denote the purification corresponding to $\btheta$ as $\bR_{\btheta}$.
\end{itemize}
Then we define $\Pi_0$ to be the projection onto the space that the value of $\bQ^{(sub)}\bQ^{(outpad)}\bQ^{(out)}$ is equal to $0||\bbbr^{(out)}_0||f(\bx_0)$ (where we use the register symbol to denote the corresponding values in them). Similarly define $\Pi_1$ to be the projection onto the space that the value of $\bQ^{(sub)}\bQ^{(outpad)}\bQ^{(out)}$ is equal to $1||\bbbr^{(out)}_1||f(\bx_1)$. Thus equation \eqref{eq:2r2n} is invariant under the operation of $(\Pi_0+\Pi_1)$.\par
We name different components of the protocol as follows.  We use $\fFirstTwoStep$ to denote the first two steps of $(\fSFSTest,\fSFSComp)$; use $\fSecondStep$ to denote the second step of $(\fSFSTest,\fSFSComp)$; use $\fComputationalTest$ to denote the computational basis test; use $\fHadamardTest$ to denote the Hadamard basis test; use $\fThirdStepComp$ to denote the third step of $\fSFSComp$.\par
Finally we define some special purified joint states. Note that the first step of $(\fSFSTest,\fSFSComp)$ the client prepares and sends a state to the server; denote the purification of this state as
$$\ket{\varphi_{\text{step 1}}}=\sum_{x_0,x_1,r_0^{(in)},r_1^{(in)},r_0^{(out)},r_1^{(out)}\in \{0,1\}^{2n+4\kappa}}\frac{1}{\sqrt{2^{2n+4\kappa}}}\underbrace{\ket{x_0,x_1,r_0^{(in)},r_1^{(in)},r_0^{(out)},r_1^{(out)}}}_{\text{client-side registers }\bx,\bbbr^{(in)},\bbbr^{(out)}}\otimes$$ $$\frac{1}{\sqrt{2}}(\underbrace{\ket{0}\ket{x_0}\ket{r_0^{(in)}}\ket{r_0^{(out)}}}_{\text{server-side registers }\bQ^{(sub)}\bQ^{(in)}\bQ^{(inpad)}\bQ^{(outpad)}}+\ket{1}\ket{x_1}\ket{r_1^{(in)}}\ket{r_1^{(out)}})\underbrace{\ket{\cdots}}_{\text{purification system in }\bR}$$
Then define the target purified joint state after step 2 as
$$\ket{\varphi_{\text{tar step 2}}}=\sum_{x_0,x_1\theta_0,\theta_1,r_0^{(out)},r_1^{(out)}\in \{0,1\}^{2n+2\kappa+2}}\frac{1}{\sqrt{2^{2n+2\kappa+2}}}\underbrace{\ket{x_0,x_1,\theta_0,\theta_1,r_0^{(out)},r_1^{(out)}}}_{\text{client-side registers }\bx,\btheta,\bbbr^{(out)}}\otimes$$
$$\frac{1}{\sqrt{2}}((-1)^{\theta_0}\underbrace{\ket{0}\ket{r_0^{(out)}}\ket{f(x_0)}}_{\text{server-side registers }\bQ^{(sub)}\bQ^{(outpad)}\bQ^{(out)}}+(-1)^{\theta_1}\ket{1}\ket{r_1^{(out)}}\ket{f(x_1)})\underbrace{\ket{\cdots}}_{\text{purification system in }\bR}$$
\end{setup}
\begin{defn}Now we introduce the notions for different forms of states as follows.
\begin{itemize}
    \item We say a state $\ket{\psi}$ in Set-up \ref{setup:sfspreproof} is in basis-honest form if \begin{equation}\label{eq:formintr}(\Pi_0+\Pi_1)\ket{\psi}=\ket{\psi}.\end{equation}
    We call the two terms on the right hand side of \eqref{eq:formintr} as the $0$-branch and the $1$-branch.
    \item We say a state $\ket{\psi}$ in Set-up \ref{setup:sfspreproof} is in basis-phase correspondence form if $\ket{\psi}$ is in the basis-honest form, and for each $b\in \{0,1\}$, its $b$-branch, denoted as $\ket{\psi_b}$, could be written as
    \begin{equation}\label{eq:38r}\ket{\psi_b}=\sum_{\theta_0,\theta_1\in \{0,1\}^2}\underbrace{\ket{\theta_0}\ket{\theta_1}}_{\text{client-side registers }\btheta_0,\btheta_1}\ket{\psi_{b,\theta_b}}\underbrace{\ket{\theta_0}\ket{\theta_1}}_{\text{purification register }\bR_{\btheta}}\end{equation}
    Note that $\ket{\psi_{b,\theta_b}}$ is the same for different values of $\theta_{1-b}$. Thus we say the $0$-branch is independent to the value of $\btheta_1$ and the $1$-branch is independent to the value of $\btheta_0$.
\end{itemize}
\end{defn}
To state the lemmas we need an efficiently-preparable notion.
\begin{defn}
    We say a state family $\sigma$ could be efficiently preparable from $\rho$ if there exists a family of efficient quantum operations $O$ such that $\sigma=O(\rho)$.
\end{defn}
Finally we define the closeness between states up to an operation on register $\bR$:
\begin{defn}\label{defn:upto}
    We say $\ket{\psi_1}\approx_{\epsilon}^{\text{up to }\bR}\ket{\psi_2}$ if there exists an operation $U$ that solely works on $\bR$ such that $U\ket{\psi_1}\approx_{\epsilon}\ket{\psi_2}$.
\end{defn}
\begin{fact}
    $\ket{\psi_1}\approx_{\epsilon}^{\text{up to }\bR}\ket{\psi_2}$, then $\tr_{\bR}(\ket{\psi_1}\bra{\psi_1})\approx_{\epsilon}^{\tind}\tr_{\bR}(\ket{\psi_2}\bra{\psi_2})$.
\end{fact}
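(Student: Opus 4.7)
The plan is to chain three standard facts: a unitary acting only on $\bR$ is invisible to $\tr_{\bR}$; the Euclidean distance between two normalized pure states upper-bounds the trace distance of their projectors; and trace distance upper-bounds the distinguishing advantage of any quantum operation. First I would unpack the hypothesis via Definition~\ref{defn:upto} to obtain a unitary $U$ supported on $\bR$ with $|U\ket{\psi_1}-\ket{\psi_2}|\leq\epsilon$. Since $U=I_{\bar{\bR}}\otimes U_{\bR}$ lives entirely on the register that is traced out, it disappears under the partial trace, so $\tr_{\bR}(U\ket{\psi_1}\bra{\psi_1}U^\dagger)=\tr_{\bR}(\ket{\psi_1}\bra{\psi_1})$.

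Next I would apply the textbook bound $\tTD(\ket{\varphi}\bra{\varphi},\ket{\phi}\bra{\phi})=\sqrt{1-|\langle\varphi|\phi\rangle|^2}\leq|\ket{\varphi}-\ket{\phi}|$ for normalized pure states (which follows from $|\ket{\varphi}-\ket{\phi}|^2=2-2\,\mathrm{Re}\langle\varphi|\phi\rangle\geq 2-2|\langle\varphi|\phi\rangle|$) to the pair $U\ket{\psi_1},\ket{\psi_2}$. Combined with the previous step and the monotonicity of the trace distance under CPTP maps (the partial trace in particular), this yields $\tTD(\tr_{\bR}(\ket{\psi_1}\bra{\psi_1}),\tr_{\bR}(\ket{\psi_2}\bra{\psi_2}))\leq\epsilon$.

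Finally, the standard inequality $|\Pr[D(\rho)\to 0]-\Pr[D(\sigma)\to 0]|\leq \tTD(\rho,\sigma)$ for every (in particular every efficient) quantum operation $D$ promotes trace-distance closeness to the indistinguishability notion $\approx^{\tind}_\epsilon$, with the negligible slack allowed by that definition absorbed for free. There is no real obstacle to the proof; the only minor care needed is to track the Euclidean-norm versus trace-norm conventions and verify that no hidden factor of~$2$ appears, which is automatic once the two states are taken to be normalized as they are in Set-up~\ref{setup:sfspreproof}.
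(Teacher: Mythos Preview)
Your proposal is correct and is essentially the standard argument the paper has in mind: the paper states this as a Fact without proof, but it follows immediately from the last clause of Fact~1 (Purification) together with the observation that a unitary on $\bR$ leaves $\tr_{\bR}$ unchanged, and then the usual trace-distance-bounds-distinguishing-advantage step. Your write-up spells out exactly these pieces.
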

Below we state our lemmas for proving Theorem \ref{thm:6.1}.
\begin{lem}\label{lem:6.3}
    If an efficient quantum adversary $\fAdv$ working on the server-side of a state $\ket{\psi}$ could pass the computational-basis test with probability $\geq 1-\delta$, then there exists an efficient quantum operation $O$ working on the server side such that $|(\Pi_0+\Pi_1)O\ket{\psi}|^2\geq 1-\fpoly(\delta)$.
\end{lem}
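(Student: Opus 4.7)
The plan is to reduce this statement to the soundness of the $\fsuccTest$ protocol (Theorem \ref{thm:5.1}). The computational basis test in Protocol \ref{prtl:3} consists of two pieces: an instruction asking the server to measure $\bQ^{(sub)}\bQ^{(outpad)}\bQ^{(out)}$ on the computational basis (which carries no client-side verification on its own and may simply be ignored by a malicious server), followed by a call to $\fsuccTest$ with client-side input $x=(x_0,x_1,r_0^{(out)},r_1^{(out)})$ and relation $R(x,w)=1$ iff $w\in\{0||r_0^{(out)}||f(x_0),\, 1||r_1^{(out)}||f(x_1)\}$, where $w$ plays the role of the witness. Hence the passing probability of the computational basis test equals the passing probability of this $\fsuccTest$ call, which by hypothesis is at least $1-\delta$.

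By Theorem \ref{thm:5.1} applied with parameter $\delta$, I obtain an efficient server-side extractor $\fExt$ such that, starting from $\ket{\psi}$, the register $\bS_w$ after $\fExt$ holds a classical value in $\{0||r_0^{(out)}||f(x_0),\, 1||r_1^{(out)}||f(x_1)\}$ with probability at least $1-O(\delta)-\fneg(\kappa)$. I would then define the operation $O$ as follows: first apply a Stinespring unitary dilation $V$ of $\fExt$ on the server's registers together with a fresh ancilla, and then coherently swap $\bS_w$ with $\bQ^{(sub)}\bQ^{(outpad)}\bQ^{(out)}$, which have matching dimension $1+\kappa+m$. After $O$ the value in $\bQ^{(sub)}\bQ^{(outpad)}\bQ^{(out)}$ is exactly the value that $\fExt$ deposited into $\bS_w$, so $|(\Pi_0+\Pi_1)O\ket{\psi}|^2$ equals the probability that $\fExt$ produced a valid witness, yielding the desired $1-\fpoly(\delta)$ bound.

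The main technical subtlety I expect to address is that in Set-up \ref{setup:sfspreproof} the client-side input $x$ to $\fsuccTest$ is not a fixed classical string but is held in the registers $\bx,\bbbr^{(out)}$ entangled with the reference system $\bR$ via the purified state $\ket{\psi}$, whereas Definition \ref{defn:3.10} (and hence Theorem \ref{thm:5.1}) states soundness for each fixed classical $x$. I plan to handle this by decomposing $\ket{\psi}$ in the computational basis of $\bx\bbbr^{(out)}$: for each classical branch of $x$, the conditional server-side state admits an extractor $\fExt_x$, and since the server's strategy is a uniform quantum circuit and $x$ only enters the server's view at step 4 of $\fsuccTest$, these branchwise extractors can be assembled (with $x$ acting as a classical control read coherently from the client's registers) into a single efficient server-side operation. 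Averaging the branchwise extraction guarantees against the branchwise passing probabilities preserves the overall $1-O(\delta)-\fneg(\kappa)$ bound, after which the swap construction above gives the required $O$.
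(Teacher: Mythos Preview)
Your approach is correct and essentially the same as the paper's: invoke the $(\delta,O(\delta))$-soundness of $\fsuccTest$ (Theorem \ref{thm:5.1}) to obtain a server-side extractor placing a valid witness into $\bQ^{(sub)}\bQ^{(outpad)}\bQ^{(out)}$, and take that extractor (after the swap you describe) as $O$. One simplification: by the quantifier order in Definition \ref{defn:5.2} the extractor $\fExt$ already depends only on $\fAdv$ and not on $x$, so you do not need to assemble branchwise extractors $\fExt_x$; a single $\fExt$ works for every classical value of $(x_0,x_1,r_0^{(out)},r_1^{(out)})$, and linearity of the $O(\delta)$ bound then gives the averaged conclusion directly.
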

\begin{lem}\label{lem:6.4}
    Suppose $\fAdv$ is an efficient quantum adversary playing the role of the server. Suppose $\ket{\psi}=\fSecondStep^{\fAdv}(\ket{\varphi_{\text{step 1}}}\otimes\ket{\varphi_{\text{init}}})$ satisfies $|\Pi_{\fpass}(\Pi_0+\Pi_1)\ket{\psi}|^2\geq 1-\fpoly(\delta)$. Then there exists a state $\ket{\tilde\psi}$ that satisfies:
    \begin{itemize}
        \item $\ket{\tilde{\psi}}$ is efficiently-preparable from $\ket{\varphi_{\text{init}}}$.
        \item $\ket{\psi}\approx^{\text{up to }\bR}_{\fpoly(\delta)+\fneg(\kappa)}\ket{\tilde{\psi}}$. (Note that this together with the condition implies $|\Pi_{\fpass}(\Pi_0+\Pi_1)\ket{\tilde{\psi}}|^2\geq 1-\fpoly(\delta)$.)
        \item $\Pi_{\fpass}(\Pi_0+\Pi_1)\ket{\tilde{\psi}}$ is in a basis-phase correspondence form.
    \end{itemize}
\end{lem}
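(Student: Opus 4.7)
The plan is to use linearity of the adversary's step-2 unitary to decompose $\ket{\psi}$ into two ``branch-clean'' terms, bound the cross-branch contamination using the fresh output padding $r^{(out)}$, and then repurify the resulting state so that basis-phase correspondence holds on the nose.

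To begin, let $U_{\fAdv}$ denote the unitary part of $\fSecondStep^{\fAdv}$ (so step 2 applies $U_{\fAdv}$ on the server register and then measures $\bQ^{(in)}\bQ^{(inpad)}$ in the Hadamard basis). By linearity I would write $\ket{\psi}=\tfrac{1}{\sqrt 2}(\ket{\psi^{(0)}}+\ket{\psi^{(1)}})$, where $\ket{\psi^{(b)}}$ is the state produced if the step-1 message were the single branch $\ket{b,x_b,r_b^{(in)},r_b^{(out)}}$ rather than the equal superposition; crucially, the server part of $\ket{\psi^{(b)}}$ depends on the client randomness only through $(x_b,r_b^{(in)},r_b^{(out)},d^{(in)},d^{(inpad)})$. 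I would then bound the cross-branch terms: $\Pi_0$ forces $\bQ^{(outpad)}=r_0^{(out)}$, yet in $\ket{\psi^{(1)}}$ the server never sees $r_0^{(out)}$, so taking expectation over the uniform $r_0^{(out)}\in\{0,1\}^\kappa$ yields $\|\Pi_0\ket{\psi^{(1)}}\|^2\leq 2^{-\kappa}$, and symmetrically for $\Pi_1\ket{\psi^{(0)}}$. Combined with the hypothesis $|\Pi_\fpass(\Pi_0+\Pi_1)\ket{\psi}|^2\geq 1-\fpoly(\delta)$ this gives
\[\Pi_\fpass(\Pi_0+\Pi_1)\ket{\psi}\;\approx_{\fpoly(\delta)+\fneg(\kappa)}\;\tfrac{1}{\sqrt 2}\bigl(\Pi_\fpass\Pi_0\ket{\psi^{(0)}}+\Pi_\fpass\Pi_1\ket{\psi^{(1)}}\bigr).\]

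Next I would define $\ket{\tilde\psi}$ via the following efficient preparation from $\ket{\varphi_{\text{init}}}$: run $\fFirstTwoStep^{\fAdv}$ as usual, but purify the discarded data $(r_0^{(in)},r_1^{(in)},d^{(in)},d^{(inpad)})$ into $\bR$ with a specific choice of purification in which $\bR_\btheta$ coherently holds $(\theta_0,\theta_1)$ together with the auxiliary classical data $(x_{1-b},r_{1-b}^{(in)})$ that determines $\theta_{1-b}$ on each $b$-branch. Because this differs from the canonical purification of $\ket{\psi}$ only by a unitary acting on $\bR$, we have $\ket{\psi}\approx^{\text{up to }\bR}_{\fpoly(\delta)+\fneg(\kappa)}\ket{\tilde\psi}$; all underlying operations (sampling, $U_{\fAdv}$, Hadamard measurement, classical postprocessing, the reference unitary) are efficient, so $\ket{\tilde\psi}$ is efficiently preparable.

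Finally, I would verify that $\Pi_\fpass(\Pi_0+\Pi_1)\ket{\tilde\psi}$ has the form \eqref{eq:38r}. By the first two steps, the $b$-branch is (up to negligible error) $\tfrac{1}{\sqrt 2}\Pi_\fpass\Pi_b\ket{\psi^{(b)}}$, whose server and non-$\btheta$ content depends on the client randomness only through $(d^{(in)},d^{(inpad)},x_b,r_b^{(in)},r_b^{(out)})$. The step-2 check $d^{(inpad)}\neq 0^\kappa$ ensures that for each fixed $(x_0,x_1,\theta_0,\theta_1)$, the conditional distribution of $(d^{(in)},d^{(inpad)},r_b^{(in)})$ is uniform on the affine subset cut out by $\theta_b=d^{(in)}\cdot x_b+d^{(inpad)}\cdot r_b^{(in)}$ alone and is \emph{independent of $\theta_{1-b}$}, because conditioning on $\theta_{1-b}$ only constrains $r_{1-b}^{(in)}$ with $\Pr[\theta_{1-b}\mid d,x_{1-b}]=\tfrac 12$ independent of $d$. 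Absorbing the $\theta_{1-b}$-dependent registers into $\bR_\btheta$ via the purification choice above, each branch then takes the required form $\sum_{\theta_0,\theta_1}\ket{\theta_0,\theta_1}_\btheta\ket{\psi_{b,\theta_b}}\ket{\theta_0,\theta_1}_{\bR_\btheta}$. The main obstacle will be constructing the unitary on $\bR$ precisely so that basis-phase correspondence holds \emph{exactly} on the passing, basis-honest subspace despite the two branches sharing $(d^{(in)},d^{(inpad)})$ in the purified view; the marginal-independence calculation above guarantees that such a unitary exists, but writing it down requires careful bookkeeping of how $(\theta_0,\theta_1)$ decouple the purification registers.
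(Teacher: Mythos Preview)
Your proposal is correct and follows essentially the same route as the paper: decompose $\ket{\psi}$ into the two input branches via linearity of the step-2 unitary, bound the cross-branch contamination $\Pi_b\ket{\psi^{(1-b)}}$ using that the server on branch $1-b$ never sees the fresh output padding $r_b^{(out)}$, and then apply a unitary on the reference system $\bR$ to put the resulting state into exact basis-phase correspondence form. The paper carries out the last step slightly more concretely: it first passes through an intermediate state $\ket{\psi_{mid}}$ built by copying $\bQ^{(in)}$ to an auxiliary register before running $\fSecondStep^{\fAdv}$ (so that each input branch is cleanly tagged), and then defines the reference unitary $\fAlignR$ explicitly as ``uncompute $\bbbr_1^{(in)}$ using $(\bd^{(inpad)},\bR_{\btheta_1})$, then uncompute $\bbbr_0^{(in)}$ using $(\bd^{(inpad)},\bR_{\btheta_0})$''; this uses precisely the counting fact you state (for $d^{(inpad)}\neq 0^\kappa$, each value of $\theta_{1-b}$ is hit by exactly $2^{\kappa-1}$ choices of $r_{1-b}^{(in)}$) and is exactly the ``careful bookkeeping'' you flag as the remaining obstacle.
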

\begin{lem}\label{lem:6.5}
   Suppose $\ket{\psi}$ is $\fpoly(\delta)$-close to a basis-phase correspondence form and is efficiently preparable from $\ket{\varphi_{\text{init}}}$. If an efficient quantum adversary working on the server-side of $\ket{\psi}$ could pass the Hadamard-basis test with probability $\geq 1-\delta$, then there exists an efficient quantum operation $\fSim$ that satisfies
   $$\ket{\psi}\approx_{\fpoly(\delta)+\fneg(\kappa)}(\fI\otimes\underbrace{\fSim}_{\text{server side}})(\ket{\varphi_{\text{tar step 2}}}\otimes\ket{\varphi_{\text{init}}})$$
\end{lem}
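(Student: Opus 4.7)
The plan is to turn the passing-probability gap on the Hadamard test into a pair of interference identities, and then use those identities to move the residual structure of $\ket{\psi}$ onto $\ket{\varphi_{\text{tar step 2}}}$ by an operation acting only on the server side. First, paying an $\fpoly(\delta)$ error up front, I would reduce to the case where $\ket{\psi}$ is exactly in a basis-phase correspondence form, writing $\ket{\psi}=\ket{\phi_0}+\ket{\phi_1}$ with $\ket{\phi_b}=\Pi_b\ket{\psi}$. The two branches are orthogonal through their disjoint $\bQ^{(sub)}$ values, so $|\ket{\phi_b}|^2=1/2$.

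Next, the heart of the argument is the single-branch passing bound $|\Pi_{\fpass}O\ket{\phi_b}|^2=1/4$, where $O$ is the adversary's pre-measurement unitary for the Hadamard test and $\Pi_{\fpass}$ projects onto outcomes satisfying~\eqref{eq:3r2n}. On $\ket{\phi_b}$ alone the distribution of $(d^{(sub)},d^{(outpad)},d^{(out)})$ is independent of $\theta_{1-b}$ by the defining property of basis-phase correspondence, whereas the pass check equates the left-hand side with $\theta_1-\theta_0$; since $\theta_{1-b}$ stays uniform conditional on the $b$-branch, the pass probability is forced down to $1/2$. Combining with the hypothesis $|\Pi_{\fpass}O(\ket{\phi_0}+\ket{\phi_1})|^2\geq 1-\delta$, expanding the cross term and saturating Cauchy--Schwarz then yields the interference identities $\Pi_{\fpass}O\ket{\phi_0}\approx_{\fpoly(\delta)}\Pi_{\fpass}O\ket{\phi_1}$ and $\Pi_{\ffail}O\ket{\phi_0}\approx_{\fpoly(\delta)}-\Pi_{\ffail}O\ket{\phi_1}$, which are exactly the analogs of equations~(6)--(7) of Section~\ref{sec:2.3.2} adapted to Protocol~\ref{prtl:3}.

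These identities make the server-side unitary $V:=O^{-1}(I-2\Pi_{\ffail})O$ send $\ket{\phi_0}$ to $\ket{\phi_1}$ (and vice versa) up to $\fpoly(\delta)$ error. The phase operation inside $V$ depends only on $d^{(sub)}$, $d^{(outpad)}\cdot(r_0^{(out)}+r_1^{(out)})$, $d^{(out)}\cdot(f(x_0)+f(x_1))$, and $\theta_1-\theta_0$, each of which is accessible from $\ket{\varphi_{\text{tar step 2}}}$ via server-side registers alone, since the $b$-branch of the target state already carries $r_b^{(out)}$ in $\bQ^{(outpad)}$, $f(x_b)$ in $\bQ^{(out)}$, and the phase $(-1)^{\theta_b}$. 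Following the template from Section~\ref{sec:2.3.2}, $\fSim$ would use $\ket{\varphi_{\text{init}}}$ to sample a fresh companion branch, run $\fAdv$'s step~2 internally to produce a local copy of $\ket{\phi_0}$ (or $\ket{\phi_1}$), apply $V$ to transport it across the interference, and then coherently align the ``correct-output'' registers $\bQ^{(sub)}\bQ^{(outpad)}\bQ^{(out)}$ with those of $\ket{\varphi_{\text{tar step 2}}}$, leaving only the auxiliary residual $\ket{\psi_{b,\theta_b}}$ attached to each branch. Efficiency of $\fSim$ would be inherited from the efficient-preparability hypothesis, and the final $\fpoly(\delta)+\fneg(\kappa)$ bound follows by composing the per-step errors.

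The main obstacle I expect is the rigorous execution of this last step rather than the conceptual picture. The fresh-sample-and-transport construction of Section~\ref{sec:2.3.2} must be redone for Protocol~\ref{prtl:3}'s richer register set (in particular the $\bbbr^{(out)}$ and $\btheta$ registers, whose purifications carry the phase structure), and I will need to verify carefully that the coherent alignment step truly lives on the server side while matching $\ket{\psi}$ on every purification register. Aligning the purification of $\btheta$, where the signs $(-1)^{\theta_b}$ live, with the fresh internal sample looks like it will be the most delicate bookkeeping.
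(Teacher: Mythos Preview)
Your approach is essentially the paper's: derive the single-branch passing bound from the basis-phase correspondence form, saturate Cauchy--Schwarz to obtain the interference identities, build the branch-swapping unitary $V=O^{-1}(I-2\Pi_{\ffail})O$, and then use a freshly-simulated copy of $\ket{\psi}$ to supply the ``other branch'' data when mapping onto $\ket{\varphi_{\text{tar step 2}}}$.

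Two small corrections. First, the basis-phase correspondence form does not by itself force $|\ket{\phi_b}|^2=1/2$; what it gives is the single-branch bound $|\Pi_{\fpass}O\ket{\phi_b}|^2\leq \tfrac12|\ket{\phi_b}|^2$, and the approximate equality of branch norms only falls out \emph{after} the interference identities. Second, your sentence that the phase in $V$ is ``accessible from $\ket{\varphi_{\text{tar step 2}}}$ via server-side registers alone'' is not correct as stated: the $b$-branch of the target gives you $r_b^{(out)}$ and $f(x_b)$ but not $r_{1-b}^{(out)}$ or $f(x_{1-b})$, so $V$ cannot be implemented from those alone. The paper resolves exactly this by the fresh-sample trick you describe next---the simulator prepares an independent copy of $\ket{\psi}$ on primed registers (including simulated client registers $\bx',\bbbr^{(out)\prime},\btheta'$), XORs $\bQ^{(sub)}$ against $\bQ^{(sub)\prime}$, and then case-splits on the XOR value, using $f(\bx_b)$ from the target and $f(\bx_{1-b}')$ from the primed copy to implement the hybrid unitary $U_{f(\bx_b),f(\bx_{1-b}'),\bbbr_b^{(out)},\bbbr_{1-b}^{(out)\prime}}$. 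The delicate bookkeeping you anticipate is precisely this case-split, together with an extra refinement you do not mention: the paper unpacks the interference identities one level further to obtain $\ket{\tilde\psi_{b,0}}\approx_{\fpoly(\delta)}-\ket{\tilde\psi_{b,1}}$, which is what lets the simulator peel off the $(-1)^{\theta_b}$ phase before applying $U$ and thus align with the phases already present in $\ket{\varphi_{\text{tar step 2}}}$.
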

\begin{lem}\label{lem:6.6}
    For any efficient quantum operation $\fAdv$ playing the role of the server there exists an efficient quantum operation $\fSim$ playing the role of the server such that \begin{equation}\label{eq:6.6.1}\Pi_{\fpass}(\fThirdStepComp^{\fAdv}(\ket{\varphi_{\text{tar step 2}}}\bra{\varphi_{\text{tar step 2}}}\otimes\ket{\varphi_{\text{init}}}\bra{\varphi_{\text{init}}}))\approx^{\tind}_{\fpoly(\delta)+\fneg(\kappa)}\Pi_{\fpass}(\fSim(\rho_{tar}\otimes\ket{\varphi_{\text{init}}}\bra{\varphi_{\text{init}}}))\end{equation}
    (Note that $\rho_{tar}$ is defined in Definition \ref{defn:4.1}. See also Definition \ref{defn:4.3} for the related security definition.)
\end{lem}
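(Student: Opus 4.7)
The plan is to define the simulator $\fSim$ to replay $\fAdv$ on a single-branch ``mimic'' of the target step-$2$ state built from $\rho_{tar}$, and then argue that the real passing-space state equals the ideal one up to a $\fneg(\kappa)$ error. Concretely, $\fSim$ receives $\rho_{tar}$ on $(\bx^{(out)},\bQ^{(out)})$ and the initial state on $(\bS,\bbE)$; it samples $b\leftarrow_r\{0,1\}$ and $r'\leftarrow_r\{0,1\}^\kappa$, initialises fresh server-side registers $\bQ^{(sub)}=\ket{b}$ and $\bQ^{(outpad)}=\ket{r'}$ so that the server-side data becomes $\ket{b}\ket{r'}\ket{f(x)}$ on $(\bQ^{(sub)},\bQ^{(outpad)},\bQ^{(out)})$, then runs $\fAdv$ on those registers together with $(\bS,\bbE)$ to obtain a classical response $c$; finally $\fSim$ sets $\bflag=\fpass$ iff $c=b||r'$, discards $\bQ^{(sub)},\bQ^{(outpad)}$, and leaves $(\bflag,\bx^{(out)},\bQ^{(out)},\bS,\bbE)$ as the output.

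The first step of the analysis is a direct marginal calculation on $\ket{\varphi_{\text{tar step 2}}}$: tracing out the client-side registers $\bx,\btheta,\bbbr^{(out)}$ together with the purification $\bR$ yields the classical mixture
$$\rho_{\mathrm{serv}}=\frac{1}{2^{n+\kappa+1}}\sum_{b,r,x}\ket{b,r,f(x)}\bra{b,r,f(x)},$$
because the uniform average over $\theta_0,\theta_1$ annihilates the cross-branch coherence. The simulator's pre-$\fAdv$ server-side input has the same density matrix, since $x$ in $\rho_{tar}$ is uniform and $\fSim$ samples $b,r'$ uniformly, so both executions feed the same state into $\fAdv$ and the induced joint distribution on $(\bS,\bQ^{(out)},\bQ^{(sub)},\bQ^{(outpad)},\bbE)$ together with the classical $c$-wire is identical. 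What remains is to verify that the respective passing projections, and the $\bx^{(out)}$ assignments they induce on the client side, couple correctly.

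The main obstacle I expect is controlling the ``cross-branch'' passing event in the real execution, in which $\fAdv$ outputs $c=(1-b)||r_{1-b}^{(out)}$ while the server's data is (conditionally) supported on branch $b$. I would decompose $\Pi_{\fpass}=\Pi_{\fpass}^{(0)}+\Pi_{\fpass}^{(1)}$ with $\Pi_{\fpass}^{(b)}$ enforcing $c=b||r_b^{(out)}$, condition on the client's value of $r_b^{(out)}$, and use that the $(1-b)$-component of the server's state carries no information about $r_b^{(out)}$; a standard guessing bound then caps the wrong-branch contribution by $2^{-\kappa}$. After discarding this negligible slice, each real passing branch $b$ couples to the simulator's run with virtual branch $b'=b$ under the identification $r_b^{(out)}\leftrightarrow r'$, and the coupling preserves $(\bflag,\bx^{(out)},\bQ^{(out)},\bS,\bbE)$ register by register, because $\bx^{(out)}=x_b$ in the real side matches the uniform $\bx^{(out)}=x$ from $\rho_{tar}$ in the ideal side. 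Summing over the two branches and absorbing the $O(2^{-\kappa})=\fneg(\kappa)$ slack yields the claimed bound; the analysis in fact gives the stronger $\fneg(\kappa)$ estimate, since the hypothesis of the lemma does not mention $\delta$.
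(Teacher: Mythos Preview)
Your proposal is correct and takes essentially the same approach as the paper: both exploit the unpredictability of $r_{1-b}^{(out)}$ from the $b$-branch to bound the cross-branch passing event by $\fneg(\kappa)$, and both define $\fSim$ by running $\fAdv$ on the collapsed single-branch state manufactured from $\rho_{tar}$ with freshly sampled $(b,r')$. Your write-up is a bit more explicit than the paper's (you spell out the $\theta$-averaging that kills the inter-branch coherence and set up the coupling carefully), and you correctly observe that the actual error is $\fneg(\kappa)$ with no $\fpoly(\delta)$ term.
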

\subsubsection{Formal Security Proofs}\label{sec:6.2.3}
\begin{proof}[Proof of Lemma \ref{lem:6.3}]
    By the soundness of $\fsuccTest$ we know there exists a server-side operation $O$ that extracts a string in $\{0||r_0^{(out)}||f(x_0),1||r_1^{(out)}||f(x_1)\}$ to the register $\bQ^{(sub)}\bQ^{(outpad)}\bQ^{(out)}$, which completes the proof. 
\end{proof}
\begin{proof}[Proof of Lemma \ref{lem:6.4}]
    Consider the state $\ket{\psi_{mid}}$ as follows:
    \begin{enumerate}
        \item Starting from the state $\ket{\varphi_{\text{step 1}}}\otimes\ket{\varphi_{\text{init}}}$, copy the values in $\bQ^{(in)}$ to an auxiliary register $\baux$.
        \item Execute $\fSecondStep^{\fAdv}$ on $\ket{\varphi_{\text{step 1}}}\otimes\ket{\varphi_{\text{init}}}$.
        \item Use the values in $\baux$ to uncompute $\bQ^{(in)}$; swap $\baux$ to $\bQ^{(in)}$.
    \end{enumerate}
    We claim that
    \begin{equation}\label{eq:36}\Pi_{\fpass}(\Pi_0+\Pi_1)\ket{\psi}\approx_{\fneg(\kappa)}\Pi_{\fpass}(\Pi_0+\Pi_1)\ket{\psi_{mid}}\end{equation}
    \eqref{eq:36} holds because if we expand both $\Pi_{\fpass}(\Pi_0+\Pi_1)\ket{\psi}$ and $\Pi_{\fpass}(\Pi_0+\Pi_1)\ket{\psi_{mid}}$ by branches of $\ket{\varphi_{\text{step 1}}}$, the norm of their difference is bounded by $|\Pi_1\fAdv\Pi_0(\ket{\varphi_{\text{step 1}}}\otimes\ket{\varphi_{\text{init}}})|+|\Pi_0\fAdv\Pi_1(\ket{\varphi_{\text{step 1}}}\otimes\ket{\varphi_{\text{init}}})|$, which is $\fneg(\kappa)$. Note that this together with the condition that $|\Pi_{\fpass}(\Pi_0+\Pi_1)\ket{\psi}|^2\geq 1-\fpoly(\delta)$ implies that $\ket{\psi}\approx_{\fpoly(\delta)+\fneg(\kappa)}\ket{\psi_{mid}}$.\par
    The construction of the $\ket{\psi_{mid}}$ is to enforce that for each $b\in \{0,1\}$, the $b$-branch of $\Pi_{\fpass}\ket{\psi_{mid}}$ only comes from the $b$-branch of $\ket{\varphi_{\text{step 1}}}$ thus there is no information about the other branch thus should be independent to the value of $\btheta_{1-b}$. But this intuition is not exactly true: the reason is, $\theta_{1-b}$ is computed from $d^{\text{(step 2)}}$ and $x_{1-b},r^{(in)}_{1-b}$; although the $x_{1-b},r^{(in)}_{1-b}$ registers (together with their purification registers) are initially in a product state from the other parts on the $b$-branch, the measurement of $d^{\text{(step 2)}}$ and calculation of $\theta_{1-b}$ could create some entanglement between these registers. But note that the registers holding $d^{\text{(step 2)}}$, $r^{(in)}$, and the purification register $\bR_{\btheta}$ are all in $\bR$; thus as long as we could find an operation $\fAlignR$ that operates only on these registers that de-entangles $\btheta_{1-b}$, we could still prove the lemma.\par
    Now we construct an operation $\fAlignR$ that operates on $\bbbr^{(in)}$, $\bd^{\text{(step 2)}}$ and $\bR_{\btheta}$ such that $\Pi_{\fpass}(\Pi_0+\Pi_1)\fAlignR\ket{\psi_{mid}}$ is a basis-phase correspondence form. Let's first without loss of generality consider the $0$-branch: note that initially $r^{(in)}_1$ is randomly sampled from $\{0,1\}^\kappa$; then note for each $d^{(inpad)}\neq 0$ (which corresponds to the $\Pi_{\fpass}$ space), for each $\theta_1\in \{0,1\}$, the number of $r^{(in)}_1\in \{0,1\}^\kappa$ such that the value of $\btheta_1$ equals $\theta_1$ is $2^{\kappa-1}$ (see equation \eqref{eq:33}). Thus once we uncompute $\bbbr_1^{(in)}$ with $(\bd^{(inpad)},\bR_{\btheta_1})$ the $0$-branch will be independent to the value of $\btheta_1$. The case for the $1$-branch is the same. In summary we could construct $\fAlignR$ as follows: 
    \begin{enumerate}
        \item Uncompute $\bbbr^{(in)}_1$ with with $\bd^{(inpad)},\bR_{\btheta_1}$.
        \item Uncompute $\bbbr^{(in)}_0$ with with $\bd^{(inpad)},\bR_{\btheta_0}$.
    \end{enumerate}
    Then $\Pi_{\fpass}(\Pi_0+\Pi_1)\fAlignR\ket{\psi_{mid}}$ is a basis-phase correspondence form; thus defining $\ket{\tilde\psi}=\fAlignR\ket{\psi_{mid}}$ proves the lemma.
\end{proof}
The proof of Lemma \ref{lem:6.5} is as follows. Note that the overall ideas of the proof details are related to the discussion in Section \ref{sec:2.3.2}. We add a discussion on the proof details after the proof.
\begin{proof}[Proof of Lemma \ref{lem:6.5}]
    Suppose $\ket{\psi}$ is $\fpoly(\delta)$-close to a state $\ket{\tilde{\psi}}$ in the basis-phase correspondence form. Then by the soundness of $\fsuccTest$ we know there exists a server-side unitary operation $\fExt$ that extracts strings to $\bS_{dstep3}$ such that
    \begin{equation}\label{eq:6.5.1}|\Pi^{\bS_{dstep3}}_{d^{(sub)},d^{(outpad)},d^{(out)}:\text{LHS of \eqref{eq:3r2n}}\equiv \theta_0-\theta_1\mod 2}\fExt\ket{\tilde{\psi}}|^2\geq 1-\fpoly(\delta)\end{equation}
    Recall that the LHS of \eqref{eq:3r2n} is $d^{(sub)}+d^{(outpad)}\cdot(r^{(out)}_0+r^{(out)}_1)+d^{(out)}\cdot(f(x_0)+f(x_1))$.\par
    Introduce the following notation to simplify the expression: 
    \begin{itemize}\item Use $\Pi_{\equiv}$ to denote the projection onto the space that the value in $\bS_{dstep3}$ satisfies $\text{LHS of \eqref{eq:3r2n}}\equiv \theta_0-\theta_1\mod 2$; use $\Pi_{\not\equiv}$ to denote the complement of $\Pi_{\equiv}$.
        \item Use $\ket{\tilde\psi_0}$ to denote the $0$-branch of $\ket{\tilde\psi}$ and use $\ket{\tilde\psi_1}$ to denote the $1$-branch of $\ket{\tilde\psi}$. Then recall equation \eqref{eq:38r}, and defines $\ket{\tilde\psi_{0,\theta_0}}$ for each $\theta_0\in \{0,1\}$ and $\ket{\tilde\psi_{1,\theta_1}}$ for each $\theta_1\in \{0,1\}$ as \eqref{eq:38r}.
    \end{itemize} 
    Since $\ket{\tilde{\psi}}$ is in the basis-phase correspondence form, for each branch of this state there is:
    \begin{equation}\label{eq:6.5.2}\forall b\in \{0,1\},|\Pi_{\equiv}\fExt\ket{\tilde{\psi}_b}|\leq \frac{1}{\sqrt{2}}|\ket{\tilde{\psi}_b}|\end{equation}
    Combining \eqref{eq:6.5.1}\eqref{eq:6.5.2} we have
\begin{equation}\label{eq:42op}\Pi_{\equiv}\fExt\ket{\tilde\psi_0}\approx_{\fpoly(\delta)}\Pi_{\equiv}\fExt\ket{\tilde\psi_1}\end{equation}
\begin{equation}\label{eq:43op}\Pi_{\not\equiv}\fExt\ket{\tilde\psi_0}\approx_{\fpoly(\delta)}-\Pi_{\not\equiv}\fExt\ket{\tilde\psi_1}\end{equation}
Expanding $\ket{\tilde\psi_b}$ to $\ket{\tilde\psi_{b,\theta_b}}$ and making use of the form of $\ket{\tilde\psi}$, \eqref{eq:42op}\eqref{eq:43op} further imply:
\begin{equation}\label{eq:44op}\ket{\tilde{\psi}_{0,0}}\approx_{\fpoly(\delta)}-\ket{\tilde{\psi}_{0,1}}\end{equation}
\begin{equation}\label{eq:45op}\ket{\tilde{\psi}_{1,0}}\approx_{\fpoly(\delta)}-\ket{\tilde{\psi}_{1,1}}\end{equation}
\begin{equation}\label{eq:46op}U_{f(\bx),\bbbr^{(out)}}\ket{\tilde{\psi}_{0,0}}\approx_{\fpoly(\delta)}\ket{\tilde{\psi}_{1,0}}\end{equation}
where $U_{f(\bx),\bbbr^{(out)}}$ is defined to be the following operation:
\begin{enumerate}
    \item Apply $\fExt$;
    \item Add a $(-1)$ phase on the space of $\Pi_{\not\equiv}$. Note that this step requires the information of $f(x_0),f(x_1),r^{(out)}_0,r^{(out)}_1$ so we name $U$ as above.
    \item Apply $\fExt^{-1}$ (note that we chose $\fExt$ to be a unitary above).
\end{enumerate}
Now we construct the $\fSim$ as required in the statement of this lemma. Below we give the construction and analyze each step during the construction.\par
$\fSim$ works as follows:
\begin{enumerate}
    \item First prepare the state $\ket{\psi}$ from $\ket{\varphi_{\text{init}}}$ on its own register; here all the client-side registers, including $\bx,\bbbr^{(out)},\btheta$, are simulated within the simulator's memory; the server-side parts like $\bQ^{(sub)},\bQ^{(outpad)},\bQ^{(out)}$ are also simulated in new corresponding registers. Denote these new registers used for simulation as $\bx^\prime$, $\bbbr^{(out)\prime}$, $\btheta^\prime$, $\bQ^{(sub)\prime}$, etc.\par
    Suppose the state created in this step is $\ket{\psi^\prime}$. The difference between $\ket{\psi}$ and $\ket{\psi^\prime}$ is simply the location of registers (in more detail, $\bx^\prime$, $\bbbr^{(out)\prime}$, $\btheta^\prime$, $\bQ^{(sub)^\prime}$, $\bQ^{(outpad)\prime}$, $\bQ^{(out)\prime}$). Then we could define $\ket{\tilde\psi^\prime}$, $\ket{\tilde\psi_{b,\theta}^\prime}$ etc similarly. By the previous proof $\ket{\psi^\prime}$ is also $\fpoly(\delta)$-close to $\ket{\tilde\psi^\prime}$ and \eqref{eq:44op}\eqref{eq:45op}\eqref{eq:46op} all hold after replacing all the $\tilde\psi_{\cdots}$ by the corresponding $\tilde\psi_{\cdots}^\prime$.
    \item Do an xor operation between $\bQ^{(sub)}$ and $\bQ^{(sub)\prime}$. Discuss by cases on the values of $\bQ^{(sub)}$ and the xor result:
    \begin{itemize}
        \item Suppose the xor is $1$:\begin{itemize}\item On the $0$-branch of $\ket{\varphi_{\text{tar step 2}}}$ (in other words, $\bQ^{(sub)}$ has value $0$): now the $0$-branch could be written as (omitting the client-side registers)
        \begin{equation}
            \underbrace{\ket{0}\ket{r_0^{(out)}}\ket{f(x_0)}}_{\bQ^{(sub)}\bQ^{(outpad)}\bQ^{(out)}}\ket{\tilde\psi_1^\prime}
        \end{equation}
        We would like to transform this state to $\ket{\tilde\psi_0}$. The differences are as follows: (1) the obvious difference is that $\ket{\tilde\psi_1^\prime}$ is the $1$-branch while we want a $0$-branch; (2) the less obvious difference is $\ket{\tilde\psi_1^\prime}$ corresponds to the ``simulated client-side registers'' like $\bx^\prime$, $\btheta^\prime$, $\bbbr^{(out)\prime}$ etc while in what we want ($\ket{\tilde\psi_0}$) the server-side state should be related to the real client-side regsiters $\bx$, $\btheta$, $\bbbr^{(out)}$.\par
        We notice that the following operations allow us to achieve this transformation. \begin{enumerate}\item First by \eqref{eq:44op}\eqref{eq:45op} we could remove the additional phase in $\ket{\tilde\psi_{1,1}^\prime}$ (that is, apply a $(-1)$ phase controlled by $\btheta^\prime_1$).\item Then by \eqref{eq:46op} apply $U_{f(\bx_0),f(\bx_1^\prime),\bbbr_0^{(out)},\bbbr_1^{(out)^\prime}}$ to map this branch of simulated state to $\ket{\tilde\psi_{0}}$ (omitting temporary registers like $\bx^\prime,\btheta^\prime,\bbbr^{(out)\prime}$, etc; they are now in a fixed product state so we could ignore them).\par In more detail, we make use of values in the following registers to implement $U$: (1) the values in registers $\bQ^{(outpad)},\bQ^{(out)}$ (on the server-side of $\ket{\varphi_{\text{tar step 2}}}$, corresponding to $f(\bx_0)$ and $\bbbr_0^{(out)}$ in the subscripts of $U$); (2) the values in the ``simulated client-side registers'' $\bx_{1}^\prime,\bbbr_1^{(out)\prime}$. 
            \item Now the values of $\bQ^{(sub)}\bQ^{(outpad)}\bQ^{(out)}$ is the same as $\bQ^{(sub)\prime}\bQ^{(outpad)\prime}\bQ^{(out)\prime}$ so $\bQ^{(sub)\prime}\bQ^{(outpad)\prime}\bQ^{(out)\prime}$ could be uncomputed.
        \end{enumerate}
        \item The $1$-branch is similar. Note that we could control on the $\bQ^{(sub)}$ to map the $0$-branch first and $1$-branch next.
        \end{itemize}
        \item Suppose the xor is $0$: now for example, the $0$-branch could be written as (omitting the client-side registers)
        \begin{equation}
            \ket{0}\ket{r_0^{(out)}}\ket{f(x_0)}\ket{\tilde\psi_0^\prime}
        \end{equation}
        We note that in the two differences discussed above, although there is only one difference remaining here, applying $U_{f(\bx_0),f(\bx_1^\prime),\bbbr_0^{(out)},\bbbr_1^{(out)^\prime}}$ as above does not solve the problem. Here our solution is, in the step (b) above, first apply $U_{f(\bx_0^\prime),f(\bx_1^\prime),\bbbr_0^{(out)^\prime},\bbbr_1^{(out)^\prime}}$ to map it to the simulated $1$-branch, and then apply $U_{f(\bx_0),f(\bx_1^\prime),\bbbr_0^{(out)},\bbbr_1^{(out)^\prime}}$ as above.\par
        The $1$-branch is similar.
    \end{itemize}
    \item Finally uncompute the intermediate registers (like $\bx^\prime,\btheta^\prime,\bbbr^{(out)\prime}$) completes the simulation.
\end{enumerate}
\end{proof}
\paragraph{Note on the intuition of the proof} We note that one way to understand the details of this proof is to return to the discussion in Section \ref{sec:2.3.2}. Note that in the discussion in \ref{sec:2.3.2} the client-side Hadamard measurement is assumed to be honest (which introduces phases), while in the condition of this lemma we assume the output state from step 2 of $\fSFSTest$ is in a basis-phase correspondence form. Although the conditions are not the same, the techniques (like the construction of $U$) share similar ideas. One additional difficulty is that in this proof we want the final result to have good (sequential) composability so we need to take the initial state $\ket{\varphi_{\text{init}}}$ into consideration. To solve this problem, we construct $\fSim$ as follows: we first solely do the simulation from $\ket{\varphi_{\text{init}}}$, then xor it with the target state $\ket{\varphi_{\text{tar step 2}}}$; finally we make use of $U$ (or in more detail, \eqref{eq:44op}\eqref{eq:45op}\eqref{eq:46op}) to transform the state to the state we want.
\begin{proof}[Proof of Lemma \ref{lem:6.6}]
    Notice that $r_{1-b}^{(out)}$ is unpredictable from the $b$-branch. This implies that the left hand side of \eqref{eq:6.6.1} is negligibly close to the following state on the passing space:
    \begin{enumerate}
        \item Starting from the state $\ket{\varphi_{\text{tar step 2}}}\otimes\ket{\varphi_{\text{init}}}$, do a projection on $\bQ^{(sub)}\bQ^{(outpad)}\bQ^{(out)}$ and store the corresponding $x$ on the client-side register $\bx^{(out)}$. Keep a copy of the value in $\bQ^{(sub)}\bQ^{(outpad)}\bQ^{(out)}$ in a temporary register $\baux$.
        \item Apply $\fAdv$ as given in the lemma statement.
        \item Project onto the space that the value of $\bQ^{(sub)}\bQ^{(outpad)}\bQ^{(out)}$ is the same as $\baux$; mark $\bflag$ as $\fpass$ if the projection succeeds.\par
        Disgard $\baux$.
    \end{enumerate}
    Notice that the initial state in $(\bx^{(out)},\bQ^{(out)})$ is exactly $\rho_{tar}$. Thus the operation above gives the $\fSim$ we want, which completes the proof.
\end{proof}
\begin{proof}[Proof of Theorem \ref{thm:6.1}]
    First for any initial state $\rho_0\in \tD(\cH_{\bS}\otimes\cH_{\bbE})$ we could always purify it to $\ket{\varphi_{\text{init}}}\in \cH_{\bS}\otimes\cH_{\bbE}\otimes\cH_{\bbE^\prime}$; thus we could only consider pure states as in Set-up \ref{setup:sfspreproof}. Then if an adversary $\fAdv$ could pass with probability $\geq 1-\delta$ in $\fSFSTest$ it should pass with probability $\geq 1-\fpoly(\delta)$ in each step of $\fSFSTest$. Chaining Lemma \ref{lem:6.3}, \ref{lem:6.4}, \ref{lem:6.5} and \ref{lem:6.6} completes the proof.
\end{proof}
\section{Protocols for General Two-party Computations}\label{sec:7}
In this section we give the generalizations and improvements.
\begin{itemize}
    \item In Section \ref{sec:7.1} we construct the SFS protocol for randomized functions.
    \item In Section \ref{sec:7.2} we construct the SFVP protocol, where the function input $x$ is provided by the client.
    \item In Section \ref{sec:7.3} we construct the protocol for secure two-party computations in succinct communication.
    \item In Section \ref{sec:7.4} we construct the protocol for classical-channel secure two-party computations in succinct communication.
\end{itemize}
We refer to Section \ref{sec:2.7} for an overview; in this section we formalize the protocols and security proofs.
\subsection{SFS for Randomized Functions}\label{sec:7.1}
The set-up of our protocol is as follows.
\begin{setup}\label{setup:3}
    The set-up is similar to Set-up \ref{setup:1} with the following differences. The function $f$ is an efficient classical randomized function $f:\{0,1\}^n\times\{0,1\}^L\rightarrow \{0,1\}^m$ (as described in Fact \ref{fact:2}; the second input is the random coins). $1^L$ is given as a public parameter.\par
    The completeness and soundness are in Definition \ref{defn:7.1}, \ref{defn:7.2} below.
\end{setup}
\begin{defn}\label{defn:7.1}
    We say a protocol $\pi$ under Set-up \ref{setup:3} is complete if when the server is honest, the protocol passes with probability $\geq 1-\fneg(\kappa)$ and furthermore, denote the output state as $\rho_{out}\in \tD(\cH_{\bx^{(out)}}\otimes\cH_{\bQ^{(out)}})$, there is
    $$\rho_{out}\approx^{\tind}\rho_{tar}$$
    where $\rho_{tar}$ is defined to be the following state: the register $\bx^{(out)}$ holds a random classical string $x\leftarrow_r\{0,1\}^n$, the register $\bQ^{(out)}$ holds a sampling of $f(x;r)$ where $r\leftarrow_r\{0,1\}^L$.
\end{defn}
Note that the completeness here is also defined using the indistinguishability notion.\par
The soundness is defined in the same way as Definition \ref{defn:4.2} with the difference that the function called by the ideal functionality is now randomized.
\begin{defn}\label{defn:7.2}
    To define the soundness, first define $\tSFSIdeal$ as follows.\par
    $\tSFSIdeal$ receives a bit $b\in \{0,1\}$ from the server, then:
    \begin{itemize}
        \item If $b=0$, $\tSFSIdeal$ sets $\bflag$ to be $\fpass$ and prepares $\rho_{tar}$ in $\bx^{(out)}$, $\bQ^{(out)}$.
        \item If $b=1$, $\tSFSIdeal$ sets $\bflag$ to be $\ffail$.
    \end{itemize}
    We say a protocol $\pi$ under Set-up \ref{setup:3} is $\epsilon$-sound if:\par
    For any efficient adversary $\fAdv$, there exist efficient quantum operations $\fSim=(\fSim_0,\fSim_1)$ such that for any polynomial-size inputs $1^n,1^m,1^L,1^{1/\epsilon}, f$ and any initial state $\rho_0\in \tD(\cH_{\bS}\otimes\cH_{\bbE})$:
    $$\pi^{\fAdv}(\rho_0)\approx^{\tind}_{\epsilon}\underbrace{\fSim_1}_{\text{on }\bS,\bQ^{(out)}}(\tSFSIdeal(\underbrace{\fSim_0}_{\text{on }\bS\text{ and outputs }b}(\rho_0)))$$
\end{defn}
Our SFS protocol for randomized functions is as follows. The protocol uses the pseudorandom generator (PRG) primitive as discussed in Section \ref{sec:3.3.1}.
\begin{mdframed}[backgroundcolor=black!10]
    \begin{prtl}[SFS for randomized functions]\label{prtl:7}
        This protocol works under Set-up \ref{setup:3}. The protocol inputs are: public parameters $1^n,1^m,1^L,1^\kappa,1^{1/\epsilon}$, public function $f:\{0,1\}^n\times\{0,1\}^L\rightarrow \{0,1\}^m$.\par The protocol makes use of a pseudorandom generator $\tPRG(1^\kappa,1^L):\{0,1\}^\kappa\rightarrow \{0,1\}^L$. Below we omit the parameters.
        \begin{enumerate}
            \item The client and the server execute protocol $\fSFS$ (Protocol \ref{prtl:6}) with error tolerance $\epsilon$ for function $g:\{0,1\}^{n+\kappa}\rightarrow \{0,1\}^m$ defined as follows:
            $$g(x,s):=f(x;\tPRG(s))\text{ where }(x,s)\in \{0,1\}^{n}\times\{0,1\}^{\kappa}$$
            When the server is honest the client should get a random $x,s\leftarrow_r \{0,1\}^{n}\times\{0,1\}^{\kappa}$ and the server gets the corresponding $g(x,s)$. The server stores $g(x,s)$ in $\bQ^{(out)}$ as the final outputs; the client disgards $s$ and stores $x$ in $\bx^{(out)}$ as the final outputs.
        \end{enumerate}
    \end{prtl}
\end{mdframed}
The communication is succinct by the property of $\fSFS$ and the fact that the size of $(x,s)$ is only $n+\kappa$. The assumption is the collapsing hash functions, used in the $\fSFS$ protocol; note that PRG is implied by the existence of collapsing hash functions.
\begin{thm}
    Protocol \ref{prtl:7} is complete.
\end{thm}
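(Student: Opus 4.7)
The plan is to prove completeness by chaining two facts: the completeness of the underlying $\fSFS$ protocol (Protocol \ref{prtl:6}, guaranteed by Definition \ref{defn:4.1} together with Theorem \ref{thm:6.2}'s set-up) and the pseudorandomness of $\tPRG$ (Definition \ref{defn:PRG}). Since the construction only glues these two primitives together, the proof reduces to a short hybrid argument with no heavy technical step.

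First I would apply the completeness of $\fSFS$ to the honest execution on the derived deterministic function $g(x,s):=f(x;\tPRG(s))$ over domain $\{0,1\}^{n+\kappa}$. This yields that the acceptance probability is at least $1-\fneg(\kappa)$, and that conditioned on acceptance, the joint client-server state on the registers holding $((x,s),\, g(x,s))$ is negligibly close (in trace distance) to the distribution obtained by sampling $(x,s)\leftarrow_r \{0,1\}^{n+\kappa}$ and setting the server register to the deterministic value $g(x,s)=f(x;\tPRG(s))$. Tracing out $s$ on the client side, the resulting output state $\rho_{out}$ on $(\bx^{(out)},\bQ^{(out)})$ is negligibly close to: $x$ uniform in $\{0,1\}^n$, paired with $f(x;\tPRG(s))$ for a uniformly random and independent $s\in \{0,1\}^\kappa$.

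Next I would compare $\rho_{out}$ to $\rho_{tar}$, whose second register instead holds $f(x;r)$ for a uniform $r\in\{0,1\}^L$. The only syntactic difference is $\tPRG(s)$ versus $r$, and the two source distributions of these coins are computationally indistinguishable by Definition \ref{defn:PRG}. Formally, given any efficient distinguisher $D$ between $\rho_{out}$ and $\rho_{tar}$, I would build a PRG distinguisher $D'$ as follows: on input $t\in \{0,1\}^L$ (either $\tPRG(s)$ or uniform), $D'$ samples $x\leftarrow_r\{0,1\}^n$, computes $y:=f(x;t)$ using the efficient description of $f$, and runs $D$ on $(x,y)$. Because $f$ is efficient and $x$ is sampled independently of $t$, $D'$ is efficient, and Definition \ref{defn:PRG} gives $|\Pr[D'(\tPRG(s))\to 0]-\Pr[D'(r)\to 0]|\leq \fneg(\kappa)$. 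Chaining this with the negligible trace-distance error from the $\fSFS$ step (which composes trivially through indistinguishability) yields $\rho_{out}\approx^{\tind}\rho_{tar}$.

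I do not expect any real obstacle. The only detail to be careful about is that the completeness error from $\fSFS$ is in trace distance while the PRG hop gives only computational indistinguishability; these combine cleanly because trace distance upper bounds computational distinguishing advantage, so both errors add and remain negligible in $\kappa$. This delivers both requirements of Definition \ref{defn:7.1}: honest acceptance with probability $1-\fneg(\kappa)$, and $\rho_{out}\approx^{\tind}\rho_{tar}$.
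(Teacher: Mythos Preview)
Your proposal is correct and takes essentially the same approach as the paper: the paper's proof is the single sentence ``By the completeness of $\fSFS$ (Protocol \ref{prtl:6}) and the property of PRG (Definition \ref{defn:PRG}),'' and you have simply unpacked those two ingredients into an explicit hybrid argument. One minor citation fix: the completeness of Protocol \ref{prtl:6} is stated right after its description (``The correctness and the efficiency are from the protocol description''), not as part of Theorem \ref{thm:6.2}, which concerns soundness.
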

\begin{proof}
    By the completeness of $\fSFS$ (Protocol \ref{prtl:6}) and the property of PRG (Definition \ref{defn:PRG}).
\end{proof}
The soundness is below.
\begin{thm}\label{thm:7.2}
    Protocol \ref{prtl:7} is $\epsilon$-sound.
\end{thm}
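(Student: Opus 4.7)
The plan is to reduce Theorem \ref{thm:7.2} to the soundness of $\fSFS$ already established in Theorem \ref{thm:6.2}, using the pseudorandomness of $\tPRG$ to bridge between the ideal functionalities for the deterministic function $g(x,s):=f(x;\tPRG(s))$ and for the randomized function $f$ itself. Observe that Protocol \ref{prtl:7} is literally an invocation of $\fSFS$ on $g$ followed by the client discarding the auxiliary seed $s$ from its output register. So any efficient adversary $\fAdv$ against Protocol \ref{prtl:7} is verbatim an efficient adversary against $\fSFS$ on $g$, and Theorem \ref{thm:6.2} supplies efficient simulators $(\fSim_0^g,\fSim_1^g)$ satisfying
$$\fSFS^{\fAdv}(\rho_0)\ \approx^{\tind}_{\epsilon}\ \fSim_1^g\bigl(\tSFSIdeal_g(\fSim_0^g(\rho_0))\bigr).$$
Post-composing the ``discard $s$'' step on both sides preserves the approximation and yields the analogous relation for $\pi^{\fAdv}(\rho_0)$.

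Next I would define the simulators for Set-up \ref{setup:3} by $\fSim_0:=\fSim_0^g$ and take $\fSim_1$ to act on $\bS,\bQ^{(out)}$ exactly as $\fSim_1^g$ does once $\tSFSIdeal_f$ has populated $\bx^{(out)}$ with $x$ and $\bQ^{(out)}$ with $f(x;r)$. The register layouts match up: both ``$\tSFSIdeal_g$ followed by discarding $s$'' and $\tSFSIdeal_f$ leave a classical $x\in\{0,1\}^n$ on the client side and a classical string in $\{0,1\}^m$ on $\bQ^{(out)}$, so $\fSim_1^g$ can be reused without modification. What remains is to show that the joint distribution on $(\bx^{(out)},\bQ^{(out)})$ produced by $\tSFSIdeal_g$ after discarding $s$ is computationally indistinguishable from the joint distribution produced by $\tSFSIdeal_f$, even under composition with $\fSim_1^g$ and any efficient distinguisher.

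This hybrid step is a direct $\tPRG$ reduction in the style of Definition \ref{defn:PRG}. Given a $\tPRG$ challenge $u\in\{0,1\}^L$, the reduction runs $\fSim_0^g$ on $\rho_0$ to obtain a bit $b$ and an adversarial state; if $b=1$ it sets $\bflag$ to $\ffail$, and if $b=0$ it samples $x\leftarrow_r\{0,1\}^n$, stores $x$ in $\bx^{(out)}$ and $f(x;u)$ in $\bQ^{(out)}$, then applies $\fSim_1^g$ and finally the distinguisher whose advantage we wish to bound. When $u=\tPRG(s)$ for a uniform $s$ this reproduces the hybrid obtained from Theorem \ref{thm:6.2}, and when $u$ is uniform it reproduces $\fSim_1\circ\tSFSIdeal_f\circ\fSim_0$ applied to $\rho_0$, so the two hybrids differ by at most a negligible function of $\kappa$. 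Chaining this negligible PRG gap with the $\epsilon$-error from Theorem \ref{thm:6.2} yields the $\epsilon$-indistinguishability required by Definition \ref{defn:7.2}.

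The one subtlety worth checking is that the PRG challenge really is sampled independently of everything produced by $\fSim_0^g$; otherwise the reduction could not embed its challenge in place of $\tPRG(s)$. This is harmless here because $\tSFSIdeal_g$ samples $(x,s)$ freshly upon receiving $b=0$ and reads nothing else from $\fSim_0^g$'s output, so the reduction may first commit to $\fSim_0^g$'s output (including the state on $\bS$ and $\bbE$) and only afterwards plug in $u$. Beyond this bookkeeping and a careful tracking of how the post-processing ``discard $s$'' interacts with the server-side simulator (which does not touch $\bx^{(out)}$), I do not anticipate any substantive obstacle.
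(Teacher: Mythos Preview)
Your proposal is correct and follows essentially the same approach as the paper: invoke Theorem \ref{thm:6.2} to get simulators for $\fSFS$ on $g$, post-compose with the client-side ``discard $s$'' operation, and then use the PRG property to pass from $\tSFSIdeal$ on $g$ (after discarding $s$) to $\tSFSIdeal$ on $f$. The paper's proof is more terse, but the structure and the two key hybrids are identical; your extra remarks about the PRG reduction and the independence of the PRG challenge from $\fSim_0^g$'s output are sound elaborations of steps the paper leaves implicit.
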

\begin{proof}
    For any efficient quantum adversary $\fAdv$, by Theorem \ref{thm:6.2} there exist efficient quantum operations $\fSim$ such that $\fSFS^{\fAdv}$ is $\epsilon$-indistinguishable to $\fDisgard(\bbs)\circ\fSim\circ\tSFSIdeal(g)$, where $\fDisgard(\bbs)$ denotes the client-side operation that disgards the $s$ value (in the last step of Protocol \ref{prtl:7}). Then by the property of PRG, $\fDisgard(\bbs)\circ\tSFSIdeal(g)$ is indistinguishable to $\tSFSIdeal(f)$ (that is, we replace the pseudorandom coins by the real random coins). Combining them completes the proof.
\end{proof}
 \subsection{Secure Function Value Preparation}\label{sec:7.2}
In this section we construct the secure function value preparation (SFVP) protocol, where the function input is provided by the client as its private input. The set-up is formalized below.
\begin{setup}\label{setup:4}
    The set-up for SFVP is similar to Set-up \ref{setup:1} with the following differences. Besides the inputs in Set-up \ref{setup:1}, the protocol also takes a private input $x\in \{0,1\}^n$ from the client. Then the client-side output register is only $\bflag$.\par
    The completeness and soundness are in Definition \ref{defn:7.3}, \ref{defn:7.4} below.
\end{setup}
\begin{defn}\label{defn:7.3}
We say a protocol under Set-up \ref{setup:4} is complete if when the server is honest, for any client-side input $x$, the output state of the protocol is negligibly close to the following state: the register $\bflag$ holds the value $\fpass$, and the register $\bQ^{(out)}$ holds the classical string $f(x)$.
\end{defn}
\begin{defn}\label{defn:7.4}
    To define the soundness, we first define $\tSFVPIdeal$ as follows.\par
    $\tSFVPIdeal$ receives a bit $b\in\{0,1\}$ from the server, then:
        \begin{itemize}\item If $b=0$, $\tSFVPIdeal$ sets $\bflag$ to be $\fpass$ and stores $f(x)$ in $\bQ^{(out)}$.
            \item If $b=1$, $\tSFVPIdeal$ sets $\bflag$ to be $\ffail$.
        \end{itemize}
    Then we define the soundness as follows. We say a protocol $\pi$ under Set-up \ref{setup:4} is $\epsilon$-sound if:\par
For any efficient quantum adversary $\fAdv$, there exist efficient quantum operations $\fSim=(\fSim_0,\fSim_1)$ such that for any polynomial-size inputs $1^n,1^m,1^{1/\epsilon},f,x$ and any initial state $\rho_0\in \tD(\cH_{\bS}\otimes\cH_{\bbE})$:
\begin{equation}\label{eq:13nq}\pi^{\fAdv}(\rho_0)\approx^{\tind}_{\epsilon}\underbrace{\fSim_1}_{\text{on }\bS,\bQ^{(out)}}(\tSFVPIdeal(\underbrace{\fSim_0}_{\text{on }\bS\text{ and outputs }b}(\rho_0)))\end{equation}
\end{defn}
The protocol is as follows. The protocol uses a decomposable garbling scheme as discussed in Section \ref{sec:3.3.5}.
\begin{mdframed}[backgroundcolor=black!10]
    \begin{prtl}[$\fSFVP$]\label{prtl:10}
        This protocol works under Set-up \ref{setup:4}. The protocol inputs are: public parameters $1^n,1^m,1^\kappa,1^{1/\epsilon}$, public function $f$, client-side input $x\in \{0,1\}^n$.\par
        The protocol makes use of a decomposable randomized encoding scheme $(\fGarbleC,\fGarbleI,\fDc)$.
        \begin{enumerate}
            \item Use $C_f$ to denote the circuit representation of $f$.\par
            The client and the server execute the SFS protocol (Protocol \ref{prtl:7})\footnote{Note that $g$ is a randomized function since $\fGarbleC$ could have internal random coins; see Section \ref{sec:3.3.5} for details.} for function $g$: $\{0,1\}^{n\kappa}\rightarrow \{0,1\}^{\fpoly(|C_f|,\kappa)}$:
            $$g(r):=\fGarbleC(C_f,r)$$
            with error tolerance $\epsilon$.\par
            If the server is honest the client should get a random $r\leftarrow_r\{0,1\}^{n\kappa}$ and the server gets a sample of $\fGarbleC(C_f,r)$.
            \item The client computes and sends $\fGarbleI(x,r)$ to the server.\par
            The honest server could compute $\fDc(\fGarbleC(C_f,r),\fGarbleI(x,r))$ to get $f(x)$.
        \end{enumerate}
    \end{prtl}
\end{mdframed}
The communication is succinct by the property of Protocol \ref{prtl:7} and the fact that $(x,r)$ is succinct. The assumption is the collapsing hash functions by the constructions of Protocol \ref{prtl:7} and Yao's garbled circuits. The completeness is from the completeness of Protocol \ref{prtl:7} and completeness of Yao's garbled circuits; note that since the final output of the protocol is determinisitic, the indistinguishability in the completeness definition of Protocol \ref{prtl:7} (Definition \ref{defn:7.1}) does not appear here.
\begin{thm}\label{thm:sfvps}
    Protocol \ref{prtl:10} is $\epsilon$-sound.
\end{thm}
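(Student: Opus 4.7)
The plan is a two-step hybrid argument that peels off the two cryptographic primitives used in Protocol \ref{prtl:10}: first the SFS soundness (Theorem \ref{thm:7.2}) to replace step 1 with a call to $\tSFSIdeal(g)$, and then the input-first soundness of Yao's garbled circuits (Theorem \ref{thm:yaogc}) to replace the pair $(\fGarbleC(C_f,r),\fGarbleI(x,r))$ by values simulated from $f(x)$ alone. The final simulators $(\fSim_0,\fSim_1)$ will be obtained by composing the SFS simulators with the garbling simulators $(\fSim_{\text{ie}},\fSim_{\text{ce}})$.

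More concretely, I would first apply Theorem \ref{thm:7.2} to the adversary $\fAdv$ restricted to step 1 of $\fSFVP$. This yields efficient operations $(\fSim_0^{SFS},\fSim_1^{SFS})$ such that the joint state at the end of step 1 is $\epsilon$-indistinguishable from the one obtained by running $\fSim_0^{SFS}$ on $\rho_0$ to extract a bit $b$, feeding $b$ to $\tSFSIdeal(g)$ (which, on $b=0$, samples random $r$ into the client register and puts $\fGarbleC(C_f,r)$ into the server register), and then running $\fSim_1^{SFS}$ on the server side. In the resulting hybrid, the client still executes step 2 honestly, so the joint server-side view is $(\fSim_1^{SFS}(\fGarbleC(C_f,r))$ together with the client's message $\fGarbleI(x,r))$, for uniformly random $r$.

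Next, I would invoke the input-first simulation soundness of the garbling scheme (Definition \ref{defn:3.14gc}, Theorem \ref{thm:yaogc}): the joint distribution $(r\leftarrow_r\{0,1\}^{n\kappa};\ \fGarbleI(x,r),\fGarbleC(C_f,r))$ is computationally indistinguishable from $(\tilde{x}\leftarrow\fSim_{\text{ie}}();\ \tilde{x},\fSim_{\text{ce}}(\tilde{x},f(x)))$. Since the rest of the server's view in the current hybrid is produced by an efficient quantum operation ($\fSim_1^{SFS}$ together with subsequent $\fAdv$ actions), this indistinguishability lifts to the whole joint state. I can therefore define the final $\fSim_1$ as follows: on the $b=0$ branch, receive $f(x)$ from $\tSFVPIdeal$, sample $\tilde{x}\leftarrow\fSim_{\text{ie}}()$, place $\fSim_{\text{ce}}(\tilde{x},f(x))$ in the server-side SFS output register, run $\fSim_1^{SFS}$, and then feed $\tilde{x}$ to the server as the step-2 message; the $b=1$ branch is handled trivially by setting $\bflag=\ffail$. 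The simulator $\fSim_0$ is simply $\fSim_0^{SFS}$.

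The main obstacle I expect is bookkeeping at the interface between the two hybrids rather than any deep new idea. Specifically, the garbling indistinguishability is a statement about the joint distribution of the input encoding and the circuit encoding, so I must be careful to treat $(\fGarbleC(C_f,r),\fGarbleI(x,r))$ as a single object passed through the efficient operation ``apply $\fSim_1^{SFS}$ and then let $\fAdv$ process the step-2 message'' before invoking the indistinguishability; applying the indistinguishability to the two halves separately would not give the right statement. Once the hybrids are set up in this joint form, the triangle inequality combines the $\epsilon$ error from Theorem \ref{thm:7.2} with the negligible computational error from Theorem \ref{thm:yaogc} to yield the desired $\epsilon$-indistinguishability in Definition \ref{defn:7.4}.
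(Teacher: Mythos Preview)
Your proposal is correct and follows essentially the same two-hybrid structure as the paper: first replace step~1 by $\tSFSIdeal(g)$ via Theorem~\ref{thm:7.2}, then replace the real garbling pair by a simulated one produced from $f(x)$. The only difference is that the paper invokes the \emph{standard} garbling soundness (Definition~\ref{defn:3.gc}, equation~\eqref{eq:14}) with a single simulator $\fSim_{\text{GC}}(C_f(x))$, whereas you invoke the stronger input-first simulation (Definition~\ref{defn:3.14gc}, Theorem~\ref{thm:yaogc}) with $(\fSim_{\text{ie}},\fSim_{\text{ce}})$; since in $\fSFVP$ the simulator receives $f(x)$ from $\tSFVPIdeal$ before it must output either encoding, the input-first decomposition is not actually needed here (it only becomes necessary in the later 2PC construction), but your use of it is still valid. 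One small point: on the $b=1$ branch the simulator must still produce the server-side state (the paper applies $\fAdv_2\circ\fAdv_1\circ\fSim_{\text{SFS},1}$), not merely set $\bflag=\ffail$, since the distinguisher sees the server's registers.
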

\begin{proof}
    For any efficient quantum adversary $\fAdv$, we would like to construct efficient quantum operations $\fSim$ that satisfies Definition \ref{defn:7.4}.\par
     First by the soundness of the $\fSFS$ protocol (Theorem \ref{thm:7.2}) we know that, there exist efficient quantum operations $\fSim_{\text{SFS}}=(\fSim_{\text{SFS,}0},\fSim_{\text{SFS,}1})$ working on the server side such that  for any polynomial-size inputs and any initial state $\rho_0$,
     \begin{equation}\label{eq:37}
        \fSFVP^{\fAdv}(\rho_0)\approx_{\epsilon}^{\tind}\fHybrid(\rho_0)
     \end{equation}
     where the operation $\fHybrid$, operated on $\rho_0$, is defined as follows:
    \begin{enumerate}
        \item Apply $\fSim_{\text{SFS,}0}$ to $\rho_0$, which generates $b\in\{0,1\}$ as the server-side input to $\tSFSIdeal$.
        \item Execute $\tSFSIdeal$ for function $g$. (Recall the definition of $\tSFSIdeal$ in Definition \ref{defn:7.2}.)
        \item Apply $\fSim_{\text{SFS,}1}$. Then apply $\fAdv_1$, which denotes the adversary's operation after the first step.
        \item The client computes and sends $\fGarbleI(x,r)$ to the server.
        \item Apply $\fAdv_2$, which denotes the adversary's operation after the second step.
    \end{enumerate}
    Then by the soundness of Yao's garbled circuits (Definition \ref{defn:3.gc}) there exists an efficient quantum operation $\fSim_{\text{GC}}$ such that \begin{equation}\label{eq:38}\fSim_{\text{GC}}(C_f(x))\approx^{\tind}\tDist\begin{bmatrix}r\leftarrow_r\{0,1\}^{n\kappa}\\ (\fGarbleC(C_f,r),\fGarbleI(x,r))\end{bmatrix}\end{equation}
    Now we are ready to construct the simulator $\fSim=(\fSim_0,\fSim_1)$ as follows:
    \begin{itemize}
    \item $\fSim_0$:
    \begin{enumerate}
        \item  Apply $\fSim_{\text{SFS,}0}$ to $\rho_0$, which generates $b\in\{0,1\}$ as the server-side input to $\tSFVPIdeal$.
    \end{enumerate}
    \item $\fSim_1$:
    \begin{itemize}
        \item If $b=0$:
        \begin{enumerate}
            \item Use the $C_f(x)$ that it receives from $\tSFVPIdeal$ to prepare $\fSim_{\text{GC}}(C_f(x))$; call its outputs ``simulated circuit encoding'' and ``simulated input encoding''.
            \item  Apply $\fAdv_1\circ\fSim_{\text{SFS,}1}$, but use the simulated circuit encoding to replace the $\fGarbleC(C_f,r)$ (outputted by the end of step 1 of $\fSFVP$ protocol or from $\tSFSIdeal$ in $\fHybrid$).
            \item Apply $\fAdv_2$, but use the simulated input encoding to replace the $\fGarbleI(x,r)$ (transmitted in step 2 of $\fSFVP$ protocol).
        \end{enumerate}
        \item If $b=1$: simply apply $\fAdv_2\circ\fAdv_1\circ\fSim_{\text{SFS,}1}$ on the server-side state.
    \end{itemize}
\end{itemize}
Comparing $\fHybrid$ and $\fSim_1\circ\tSFVPIdeal\circ\fSim_0$ and use \eqref{eq:38} we could see
\begin{equation}\label{eq:39}\fHybrid(\rho_0)\approx^{\tind}\fSim_1(\tSFVPIdeal(\fSim_0(\rho_0)))\end{equation}
Combining \eqref{eq:37}\eqref{eq:39} we have
$$\fSFVP^{\fAdv}(\rho_0)\approx^{\tind}_{\epsilon}\fSim_1(\tSFVPIdeal(\fSim_0(\rho_0)))$$
which completes the proof.
\end{proof}
\subsection{Secure Two-party Computations Protocol}\label{sec:7.3}
In this section we give our secure two-party computations (2PC) protocol in succinct communication. As discussed in Section \ref{sec:2.7.3}, this is achieved in two steps:\par
\begin{itemize}
    \item In Section \ref{sec:7.3.1} we construct the secure function value preparation (SFVP) protocol with certain security against malicious clients.
    \item In Section \ref{sec:7.3.2} we construct the protocol for secure two-party computations.
\end{itemize}
\subsubsection{Secure function value preparation with soundness against malicious clients}\label{sec:7.3.1}
In this section we construct the SFVP protocol with certain security against malicious clients. As discussed in Section \ref{sec:2.7.3}, we consider a special set-up where the client-side input $x$ has been committed to the server. The set-up is formalized below.
\begin{setup}\label{setup:11}
    This set-up is similar to Set-up \ref{setup:4} (the set-up for ordinary SFVP) with the following differences.\par
For the client-side inputs, besides the function input $x$, there is an additional input $r\in \{0,1\}^{n\kappa}$ used for the opening of the commitment; for the public inputs, there are additionally $pp\in \{0,1\}^{3n\kappa}$ and the commitment $com=\fCommit(x,r,pp)$ (see Section \ref{sec:3.3.comm} for definitions of $\fCommit$; note that $\fCommit$ is statistically binding); here $r,pp$ should be considered to be randomly sampled (in both the completeness and soundness definitions below).\par
For the output registers, the server also holds a flag register $\bflag^{(S)}$ (in addition to the output register $\bQ^{(out)}$). The client-side flag register is denoted as $\bflag^{(C)}$.\par
Either the client or the server could be malicious. For modeling the initial states in the malicious server setting, suppose the overall server-side registers are denoted by $\bS$ and the environment is denoted by $\bbE$; for modeling the initial states in the malicious client setting, suppose the overall client-side registers are denoted by $\bbC$ and the environment is denoted by $\bbE$.\par
The completeness, soundness against malicious server and soundness against malicious client are defined in Definition \ref{defn:7.5}, \ref{defn:7.6} and \ref{defn:7.7} below.\par
We still want the property that the communication complexity is $\fpoly(n,\kappa,1/\epsilon)$.
\end{setup}
\begin{defn}\label{defn:7.5}
    We say a protocol under Set-up \ref{setup:11} is complete if when both the client and the server are honest, the output state of the protocol is negligibly close to the following state: $\bflag^{(C)},\bflag^{(S)}$ both hold the value $\fpass$, $\bQ^{(out)}$ holds the classical string $f(x)$.
\end{defn}
The soundness against malicious servers is as follows.
\begin{defn}\label{defn:7.6}
    To define the soundness against malicious servers, we first define $\tSFVPTwoIdealS$ as follows.\par
    $\tSFVPTwoIdealS$ receives a bit $b\in\{0,1\}$ from the server, then:
    \begin{itemize}\item If $b=0$, $\tSFVPTwoIdealS$ sets $\bflag^{(C)}$ to be $\fpass$ and stores $f(x)$ in $\bQ^{(out)}$.
        \item If $b=1$, $\tSFVPTwoIdealS$ sets $\bflag^{(C)}$ to be $\ffail$.
    \end{itemize}
Then we define the soundness as follows. We say a protocol $\pi$ under Set-up \ref{setup:11} is $\epsilon$-sound against malicious servers if:\par
For any efficient quantum adversary $\fAdv$ playing the role of the server, there exist efficient quantum operations $\fSim=(\fSim_0,\fSim_1)$ such that for any polynomial-size inputs that satisfy the conditions in the set-up and any initial state $\rho_0\in \tD(\cH_{\bS}\otimes\cH_{\bbE})$:
\begin{equation}\label{eq:13nqn}\pi^{\fAdv}(\rho_0)\approx^{\tind}_{\epsilon}\underbrace{\fSim_1}_{\text{on }\bS,\bQ^{(out)}}(\tSFVPTwoIdealS(\underbrace{\fSim_0}_{\text{on }\bS\text{ and outputs }b}(\rho_0)))\end{equation}
\end{defn}
The soundness against malicious client is as follows.
\begin{defn}\label{defn:7.7}
    To define the soundness against malicious clients, we first define $\tSFVPTwoIdealC$ as follows.\par
    $\tSFVPTwoIdealC$ receives a bit $b\in\{0,1\}$ from the client, then:
    \begin{itemize}\item If $b=0$, $\tSFVPTwoIdealC$ sets $\bflag^{(S)}$ to be $\fpass$ and stores $f(x)$ in $\bQ^{(out)}$.
        \item If $b=1$, $\tSFVPTwoIdealC$ sets $\bflag^{(S)}$ to be $\ffail$.
    \end{itemize}
Then we define the soundness as follows. We say a protocol $\pi$ under Set-up \ref{setup:11} is $\epsilon$-sound against malicious clients if:\par
For any efficient quantum adversary $\fAdv$ playing the role of the client, there exist efficient quantum operations $\fSim=(\fSim_0,\fSim_1)$ such that for any polynomial-size inputs that satisfy the conditions in the set-up and any initial state $\rho_0\in \tD(\cH_{\bbC}\otimes\cH_{\bbE})$:
\begin{equation}\label{eq:13nqnp}\pi^{\fAdv}(\rho_0)\approx^{\tind}_{\epsilon}\underbrace{\fSim_1}_{\text{on }\bbC}(\tSFVPTwoIdealC(\underbrace{\fSim_0}_{\text{on }\bbC\text{ and outputs }b}(\rho_0)))\end{equation}
\end{defn}
The protocol is as follows. The protocol uses a family of collapsing hash functions and a succinct-ZK-AoK protocol (see Section \ref{sec:3.3.2}, \ref{sec:3.3.4}).
\begin{mdframed}[backgroundcolor=black!10]
    \begin{prtl}[$\fSFVPTwo$]\label{prtl:sfvp2}
        This protocol works under Set-up \ref{setup:11}. The protocol inputs are: public parameters $1^n,1^m,1^\kappa,1^{1/\epsilon}$, public inputs $f$, $pp$, $com$, client-side inputs $x$, $r$, and it should be satisfied that $com=\fCommit(x,r,pp)$.\par
        The protocol makes use of a family of collapsing hash functions $\fHash$ and a succinct-ZK-AoK protocol $\fsuccZKAoK$ that is\footnote{Here  $\fpoly_{\text{AoK}}(\delta)=O(\delta)$ as discussed in Theorem \ref{thm:existzk}. We use $\fpoly_{\text{AoK}}$ to allow us to refer to this polynomial explicitly.} $(\delta,\fpoly_{\text{AoK}}(\delta))$ for any $\delta>0$.
        \begin{enumerate}
            \item Both parties execute $\fSFVP$ (Protocol \ref{prtl:10}) for function $f$, client-side input $x$ and error tolerance $\epsilon$.\par
            Denote the server's outcome as $y$. If both parties are honest there should be $y=f(x)$.
            \item The server samples $h\leftarrow\fHash(1^m,1^\kappa)$, computes $h(y)=c$ and sends $(h,c)$ to the client.
            \item Define $L=16/(\epsilon^2\cdot\fpoly^{-1}_{\text{AoK}}(\frac{\epsilon}{4}))$.\par
            For each $i\in [L]$:
            \begin{enumerate}
                \item Using $\fsuccZKAoK$, the client proves the following statement to the server:
                \begin{equation}\label{eq:43}\exists (x,r):h(f(x))=c\land \fCommit(x,r,pp)=com\end{equation}
            \end{enumerate}
            If all the tests pass successfully the server stores $y$ as its output.
        \end{enumerate}
    \end{prtl}
\end{mdframed}
The completeness and the succinct communication property are from the protocol description. The primitives used here are all implied by collapsing hash functions.\par
Below we state the soundness against malicious servers and malicious clients.
\begin{thm}\label{thm:7.4}
    Protocol \ref{prtl:sfvp2} is $\epsilon$-sound against malcious servers.
\end{thm}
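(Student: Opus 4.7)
The plan is a two-stage hybrid argument that peels off the two cryptographic layers of $\fSFVPTwo$ in order. Fix an efficient quantum adversary $\fAdv$ playing the role of the server. I take $\fSim_0$ to be the composition: run the SFVP malicious-server simulator $\fSim_{\text{SFVP,0}}$ guaranteed by Theorem \ref{thm:sfvps} on $\rho_0$, and output its bit as the bit $b$ that is forwarded to $\tSFVPTwoIdealS$. The first hybrid $H_1$ replaces the real step-1 call to $\fSFVP$ by $\fSim_{\text{SFVP,1}} \circ \tSFVPIdeal \circ \fSim_{\text{SFVP,0}}$; Theorem \ref{thm:sfvps} gives that $H_1$ is $\epsilon$-indistinguishable from the real execution $\fSFVPTwo^{\fAdv}(\rho_0)$, and on the $b=0$ branch $f(x)$ is now delivered into $\bQ^{(out)}$ by the ideal functionality rather than computed by the honest client from $(x,r)$.

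The second hybrid $H_2$ operates inside the $b=0$ branch: for each of the $L = \fpoly(1/\epsilon,\kappa)$ invocations of $\fsuccZKAoK$ in step 3, I would replace the honest client prover by the zero-knowledge simulator of Definition \ref{defn:3.zk}, one round at a time. Because $\fSim_1$ has $f(x)$ in hand and receives $(h,c)$ from $\fAdv$ in step 2, it can check whether $c = h(f(x))$, i.e., whether $(x,r)$ is a valid witness for the statement $\exists\,(x,r):\, h(f(x))=c \,\land\, \fCommit(x,r,pp)=com$; when so, the ZK simulator yields a transcript negligibly-indistinguishable from the honest prover's messages. A standard hybrid over the $L$ rounds accumulates only $L\cdot \fneg(\kappa) = \fneg(\kappa)$ extra error. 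After $H_2$, the construction of $\fSim_1$ no longer depends on $(x,r)$, using only $f(x)$, $com$, $pp$, $h$, and $c$, all of which are legitimately available through $\tSFVPTwoIdealS$ and $\fAdv$'s own messages; this is the desired simulator, with total error $\epsilon + \fneg(\kappa)$.

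The main obstacle will be the complementary case $c \neq h(f(x))$, in which the zero-knowledge guarantee does not formally apply because the honest client has no valid witness. By the statistical binding of $\fCommit$ the unique pair consistent with $com$ is the actual $(x,r)$, so the statement is genuinely false; therefore, by the argument-of-knowledge soundness of $\fsuccZKAoK$ from Theorem \ref{thm:existzk}, the server's verifier accepts any single round with probability at most $\fpoly_{\text{AoK}}(\delta) + \fneg(\kappa)$ against any prover strategy, including the honest client with an invalid witness. I would resolve this branch by combining this AoK soundness with the computational hiding of $\fCommit$: the simulator, in the $c \neq h(f(x))$ case, runs the honest prover using a freshly-sampled dummy commitment-and-witness pair in place of $(x,r)$; hiding of $\fCommit$ bounds the distinguishing advantage introduced by swapping the witness, while AoK soundness ensures that both the real and the dummy transcripts are overwhelmingly rejected by the verifier, so the server-side accept/reject behavior matches. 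Aggregating the $\epsilon$ error from $H_1$ with the negligible contributions from $H_2$ and from this false-statement case yields the claimed $\epsilon$-indistinguishability.
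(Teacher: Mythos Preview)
Your two-stage hybrid --- first invoking the soundness of $\fSFVP$ (Theorem~\ref{thm:sfvps}) to replace step~1 by the ideal functionality plus simulator, then invoking the zero-knowledge property of $\fsuccZKAoK$ round by round to replace the honest client-prover in step~3 by the ZK simulator --- is exactly the paper's argument; the paper's $\fHybrid$ is your $H_2$, and it concludes by translating this directly into a simulator against $\tSFVPTwoIdealS$.

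Where you diverge is in the branch $c\neq h(f(x))$, which the paper does not isolate. You are right that Definition~\ref{defn:3.zk} only applies when the prover holds a valid witness, but your proposed fix does not work. Argument-of-knowledge (or plain argument) soundness bounds the acceptance probability of the \emph{honest} verifier on a false instance; in $\fSFVPTwo$ the verifier is the malicious server, who is free to ``accept'' unconditionally, so nothing forces rejection. More importantly, what must be simulated is the full prover transcript the server observes, not an accept/reject bit; computational hiding of $\fCommit$ is a statement about the commitment string $com$, and says nothing about the indistinguishability of honest-prover messages produced from the real invalid witness $(x,r)$ versus a dummy $(x',r')$ on the \emph{same} fixed instance --- a reduction to hiding would need the opening of $com$ to run the prover, which the hiding adversary does not receive. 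The clean route is simpler: since $\fSim_1$ holds $f(x)$ and receives $(h,c)$, it can test whether $c=h(f(x))$; if one takes the honest prover (or the client in $\fSFVPTwo$) to abort when its witness is invalid --- a harmless convention consistent with the paper's implicit reading --- then the simulator aborts identically in this branch and the ZK simulator is only ever invoked on true instances.
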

\begin{thm}\label{thm:7.5}
    Protocol \ref{prtl:sfvp2} is $\epsilon$-sound against malicious clients.
\end{thm}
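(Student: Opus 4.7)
The plan is to construct $\fSim=(\fSim_0,\fSim_1)$ that internally runs $\fAdv$ on register $\bbC$ while emulating the honest server, extracting a witness via the $\fsuccZKAoK$ extractor at a randomly chosen round, and forwarding a bit $b$ to $\tSFVPTwoIdealC$. Concretely, $\fSim_0$ will (i) emulate the honest server in the $\fSFVP$ subprotocol interacting with $\fAdv$, obtaining some value $y$; (ii) send $(h,c=h(y))$ to $\fAdv$; (iii) pick $i^{*}\leftarrow_r[L]$ uniformly at random; (iv) run the honest AoK verifier for all rounds $i\neq i^{*}$, and invoke the extractor $\fExt$ from Theorem~\ref{thm:existzk} on round $i^{*}$ to extract $(x^{*},r^{*})$. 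If any round rejects, set $b=1$; otherwise, if $\fCommit(x^{*},r^{*},pp)=com$ and $h(f(x^{*}))=c$, set $b=0$, else $b=1$. The operation $\fSim_1$ simply glues the server-side output from $\tSFVPTwoIdealC$ back into the simulated transcript.

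I would then prove $\epsilon$-indistinguishability through a chain of hybrids. Hybrid~$H_0$ is the real execution $\pi^{\fAdv}(\rho_0)$; hybrid~$H_1$ replaces each of the $L$ AoK proofs by their ZK simulator (indistinguishable by a standard round-by-round hybrid using Definition~\ref{defn:3.zk}); hybrid~$H_2$ additionally executes $\fExt$ at the random round $i^{*}$, which is non-destructive in the post-quantum sense of \cite{CMSZ21} and so does not alter the distribution of the observable output; hybrid~$H_3$ replaces the server's output $y$ by $f(x)$ whenever all rounds pass; hybrid~$H_4$ is $\fSim_1\circ\tSFVPTwoIdealC\circ\fSim_0$. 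The key steps are (a) the cut-and-choose analysis showing that extraction at $H_2\to H_3$ succeeds with overwhelming conditional probability, and (b) the semantic argument showing that when extraction succeeds, $y=f(x)$.

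For the cut-and-choose, let $p$ be the probability that the server accepts in $H_1$ and write $p=\prod_{i=1}^{L}p_i$ where $p_i$ is the conditional passing probability at round $i$. If $p<\epsilon/4$ the ideal-world server aborts with probability $\geq 1-\epsilon/4$ and the two sides trivially agree up to $\epsilon/4$. Otherwise $\sum_i(1-p_i)\leq -\log p\leq \log(4/\epsilon)$, so by Markov at least a $(1-\epsilon/4)$-fraction of indices $i$ satisfy $1-p_i\leq \fpoly_{\text{AoK}}^{-1}(\epsilon/4)$ (using the chosen $L$). For such an $i^{*}$, Definition~\ref{defn:3.10} yields an extracted witness satisfying the relation \eqref{eq:43} with probability $\geq 1-\epsilon/4-\fneg(\kappa)$ conditioned on rounds $1,\dots,i^{*}-1$ passing. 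Combined with a random choice of $i^{*}$, the overall probability that $\fSim_0$ obtains a valid $(x^{*},r^{*})$ whenever the server accepts is $\geq 1-O(\epsilon)-\fneg(\kappa)$. The statistical binding of $\fCommit$ (with the random $pp$ sampled honestly, as in the set-up) forces $x^{*}=x$ up to $\fneg(\kappa)$, and the collision-resistance of $\fHash$ (which is implied by collapsing) forces $f(x^{*})=y$ up to $\fneg(\kappa)$; hence replacing $y$ by $f(x)$ in $\bQ^{(out)}$ incurs only $\fneg(\kappa)$ error in $H_2\to H_3$, and $H_3=H_4$ by construction.

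The hard part I expect is step (a), namely rigorously composing the post-quantum AoK extractor of \cite{CMSZ21} with the ZK simulation used in the other rounds, while showing that running the extractor on one round does not disturb the client's view of the other rounds beyond the claimed $\epsilon$-error. This is where the non-destructive (state-preserving) nature of the $(\delta,\fpoly_{\text{AoK}}(\delta))$-AoK is essential; the cut-and-choose average over $i^{*}$ is exactly what lets us tolerate an extractor whose success degrades with the prover's accept probability, since for a random $i^{*}$ the conditional accept probability is almost always close to $1$. All remaining arguments (binding, collision-resistance, ZK) are standard and contribute only negligible error.
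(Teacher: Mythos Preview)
The core ingredients are right (cut-and-choose over the $L$ rounds, AoK extraction, statistical binding, collision resistance), but the hybrid $H_0\to H_1$ is a genuine error. Zero-knowledge (Definition~\ref{defn:3.zk}) applies when the \emph{verifier} is malicious and the \emph{prover} is honest with a valid witness. In step~3 of $\fSFVPTwo$ the client is the prover and the server is the verifier; Theorem~\ref{thm:7.5} concerns a malicious client, i.e., a malicious prover against an honest verifier, so ZK gives you nothing here (the adversarial prover may hold no witness at all, and Definition~\ref{defn:3.zk} is quantified over honest provers holding $w$ with $R(x,w)=1$). The paper uses ZK only in Theorem~\ref{thm:7.4} (malicious server $=$ malicious verifier), not in this theorem.

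Even if you drop $H_1$, your simulator runs the extractor at round $i^{*}$ and then continues the remaining rounds, which requires the \emph{state-preserving} property of the CMSZ extractor --- a stronger guarantee than Definition~\ref{defn:3.10} as stated. The paper sidesteps both issues by a simpler route: its simulator never extracts; it just emulates the full protocol (honest server included), picks a random $i$, and outputs as $b$ the conjunction of the pass/fail outcomes of the rounds $j\neq i$. The extractor appears only in the \emph{analysis}: for a random $i\notin S$, on the space $\Pi_{\fpass}^{(\leq i-1)}$ one applies AoK soundness, then statistical binding forces the extracted $\tilde x$ to equal $x$, and collision resistance of $h$ forces the server's value $y$ to equal $f(x)$; hence replacing $y$ by the ideal output $f(x)$ incurs at most $\epsilon$ error. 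No ZK and no non-destructive extraction are needed.
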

The proof of soundness against malicious servers is relatively easier.
\begin{proof}[Proof of Theorem \ref{thm:7.4}]
For any efficient quantum adversary $\fAdv$ playing the role of the malicious server, apply the soundness of $\fSFVP$ protocol (Theorem \ref{thm:sfvps}) and ZK-property of the $\fsuccZKAoK$ protocol (Definition \ref{defn:3.zk})\footnote{Note that here the client proves statements to the server so the role of client and server are swapped compared to Definition \ref{defn:3.zk}.} we know there exist efficient quantum operations $\fSim_{\text{SFVP}}=(\fSim_{\text{SFVP,}0},\fSim_{\text{SFVP,}1})$, $(\fSim_{\text{ZK,}i})_{i\in [L]}$ working on the server side such that for any polynomial-size inputs that satisfy the set-up and any initial states $\rho_0\in \tD(\cH_{\bS}\otimes\cH_{\bbE})$, 
$$\fSFVPTwo^{\fAdv}(\rho_0)\approx_{\epsilon}^{\tind}\fHybrid(\rho_0)$$
where $\fHybrid$ operated on $\rho_0$ is defined as follows:
\begin{enumerate}
    \item Apply the server-side operation $\fSim_{\text{SFVP,}0}$ to $\rho_0$, which generates $b\in \{0,1\}$ as the server-side input to $\tSFVPIdeal$.
    \item Execute $\tSFVPIdeal$.
    \item Apply the server-side operation $\fSim_{\text{SFVP,}1}$. Then apply $\fAdv_1$, which denotes the adversary's operation after the first step.
    \item The adversary sends a message to the client.
    \item For each $i\in [L]$, apply the server-side operation $\fSim_{\text{ZK,}i}$.
\end{enumerate}
This could be translated to a simulator interacting with $\tSFVPTwoIdealS$ directly, which completes the proof.
\end{proof}
The proof of soundness against malicious clients requires some technical works. We will analyze the third step and construct a simulator that works in a cut-and-choose manner to find a round with high passing probability, which implies that the client indeed holds a $\tilde{x}$ that satisfies \eqref{eq:43}. Then this together with the statistically-binding of $\fCommit$ and collision-resistance of $h$ implies that the server either indeed holds $f(x)$ or catches the client cheating, which completes the proof.
\begin{proof}[Proof of Theorem \ref{thm:7.5}]
    For any efficient adversary $\fAdv$ playing the role of the client and any initial state, let's use $\sigma_i\in \tD(\cH_{\bbC}\otimes\cH_{\bbE}\otimes\cH_{\bS})$ to denote the joint states by the end of the $i$-th round of the third step of $\fSFVPTwo$. Then use $\Pi_{\fpass}^{(\leq i)}(\sigma_i)$ to denote the projection of $\sigma_i$ onto the space that all the rounds so far have passed. Use $\Pi_{\fpass}^{(i)}$ to denote the projection onto the space that the $i$-th round passes; $\Pi_{\ffail}^{(i)}$ is defined similarly.\par
    Use $S$ to denote the set of round index $i\in [L]$ that satisfies $\tr(\Pi_{\ffail}^{(i)}(\Pi_{\fpass}^{(\leq i-1)}(\sigma_{i-1})))\geq \frac{\epsilon}{4}\fpoly^{-1}(\frac{\epsilon}{4})$. Then we have $|S|/L\leq \frac{\epsilon}{4}$. For each $i\not\in S$, if $\tr(\Pi_{\fpass}^{(\leq i-1)}(\sigma_{i-1}))\geq \frac{\epsilon}{4}$, we could apply the $(\fpoly^{-1}(\frac{\epsilon}{4}),\frac{\epsilon}{4})$-soundness of the PoK; if $\tr(\Pi_{\fpass}^{(\leq i-1)}(\sigma_{i-1}))\leq \frac{\epsilon}{4}$ the trace of any state $\cE(\Pi_{\fpass}^{(\leq i-1)}(\sigma_{i-1}))$ is bounded directly. Combining both cases into one expression we have: there exists an efficient quantum operation $\fExt_i$ working on the client side that extracts to register $\bbC_{\ttemp}$ such that
    \begin{equation}\label{eq:44}\Pi^{\bbC_{\ttemp}}_{\tilde{x},\tilde{r}:h(f(\tilde{x}))=c\land\fCommit(\tilde{x},\tilde{r},pp)=com}\fExt_i(\Pi_{\fpass}^{(\leq i-1)}(\sigma_{i-1}))\approx_{\max\{\frac{\epsilon}{4},2\times\frac{\epsilon}{4}\}}\fExt_i(\Pi_{\fpass}^{(\leq i-1)}(\sigma_{i-1})).\end{equation}
(Note that we write $\max\{\frac{\epsilon}{4},2\times\frac{\epsilon}{4}\}$ to suggest the fact that we are combining two cases in one expression.)\par
Note that \eqref{eq:44} holds for $i\not\in S$. Recall that $|S|/L\leq \frac{\epsilon}{4}$. Thus we get the following result, for a randomly selected $i\in [L]$:
\begin{equation}\label{eq:442}\tDen\begin{bmatrix}i\leftarrow_r[L]\\\Pi^{\bbC_{\ttemp}}_{\tilde{x},\tilde{r}:h(f(\tilde{x}))=c\land\fCommit(\tilde{x},\tilde{r},pp)=com}(\fExt_i(\Pi_{\fpass}^{(\leq i-1)}(\sigma_{i-1})))\end{bmatrix}\approx_{\epsilon}\tDen\begin{bmatrix}i\leftarrow_r[L]\\\fExt_i(\Pi_{\fpass}^{(\leq i-1)}(\sigma_{i-1}))\end{bmatrix}\end{equation}
where we use $\tDen$ to denote the density operator of the last row when the variables are sampled according to the first row.\par
Now by the statistically binding of $\fCommit$ we know:
\begin{equation}\label{eq:443}\tDen\begin{bmatrix}i\leftarrow_r[L]\\\Pi^{\bbC_{\ttemp}}_{\tilde{x},\tilde{r}:h(f(\tilde{x}))=c\land\fCommit(\tilde{x},\tilde{r},pp)=com}\fExt_i(\Pi_{\fpass}^{(\leq i-1)}(\sigma_{i-1}))\end{bmatrix}\approx_{\fneg(\kappa)}\tDen\begin{bmatrix}i\leftarrow_r[L]\\\Pi^{\bbC_{\ttemp}}_{\tilde{x}:\tilde{x}=x\land h(f(\tilde{x}))=c}(\fExt_i(\Pi_{\fpass}^{(\leq i-1)}(\sigma_{i-1})))\end{bmatrix}\end{equation}
(Note that $\tilde{x}$ is the value of $\bbC_{\ttemp}$ and $x$ is the actual input of the protocol.)\par
Note that intuitively on the right hand side of \eqref{eq:443} the client already holds $x$ that satisfies $h(f(x))=c$; by the collision-resistance of $h$ the value $y$ in the server-side should satisfy $y=f(x)$. Formally, we have
\begin{equation}\label{eq:444}\tDen\begin{bmatrix}i\leftarrow_r[L]\\\Pi^{\bbC_{\ttemp}}_{\tilde{x}:\tilde{x}=x\land h(f(\tilde{x}))=c}(\fExt_i(\Pi_{\fpass}^{(\leq i-1)}(\sigma_{i-1})))\end{bmatrix}\approx_{\fneg(\kappa)}\tDen\begin{bmatrix}i\leftarrow_r[L]\\\Pi^{\bQ^{(out)}}_{y:y=f(x)}\Pi^{\bbC_{\ttemp}}_{\tilde{x}:\tilde{x}=x}(\fExt_i(\Pi_{\fpass}^{(\leq i-1)}(\sigma_{i-1})))\end{bmatrix}\end{equation}
Combining \eqref{eq:442}\eqref{eq:443}\eqref{eq:444} we get a good control on the server-side information in the passing space.\par
Thus the overall simulator $\fSim$ working on the client side could go as follows:
\begin{enumerate}
    \item Simulate the first two steps of the $\fSFVPTwo$ protocol on its own memory.
    \item Pick a random $i\leftarrow [L]$.
    \item For each $i^\prime<i$, simulate these rounds on its own memory and record the passing/failing outcome.
    \item For the $i$-th round, simulate but do not record the passing/failing outcome.
    \item For each $i^\prime>i$, simulate these rounds on its own memory and record the passing/failing outcome.
    \item Send the joint passing/failing outcome as the input $b$ in $\tSFVPTwoIdealC$.
\end{enumerate}
\end{proof}
\subsubsection{Construction of the secure two-party computations protocol}\label{sec:7.3.2}
In this section we construct the two-party computations (2PC) protocol in succinct communication, which proves Theorem \ref{thm:1.3}. The set-up is as follows.
\begin{setup}\label{setup:2pcsc}
    The set-up for 2PC with approximate security is as follows. The set-up is similar to  Set-up \ref{setup:twopc} with the following differences: the protocol takes an additional approximation error tolerance parameter $1^{1/\epsilon}$.\par
    The soundness is formalized as the approximate variant of Definition \ref{defn:3.7}.\par
    We want the communication complexity to be $\fpoly(n_A+n_B,\kappa,1/\epsilon)$.
\end{setup}
We would like to design a protocol for this problem in succinct communication.\par
The protocol is as follows. The protocol makes use of several primitives formalized in Section \ref{sec:3.3}.
\begin{mdframed}[backgroundcolor=black!10]
    \begin{prtl}\label{prtl:2pc}
        The protocol works under Set-up \ref{setup:2pcsc}. The protocol inputs are: public parameters $1^{n_A},1^{m_A},1^{n_B},1^{m_B},1^\kappa,1^{1/\epsilon}$, public function $f_A,f_B$, Alice-side input $x^{(A)}\in \{0,1\}^{n_A}$, Bob-side input $x^{(B)}\in \{0,1\}^{n_B}$.\par
        The protocol uses the following primitive:
        \begin{itemize}\item The oneway-based-2PC protocol discussed in Section \ref{sec:3.3.3}.\item The statistically-binding commitment scheme $\fCommit$ as formalized in Section \ref{sec:3.3.comm}.\item The decomposable randomized encoding scheme $(\fGarbleC,\fGarbleI,\fDc)$ as formalized in Section \ref{sec:3.3.5}. Recall that the inputs to a deterministic description of $\fGarbleC$ are the circuit, the shared random coins used by $\fGarbleC$ and $\fGarbleI$, and the internal random coins of $\fGarbleC$.\item A pseudorandom generator $\tPRG(1^\kappa,1^{\fpoly((n_A+n_B),\kappa)})$.\end{itemize} 
        \begin{enumerate}
            \item Alice and Bob perform 2PC for the following functions:
            \begin{enumerate}
                \item (Sampling public randomness for commitments):\par Sample $pp^{(A)},pp^{(B)}\leftarrow_r\{0,1\}^{3(n_A+n_B+1)\kappa^2}$ as the public outputs.
                \item (Sample sender-side randomness for commitments):\par Sample $r^{(A)}_{com}\leftarrow_r\{0,1\}^{(n_A+n_B+1)\kappa^2}$ as the Bob-side output; sample $r^{(B)}_{com}\leftarrow_r\{0,1\}^{(n_A+n_B+1)\kappa^2}$ as the Alice-side output.
                \item (Sample randomness for garbling scheme; $r$ denotes the shared randomness used in $\fGarbleC$ and $\fGarbleI$ and $s$ is used to generate the internal randomness of $\fGarbleC$):\par Sample $r^{(A)}\leftarrow_r\{0,1\}^{(n_A+n_B)\kappa}$, $s^{(A)}\leftarrow_r\{0,1\}^{\kappa}$ as the Bob-side outputs; sample  $r^{(A)}\leftarrow_r\{0,1\}^{(n_A+n_B)\kappa}$, $s^{(B)}\leftarrow_r\{0,1\}^{\kappa}$ as the Alice-side outputs.
                \item (Commitments):\par Compute $com^{(A)}=\fCommit((r^{(A)},s^{(A)}),r^{(A)}_{com},pp^{(A)})$ as the public output; compute $com^{(B)}=\fCommit((r^{(B)},s^{(B)}),r^{(B)}_{com},pp^{(B)})$ as the public output.
                \item (Input encoding):\par Compute $\fGarbleI((x^{(A)},x^{(B)}),r^{(A)})$ as the Alice-side output; compute $\fGarbleI((x^{(A)},x^{(B)}),r^{(B)})$ as the Bob-side output.
            \end{enumerate}
            Here the superscript $(A)$ is used to denote the intermediate variables used for evaluating $f_A$ and superscript $(B)$ is used to denote the intermediate variables used for evaluating $f_B$; who receives these variables could depend on the protocol design.
            \item Note that the commitments are as required in Set-up \ref{setup:11}. Bob uses the $\fSFVPTwo$ protocol to send $g_A(r^{(A)},s^{(A)})$ to Alice with soundness error tolerance $\epsilon/2$, where:
            $$g_A(r^{(A)},s^{(A)}):=\fGarbleC(C_{f_A},r^{(A)};\tPRG(s^{(A)}))$$
            where $C_{f_A}$ is the circuit description of $f_A$.\par
            Alice could compute
            $$\fDc(\fGarbleC(C_{f_A},r^{(A)};\tPRG(s^{(A)})),\fGarbleI((x^{(A)},x^{(B)}),r^{(A)}))$$
            to get $f_A(x^{(A)},x^{(B)})$.
            \item Alice uses the $\fSFVPTwo$ protocol to send $g_B(r^{(B)},s^{(B)})$ to Bob with soundness error tolerance $\epsilon/2$, where:
            $$g_B(r^{(B)},s^{(B)}):=\fGarbleC(C_{f_B},r^{(B)};\tPRG(s^{(B)}))$$
            where $C_{f_B}$ is the circuit description of $f_B$.\par
            Bob could compute
            $$\fDc(\fGarbleC(C_{f_B},r^{(B)};\tPRG(s^{(B)})),\fGarbleI((x^{(A)},x^{(B)}),r^{(B)}))$$
            to get $f_B(x^{(A)},x^{(B)})$.
        \end{enumerate}
    \end{prtl}
\end{mdframed}
The communication is succinct by the succinctness of $\fSFVPTwo$ protocol and the fact that the two-party computations in step 1 are all performed on inputs and functions with size $\fpoly(n_A+n_B,\kappa)$. The assumptions and primitives used in this protocol are all implied by the existence of the collapsing hash functions. The completeness is by the protocol description.\par 
The soundness is stated below.
\begin{thm}
    Protocol \ref{prtl:2pc} is $\epsilon$-sound.
\end{thm}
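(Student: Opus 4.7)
The plan is a hybrid argument that transforms the real execution of Protocol \ref{prtl:2pc} into a simulated execution that interacts only with the ideal functionality $\tTwoPCIdeal$ from Definition \ref{defn:3.7}. Without loss of generality I treat the case where Bob is malicious; the malicious-Alice case is symmetric. The superscripts $(A)$ and $(B)$ in what follows refer to the two parallel garbling instances used to deliver outputs to Alice and to Bob respectively.

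First I would write down the following sequence of hybrids, justifying each step by invoking exactly one primitive from Section \ref{sec:3.3}. \textbf{Hybrid 0} is the real execution against an arbitrary efficient adversary $\fAdv$ playing Bob. \textbf{Hybrid 1} replaces the five-task two-party computation in Step 1 of Protocol \ref{prtl:2pc} by its ideal functionality, using Theorem \ref{thm:2pcowf} applied to the composed functionality; this is sound against a malicious Bob and yields a simulator $\fSim^{(1)}$ that outputs Bob's view of Step 1 together with the honest-Alice outputs $(pp^{(A)}, pp^{(B)}, r^{(B)}_{com}, r^{(A)}, s^{(A)}, com^{(A)}, com^{(B)}, \fGarbleI((x^{(A)},x^{(B)}), r^{(A)}))$. \textbf{Hybrid 2} replaces the $\fSFVPTwo$ invocation in Step 3 (where Alice is the client, Bob the server, with committed input $(r^{(B)}, s^{(B)})$) by the ideal functionality $\tSFVPTwoIdealS$, using Theorem \ref{thm:7.4}; Bob now receives $g_B(r^{(B)}, s^{(B)}) = \fGarbleC(C_{f_B}, r^{(B)}; \tPRG(s^{(B)}))$ from an oracle. \textbf{Hybrid 3} replaces the $\fSFVPTwo$ invocation in Step 2 (where Bob is the client) by $\tSFVPTwoIdealC$, using Theorem \ref{thm:7.5}; here the ideal functionality extracts Bob's committed $(r^{(A)}, s^{(A)})$ from the setup phase, so the value $g_A(r^{(A)}, s^{(A)})$ delivered to Alice is well-defined.

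Next come the hybrids that remove dependence on the honest Alice's input $x^{(A)}$. \textbf{Hybrid 4} replaces $\tPRG(s^{(B)})$ by true randomness, invoking Definition \ref{defn:PRG} and using that $s^{(B)}$ is hidden inside $com^{(B)}$ (the commitment was a public output in Step 1, with $r^{(B)}_{com}$ held by Alice, so by the computationally-hiding property of $\fCommit$ the randomness $s^{(B)}$ is indistinguishable from independently sampled). \textbf{Hybrid 5} applies the input-first simulation from Theorem \ref{thm:yaogc} to the $(B)$-instance of garbling: replace $(\fGarbleI((x^{(A)},x^{(B)}), r^{(B)}), \fGarbleC(C_{f_B}, r^{(B)}))$ by $(\tilde{x}^{(B)}, \fSim_{\text{ce}}(\tilde{x}^{(B)}, f_B(x^{(A)}, x^{(B)})))$ where $\tilde{x}^{(B)} \leftarrow \fSim_{\text{ie}}()$. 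Crucially the simulated input encoding $\tilde{x}^{(B)}$ can be produced \emph{before} $x^{(A)}$ is used, and the circuit encoding is produced later using only the output $f_B(x^{(A)}, x^{(B)})$. \textbf{Hybrid 6} uses computational hiding of $\fCommit$ to replace $com^{(B)}$ by a commitment to $0$'s (the opening $r^{(B)}_{com}$ of this commitment is no longer needed, since $\tSFVPTwoIdealS$ only needs the committed value to define the $\fSFVPTwo$ outputs for an honest client, and now we feed the oracle $(r^{(B)}, s^{(B)})$ directly from memory rather than re-deriving them from the commitment).

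In the final step I read off the simulator $\fSim$: it uses $\fSim^{(1)}$ to simulate Step 1 while using $\fSim_{\text{ie}}()$ to produce Bob's simulated input encoding $\tilde{x}^{(B)}$, sends Bob's bits into $\tTwoPCIdeal^{(B)}$ at the appropriate early-abort points (the first bit extracted from $\fSim^{(1)}$, the second bit from the soundness of $\fSFVPTwo$ in Step 3), receives $f_B(x^{(A)}, x^{(B)})$, and completes Bob's view using $\fSim_{\text{ce}}(\tilde{x}^{(B)}, f_B(x^{(A)}, x^{(B)}))$ together with the Step-2 $\fSFVPTwo$ simulator. Summing the distinguishing advantages across the six hybrids (four are negligible, Hybrids 2 and 3 each contribute at most $\epsilon/2$) gives the $\epsilon$-bound required by Definition \ref{defn:3.7}.

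\textbf{The main obstacle} I expect is Hybrid 5, the application of input-first simulation. The subtle point is that in Protocol \ref{prtl:2pc} the input encoding is delivered in Step 1 while the circuit encoding is delivered only later in Step 3, and in between the adversary sees commitments to the garbling randomness and executes $\fSFVPTwo$ in both directions; this is precisely the ``adaptive'' situation flagged in Section \ref{sec:2.7.3} and motivates why we use a garbling scheme satisfying Definition \ref{defn:3.14gc}. Carefully routing the simulator so that $\fSim_{\text{ie}}$ is invoked before any dependence on $x^{(A)}$ enters Bob's view, while the commitment $com^{(B)}$ still binds to a value that the later $\fSFVPTwo$-client soundness can extract, is the technical crux; swapping the order of Hybrids 4--6 would break either the hiding or the binding side. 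A secondary care point is the symmetric case of malicious Alice: here the roles of Steps 2 and 3 swap, and the input-first simulation must be applied to the $(A)$-instance instead, but the hybrid structure is identical.
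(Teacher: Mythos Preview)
Your approach is essentially the paper's: idealize the three protocol steps, strip out the commitment and PRG, then invoke the input-first garbling simulator from Definition~\ref{defn:3.14gc} to build $\fSim$. The decomposition, the error accounting ($\epsilon/2+\epsilon/2$ from the two $\fSFVPTwo$ calls, negligible from everything else), and the identification of the input-first simulation as the crux all match.

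There is, however, a genuine ordering error in Hybrids~4--6. You place the commitment-hiding step (Hybrid~6) \emph{after} the PRG step (Hybrid~4) and the garbling step (Hybrid~5). This does not go through: to invoke PRG security you need the seed $s^{(B)}$ to appear nowhere else in Bob's view, but at Hybrid~4 it is still inside $com^{(B)}=\fCommit((r^{(B)},s^{(B)}),r^{(B)}_{com},pp^{(B)})$, so the PRG reduction cannot be written (the reduction receives a challenge $z$ in place of $\tPRG(s^{(B)})$ but has no way to form $com^{(B)}$ without knowing $s^{(B)}$). The same obstruction hits Hybrid~5: garbling security requires $r^{(B)}$ to be fresh randomness not leaked elsewhere, yet $r^{(B)}$ is also sitting inside $com^{(B)}$. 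The fix is simply to run your Hybrid~6 first (immediately after Hybrid~2, once $\tSFVPTwoIdealS$ has replaced Step~3 so that $com^{(B)}$ is no longer used for anything), then Hybrid~4, then Hybrid~5. The paper's proof does exactly this, collapsing the hiding and PRG swaps into a single transition from its $\fHybrid_1$ to $\fHybrid_2$.

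Your stated worry about binding is a red herring: you only ever need to replace $com^{(B)}$ (Alice's commitment), never $com^{(A)}$. Binding of $com^{(A)}$ is what powers the malicious-client extraction in your Hybrid~3, and that commitment stays untouched. Once Step~3 is idealized, $com^{(B)}$ serves no further purpose and can be swapped to a dummy without harming any later hybrid.
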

\begin{proof}
    We prove the soundness against malicious Bob and the proof of the soundness against malicious Alice is similar.\par
    For any efficient quantum adversary $\fAdv$ playing the role of Bob, we would like to construct efficient quantum operations $\fSim$ playing the role of Bob that interacts with $\tTwoPCIdeal^{(B)}$ and simulates the real protocol.\par
    First apply the soundness of 2PC (the first step of Protocol \ref{prtl:2pc}), $\fSFVPTwo$ against malicious clients (the second step of Protocol \ref{prtl:2pc}), and $\fSFVPTwo$ against malicious servers (the third step of Protocol \ref{prtl:2pc}) we know that, there exist simulators (which interacts with the corresponding ideal functionality) $\fSim_{\text{step 1}}$, $\fSim_{\text{step 2}}$, $\fSim_{\text{step 3}}$ working on the Bob side such that for any polynomial-size inputs and any initial state $\rho_0\in \tD(\cH_{\bS^{(B)}}\otimes\cH_{\bbE})$, the output state of the protocol is $\epsilon$-indistinguishable to $\fHybrid_1(\rho_0)$, where $\fHybrid_1$ is as follows:
    \begin{enumerate}
        \item Apply $(\fSim_{\text{step 1}}\circ \tTwoPCIdeal^{(B)}_{\text{step 1}})$. Here $\tTwoPCIdeal^{(B)}_{\text{step 1}}$ is the ideal functionality of the 2PC protocol used in the first step of Protocol \ref{prtl:2pc} (note that this is different from $\tTwoPCIdeal$, which denotes the ideal functionality of the whole protocol).
        \item Apply $(\fSim_{\text{step 2}}\circ \tSFVPTwoIdealC(g_A))$.
        \item Apply $(\fSim_{\text{step 3}}\circ \tSFVPTwoIdealS(g_B))$.
    \end{enumerate}
    Further apply the definition of $\tPRG$ and the hiding property of $\fCommit$ we know $\fHybrid_1(\rho_0)$ is indistinguishable to $\fHybrid_2(\rho_0)$ where $\fHybrid_2$ is as follows:
    \begin{enumerate}
        \item  Apply $(\fSim_{\text{step 1}}\circ \tTwoPCIdeal^{\prime(B)}_{\text{step 1}})$, where $\tTwoPCIdeal^{\prime(B)}_{\text{step 1}}$ is similar to $\tTwoPCIdeal^{(B)}_{\text{step 1}}$ with the following difference: the commitment to $r^{(B)}$ (that is, $com^{(B)}$) is replaced by a commitment to a random string.
        \item Apply $(\fSim_{\text{step 2}}\circ \tSFVPTwoIdealC(g_A^\prime))$, where $g^\prime_A$ is a randomized function as follows: $g^\prime_A(r^{(A)})=\fGarbleC(C_{f_A},r^{(A)})$. Note the definition of $\tSFVPTwoIdealC$ (Definition \ref{defn:7.7}) could be adapted to randomized function in the natural way.
        \item Apply $(\fSim_{\text{step 3}}\circ \tSFVPTwoIdealS(g_B^\prime))$, where $g^\prime_B$ is a randomized function as follows: $g^\prime_B(r^{(B)})=\fGarbleC(C_{f_B},r^{(B)})$. Note the definition of $\tSFVPTwoIdealS$ (Definition \ref{defn:7.6}) could be adapted to randomized function in the natural way.
    \end{enumerate}
    Then what remains to be analyzed is the garbled circuits. By the soundness of the garbled circuits scheme (Definition \ref{defn:3.14gc}, Theorem \ref{thm:yaogc}) there exist efficient simulator $\fSim_{\text{ie}},\fSim_{\text{ce}}$ such that
    \begin{equation}
        \tDist\begin{bmatrix}
            \tilde{x}\leftarrow\fSim_{\text{ie}}()\\ (\tilde{x},\fSim_{\text{ce}}(\tilde{x},C_{f_B}(x^{(A)},x^{(B)})))
        \end{bmatrix}\approx^{\tind}\tDist\begin{bmatrix}r^{(B)}\leftarrow_r\{0,1\}^{(n_A+n_B)\kappa}\\
            \hat{x}:=\fGarbleI((x^{(A)},x^{(B)}),r^{(B)})\\ (\hat{x},\fGarbleC(C_{f_B},r^{(B)}))\end{bmatrix}
    \end{equation}
    Now we are ready to construct the overall simulator $\fSim$ that interacts with $\tTwoPCIdeal^{(B)}$ as follows.
    \begin{enumerate}
        \item $\fSim$ first works as $\fSim_{\text{step 1}}$ as in the first step of $\fHybrid_2$; if $\fSim_{\text{step 1}}$ early-aborts then $\fSim$ early aborts. Otherwise $\fSim$ could extract the input $x^{(B)}$ from $\fSim_{\text{step 1}}$'s preparation of inputs to $\tTwoPCIdeal^{\prime(B)}_{\text{step 1}}$.\par
        For the simulation of the output string from $\tTwoPCIdeal^{\prime(B)}_{\text{step 1}}$, $\fSim$ uses $\tilde{x}\leftarrow\fSim_{\text{ie}}()$ to simulate the input encoding $\fGarbleI((x^{(A)},x^{(B)}),r^{(B)})$.
        \item $\fSim$ then works as $\fSim_{\text{step 2}}$ and stores $b_{\text{step 2}}\in \{0,1\}$ which determines whether the server should abort in $\tSFVPTwoIdealC$.
        \item $\fSim$ then works as $\fSim_{\text{step 3}}$ and aborts if $\fSim_{\text{step 3}}$ aborts; otherwise it queries $\tTwoPCIdeal^{(B)}$ with $x^{(B)}$ and gets $f_B(x^{(A)},x^{(B)})$. Then it computes $\fSim_{\text{ce}}(\tilde{x},f_B(x^{(A)},x^{(B)}))$ to simulate the outputs of $\tSFVPTwoIdealS$.
        \item Finally $\fSim$ determines whether Alice should receives its outputs $f_A(x^{(A)},x^{(B)})$ based on $b_{\text{step 2}}$ (that is, use $b_{\text{step 2}}$ as the input to $\tTwoPCIdeal^{(B)}$; recall in Definition \ref{defn:3.7} $\tTwoPCIdeal^{(B)}$ receives a bit from Bob that determines whether Alice should receive its output in the final step).
    \end{enumerate}
    Comparing each case we know $\fHybrid_2(\rho_0)$ is indistinguishable to $\fSim(\rho_0)$.\par
    Combining this chain of hybrids completes the proof.
\end{proof}
\subsection{Classical-channel Secure Two-party Computations Protocol}\label{sec:7.4}
In this section we construct our classical-channel 2PC protocol in succinct communication.
\begin{itemize}
    \item In Section \ref{sec:7.4.1} we give a result on classical-channel 2PC (without considering the succinctness of communication). We construct this protocol as a preparation for the final construction.
    \item In Section \ref{sec:7.4.2} we give our protocol for classical-channel 2PC in succinct communication.
\end{itemize}
\subsubsection{On classical-channel 2PC without the succinct communication requirement}\label{sec:7.4.1}
We first construct a classical-channel 2PC protocol with approximate security assuming NTCF. This is by applying the RSPV-for-BB84 states (see \cite{BGKPV23,zhang24} and Theorem \ref{thm:bb84rspv}) to the oneway-based-2PC in \cite{BCKM20}.
\begin{mdframed}[backgroundcolor=black!10]
    \begin{prtl}\label{prtl:cc2pcpre}
    This protocol works under Set-up \ref{setup:2pcsc} but we do not put restriction on the communication complexity.\par
    Suppose the 2PC protocol in \cite{BCKM20} contains $L$ rounds of quantum communication in the form of ``One party sends random $N$-qubits BB84 states to the other party''. Our protocol replaces all these quantum communication by RSPV protocols for $N$-qubits BB84 states with error tolerance $\epsilon/L$.
    \end{prtl}
\end{mdframed}
The assumption is the existence of NTCF; the completeness and efficiency are from the protocol description.
\begin{thm}\label{thm:7.7}
    Protocol \ref{prtl:cc2pcpre} is $\epsilon$-sound.
\end{thm}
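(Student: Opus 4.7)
The plan is to prove soundness by a standard hybrid argument that reduces to the already-proven soundness of the underlying protocol of \cite{BCKM20}. I would define $L+1$ hybrid protocols $H_0,H_1,\ldots,H_L$ where $H_0$ is Protocol \ref{prtl:cc2pcpre} and $H_L$ is the original quantum-channel protocol from \cite{BCKM20}. In $H_i$, the first $i$ BB84-transmission rounds are replaced back by direct quantum channel, while the remaining $L-i$ rounds still go through the RSPV-for-BB84 protocol (Theorem \ref{thm:bb84rspv}) with error tolerance $\epsilon/L$. We analyze the soundness against malicious Bob and malicious Alice separately, and show each transition $H_{i-1}\rightarrow H_i$ contributes at most $\epsilon/L$ to the indistinguishability error, so that chaining yields the total error $\epsilon$.

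For the malicious Bob case, the key ingredient is the soundness of the RSPV-for-BB84 protocol. For any efficient quantum adversary playing the role of Bob in $H_{i-1}$, the soundness of RSPV (Definition \ref{defn:3.14}) guarantees an efficient simulator $(\fSim_0,\fSim_1)$ such that the real RSPV interaction in the $i$-th round is $(\epsilon/L)$-indistinguishable from the RSPV ideal functionality followed by $\fSim_1$, where the ideal functionality either delivers the correct BB84 states to Bob or aborts with $\bflag=\ffail$. Composing this simulator with Bob's operations in the remaining part of $H_i$, we obtain an adversary against $H_i$; the early-abort outcome of RSPV is absorbed naturally by the early-abort option already present in $\tTwoPCIdeal^{(B)}$, since an RSPV abort is equivalent to Bob refusing to continue the protocol. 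Hence a simulator against $H_i$ (which exists by the soundness of \cite{BCKM20}) yields a simulator against $H_{i-1}$ with additional error $\epsilon/L$.

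For the malicious Alice case, the crucial observation is that in RSPV the client (here Alice, who in the original protocol is the sender of BB84 states) does not take any private input from the server. Therefore a malicious Alice attacking $H_{i-1}$ can be transformed into a malicious Alice attacking $H_i$ that behaves identically from Bob's view: she internally simulates both sides of the RSPV protocol on her own registers, uses the RSPV client-side algorithm honestly (as her own local computation) to produce the joint output, and then forwards the server-side register through the RSPV interaction that she controls on the client side. This reduction is exact (it incurs no additional error), so soundness of $H_i$ against malicious Alice, which holds by \cite{BCKM20}, transfers directly to $H_{i-1}$. Combining the two cases across all $L$ hybrids gives the $\epsilon$-soundness of $H_0=$ Protocol \ref{prtl:cc2pcpre}.

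The main technical obstacle I anticipate is verifying carefully that the additional early-abort outcome introduced by RSPV does not interact badly with the specific structure of the \cite{BCKM20} protocol, and that the composition of simulators across the $L$ rounds yields a well-defined efficient simulator interacting with $\tTwoPCIdeal$. In particular, one must check that the sequential composition of approximate simulators preserves the approximate simulation-based security definition, so that the $L$ contributions of $\epsilon/L$ combine additively; this follows the same template as the amplification analysis used for Protocol \ref{prtl:amptempl}, but must be adapted to the two-party setting with both malicious-party cases handled simultaneously.
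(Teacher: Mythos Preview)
Your proposal is essentially the same approach as the paper: a hybrid argument replacing the RSPV calls by direct BB84 transmission one round at a time, using RSPV soundness when the malicious party is the RSPV server (incurring $\epsilon/L$ per step) and the ``no private input'' observation when the malicious party is the RSPV client (incurring no error). The paper states this as a single inductive step rather than writing out $L+1$ hybrids, but the content is identical.

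One small correction to your case split: you organize by ``malicious Bob'' versus ``malicious Alice'' and implicitly assume Alice is always the BB84 sender (RSPV client). In \cite{BCKM20} the direction of BB84 transmission can vary across the $L$ rounds, so when analyzing a fixed malicious party you must handle \emph{both} sub-cases depending on whether that party is the sender or receiver in round $i$. The paper's proof makes this explicit: for the malicious party $B$, if $B$ is the receiver use RSPV soundness, if $B$ is the sender use the no-private-input simulation. Your two arguments are exactly these two sub-cases, just mislabeled; once you reorganize the case analysis accordingly, the proof is complete and matches the paper.
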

\begin{proof}
    First the original protocol in \cite{BCKM20} is $\fneg(\kappa)$-sound.\par
    Then suppose we are working on a protocol that is $\epsilon_0$-sound against malicious party B and we are going to replace a quantum communication step in the form of ``One party sends random $N$-qubits BB84 states to the other party'' by an RSPV for this state family with error tolerance $\epsilon/L$. Then:
    \begin{itemize}
        \item If party B is the receiver: by the soundness of RSPV the new protocol is $\epsilon_0+\epsilon/L$-sound against malicious party B.
        \item If party B is the sender: note that RSPV protocol does not take private inputs; B's any malicious operation in the new protocol could always be simulated in the original protocol as follows: the party B prepares the output states of the RSPV protocol on its own and sends A's part to A. Since the original protocol is $\epsilon_0$-sound against malicious B the new protocol is also $\epsilon_0$-sound against malicious B.
    \end{itemize}
\end{proof}
\subsubsection{Classical-channel 2PC in succinct communication}\label{sec:7.4.2}
Finally we are ready to construct our classical-channel 2PC protocol in succinct communication. This is achieved by using Protocol \ref{prtl:cc2pcpre} to replace the first step of Protocol \ref{prtl:2pc} and use the RSPV for $\ket{0}\ket{\tilde{x}_0}+\ket{1}\ket{\tilde{x}_1}$ to replace the quantum communication in the second and third step of Protocol \ref{prtl:2pc}.
\begin{mdframed}[backgroundcolor=black!10]
    \begin{prtl}\label{prtl:cc2pc}
    This protocol works under Set-up \ref{setup:2pcsc}.\par
    \begin{enumerate}
        \item Use Protocol \ref{prtl:cc2pcpre} to perform the 2PC in the step 1 of Protocol \ref{prtl:2pc}, with error tolerance $\frac{\epsilon}{3}$.
        \item Consider the step 2 of Protocol \ref{prtl:2pc} executed with error tolerance $\frac{\epsilon}{6}$. Suppose in this step there are $L$ rounds of quantum communication in the form of ``Bob samples keys $\tilde{x}_0,\tilde{x}_1$ and sends the states $\frac{1}{\sqrt{2}}(\ket{0}\ket{\tilde{x}_0}+\ket{1}\ket{\tilde{x}_1})$'' to Alice. (Explicitly, $\tilde{x}_1,\tilde{x}_1$ are $x_0||r_0^{(in)}||r_0^{(out)}$ and $x_1||r_1^{(in)}||r_1^{(out)}$ in the first step of Protocol \ref{prtl:3}.) Use the RSPV for this type of states (Theorem \ref{thm:kpsrspv}) to replace this step; the error tolerance for each such step is chosen to be $\frac{\epsilon}{6L}$.
        \item Same as the step 2 above with the following differences: we compile the step 3 of Protocol \ref{prtl:2pc} and the roles of Alice and Bob are swapped.
    \end{enumerate}
    \end{prtl}
\end{mdframed}
The assumptions are the existence of NTCF and collapsing hash functions; the completeness, efficiency and succinct communication property are from the protocol description and the properties of Protocol \ref{prtl:2pc}.\par
The protocol is $\epsilon$-sound based on a similar proof to the proof of Theorem \ref{thm:7.7}.
\appendix
\section{On Succinct Arguments for QMA}\label{sec:a}
We note that our work on SFS also leads to a new approach for constructing succinct arguments for QMA. In this section we roughly describe our approach. We do not formally work on this part so we put it in the appendix.
\subsection{Background}
Let's first review the notion of succinct argument in Section \ref{sec:3.3.4}. Suppose the server claims that ``I know a $w$ such that $f(w)=0$'', where $f$ is a public function. How efficiently could the client verify the server's claim? When $f$ is an efficient classical function, Kilian's protocol \cite{Kilian92} allows us to achieve this task with succinct client-side computation, that is, independent of both the function evaluation time and the witness size (size of $w$).\par
 So what if the server claims a QMA statement? That is, the server claims that it holds a quantum state $w$ such that $f(w)=0$, where $f$ denotes a (family of) quantum circuit. The \emph{succinct arguments for QMA} problem seeks for a protocol for this problem where the client side computation is succinct. A closely related problem is the classical verification of quantum computations (CVQC) problem, which seeks for a verification protocol where the client-side computation and the communication should be completely classical.\par
  The first and perhaps the most famous construction for CVQC is given by Mahadev \cite{MahadevVerification}; later a series of new constructions are proposed \cite{CCT,GVRSP,cvqcinlt,NZ23}. For the construction of succinct arguments for QMA, there are currently several existing constructions \cite{CCT,Faisal23,BKLMM22,MNZ24,GTNV24} and two of them are from standard assumptions \cite{MNZ24,GTNV24}. These succinct arguments constructions are based on (or highly related to) CVQC constructions and inherit the desirable property that the client is completely classical, which are thus called classical succinct arguments for QMA. In terms of the assumptions, \cite{GTNV24} is based on NTCF with distributional adaptive hardcore bit property, and \cite{MNZ24} is based on FHE and collapsing hash functions. In more detail:
\begin{itemize}
    \item \cite{GTNV24} constructs their succinct arguments for QMA protocol via a protocol called classical commitments to quantum states. This is a strengthening of the measurement protocol in Mahadev's CVQC work \cite{MahadevVerification}. Their results are based on an assumption called NTCF with distributional strong adaptive hardcore bit property (defined in their work).
    \item \cite{MNZ24} constructs their succinct arguments for QMA protocol based on the KLVY compiler \cite{KLVY22}, which is based on homomorphic encryption and nonlocal games. \cite{NZ23} constructs a CVQC protocol following the approach of \cite{KLVY22}; \cite{MNZ24} could be seen as a further step in this approach. \cite{MNZ24} assume a mild version of quantum homomorphic encryption \cite{MNZ24,Mahadev2017ClassicalHE,FHELWE} and collapsing hash functions \cite{Unruh16}.
\end{itemize}
The Mahadev's CVQC construction \cite{MahadevVerification} is under an assumption called the extended NTCF with the adaptive hardcore bit property. This assumption seems incomparable to both the assumptions in \cite{GTNV24} (although both could be seen as variants of NTCF) and the assumptions in \cite{MNZ24}.
        
\subsection{Our Approach For Classical Succinct Arguments for QMA}\label{sec:a.8}
Here we propose a different approach for constructing classical succinct arguments for QMA. We assume the existence of extended NTCF with the adaptive hardcore bit property and collapsing hash functions, which is different from previous works \cite{BKLMM22,MNZ24,GTNV24} and is closer to Mahadev's CVQC protocol. Our approach is to use our results and existing results to compile Mahadev's protocol. The compilation is relatively simple given our results and existing results.\par
We first give a brief review for the structure of Mahadev's protocol. The Mahadev's protocol roughly goes as follows:\par
\begin{enumerate}
    \item For each $i\in [L]$, the client samples $pk_i$ according to certain distribution and sends it to the server.
    \item The server commits to its witness state using these keys and sends back the commitment (which is a classical string).
    \item For each $i\in [L]$, the client randomly samples a challenge bit $c_i\in \{0,1\}$ and sends it to the server.
    \item The server does certain measurements on the committed state depending on the challenge bits and sends back the results; the client checks the results with certain relations.
\end{enumerate}
Then the compilation is as follows:
\begin{enumerate}
    \item The client could use the SFS protocol for sampling and sending $(pk_i)_{i\in [L]}$ in the first step. A PRG is used for generating the randomness.\par
    The calling to the $\fSFS$ protocol introduces quantum communication; we could further use an RSPV \cite{zhang24} to compile this step to a classical-channel protocol.
    \item The communication in steps 2-4 could be made succinct using the compiler in \cite{BKLMM22}.
\end{enumerate}
Note that existing works often consider the first step as the main obstacle for getting a succinct arguments protocol for QMA. \cite{BKLMM22,MNZ24} In our proposal this is relatively simple given our SFS protocol and existing RSPV protocol.
    \bibliographystyle{plain}
\bibliography{bib}

	\end{document}